\newtheorem{theorem}{Theorem}[section]
\newtheorem{lemma}[theorem]{Lemma}
\newtheorem{corollary}[theorem]{Corollary}
\newtheorem{proposition}[theorem]{Proposition}
\newtheorem{definition}[theorem]{Definition}
\newtheorem{observation}[theorem]{Observation}
\newtheorem{remark}[theorem]{Remark}
\newtheorem{problem}{Problem}
\DeclarePairedDelimiter\rbra{\lparen}{\rparen}
\DeclarePairedDelimiter\sbra{\lbrack}{\rbrack}
\DeclarePairedDelimiter\cbra{\{}{\}}
\DeclarePairedDelimiter\abs{\lvert}{\rvert}
\DeclarePairedDelimiter\ceil{\lceil}{\rceil}
\DeclarePairedDelimiter\floor{\lfloor}{\rfloor}
\newcommand{\substr}[2] {\sbra*{#1 .. #2}}
\DeclareMathOperator*{\argmax}{arg\,max}
\newcommand{\set}[2] {\left\{\, #1 \colon #2 \,\right\}}
\newcommand{\polylog} {\operatorname{polylog}}
\newcommand{\footremember}[2]{%
    \footnote{#2}
    \newcounter{#1}
    \setcounter{#1}{\value{footnote}}%
}
\begin{document}

    \title{Double-Ended Palindromic Trees in Linear Time\footnote{This paper is the full version of \cite{WYZ25}.}}
        \author{
            Qisheng Wang \footremember{1}{Qisheng Wang is with the School of Informatics, University of Edinburgh, Edinburgh, United Kingdom (e-mail: \href{mailto:QishengWang1994@gmail.com}{\nolinkurl{QishengWang1994@gmail.com}}). Part of the work of Qisheng Wang was done when the author was with the Graduate School of Mathematics, Nagoya University, Japan.}
            \and Ming Yang \footremember{2}{Ming Yang is with China Telecom Cloud Technology Co., Ltd., Beijing, China.}
            \and Xinrui Zhu \footremember{3}{Xinrui Zhu is with XVERSE Technology Inc., Shenzhen, China.}
        }
        \date{}
        \maketitle

    \begin{abstract}
        The palindromic tree (a.k.a.\ eertree) is a data structure that provides access to all palindromic substrings of a string. 
        In this paper, we propose a dynamic version of eertree, called double-ended eertree, which supports online operations on the stored string, including double-ended queue operations, counting distinct palindromic substrings, and finding the longest palindromic prefix/suffix.
        At the heart of our construction, we identify a new class of substring occurrences, called surfaces, that are palindromic substring occurrences that are neither prefixes nor suffixes of any other palindromic substring occurrences, which is of independent interest. 
        Surfaces characterize the link structure of all palindromic substrings in the eertree, thereby allowing a linear-time implementation of double-ended eertrees through a linear-time maintenance of surfaces.
    \end{abstract}

    \textbf{Keywords: Palindromes, eertrees, double-ended data structures, string algorithms.}

    \newpage

    \tableofcontents
    \newpage

    \section{Introduction}
    
    \paragraph{Palindromes.}
    
    Palindromes, which read the same backward as forward, are an interesting and important concept in many fields, e.g., linguistics \cite{Ber73}, mathematics \cite{HS98,BHS04}, physics \cite{HKS95}, biology \cite{Lea94,LLR08,GHD+03}, and computer science \cite{Gus97}. Especially, combinatorial properties of palindrome complexity, i.e., the number of palindromes, of finite and infinite strings have been extensively studied in \cite{Dro95,All97,DP99,Baa99,Dam00,DZ00,ABCD03,BHNR04,AAK10,RS16}. 
    
    Palindromes as strings were first studied from an algorithmic perspective by Slisenko \cite{Sli73} (cf. \cite{Gal78}). Knuth, Morris and Pratt \cite{KMP77} developed the well-known string matching algorithm, and applied it in recognizing concatenations of even-length palindromes. 
    Soon after, this result was improved in \cite{GS78} to
    concatenations of non-trivial palindromes.
    Manacher \cite{Man75} invented a linear-time algorithm to find all palindromic prefixes of a string. Later, it was found in \cite{ABG95} that Manacher's algorithm can be used to find all maximal palindromic substrings, and an alternative algorithm for the same problem was proposed in \cite{Jeu94} based on suffix trees \cite{Wei73,McC76,Ukk95,Far97}. Several parallel algorithms to find palindromic substrings were also designed \cite{CR91,ABG95,BG95}. Finding the longest palindromic substring was studied in the streaming model \cite{BEMTSA14,GMSU19}, after single-character substitution \cite{FNI+21}, and with extensions to finding top-$k$ longest palindromes in substrings \cite{MMSH23}. Recently, it was shown in \cite{CPR22} that the longest palindromic substring can be found in sublinear time. In addition, a quantum algorithm for this problem was found in \cite{LGS22} with quadratic speedup over classical algorithms. Very recently, computing maximal generalized palindromes was studied in \cite{FMN+22}.
    
    Since it was pointed out in \cite{DJP01} that a string $s$ of length $\abs*{s}$ has at most $\abs*{s} + 1$ distinct palindromic substrings (including the empty string), 
    many algorithms concerning palindromes have been successively proposed. Groult, Prieur and Richomme \cite{GPR10} found a linear-time algorithm to count the number of distinct palindromic substrings of a string, and used it to check the palindromic richness of a string. Here, a string $s$ is called palindromic rich if it contains the maximum possible $\abs*{s} + 1$ distinct palindromic substrings \cite{GJWZ09} (which was further studied in \cite{RR09,BDLGZ09,Ves14,GSS16,Ruk17}). 
    Palindrome pattern matching was studied in \cite{IIT13}, where two strings are matched if they have the same positions of all their maximal palindromic substrings \cite{IIBT10}.
    Another line of research focused on concatenations of palindromes. 
    The palindromic length of a string is the minimal $k$ such that it is a concatenation of $k$ palindromes \cite{Rav03}. 
    The problem of recognizing concatenations of exactly $k$ palindromes was explicitly stated in \cite{GS78}, with an $O\rbra{kn}$-time algorithm presented in \cite{KRS15} and later an improved $O\rbra{n}$-time algorithm presented in \cite{RS20}.
    In \cite{FGKK14,ISI+14,RS18}, they proposed several $O\rbra{n \log n}$-time algorithms for computing the palindromic length, which was later settled in \cite{BKRS17} with an $O\rbra{n}$-time algorithm.
    
    Recently, Rubinchik and Shur \cite{RS18} proposed a linear-size data structure called eertree (also known as the palindromic tree), which stores all distinct palindromic substrings of a string and can be constructed online in linear time. The size of the eertree is much smaller than the length $n$ of the string on average, because 
    the expected number of palindromic substrings is $\Theta\rbra{\sqrt{n}}$ \cite{RS16}.
    Using this powerful data structure, they enumerated strings that are palindromic rich of length up to $60$ (cf. sequence A216264 in OEIS \cite{Sha13}), and reproduced a different algorithm from \cite{FGKK14,ISI+14} to compute the palindromic length of a string. Later, an online algorithm to count palindromes in substrings was designed in \cite{RS17} based on eertrees. Furthermore, Mieno, Watanabe, Nakashima, et al. \cite{MWN+22} developed a type of eertree for a sliding window. 
    
    \paragraph{Double-ended data structures.}
    
    A double-ended queue (abbreviated to deque, cf. \cite{Tar83,Knu97}) is an abstract data structure consisting of a list of elements on which four kinds of operations can be performed:
    \begin{itemize}
        \item $\texttt{push\_back}(c)$: Insert an element $c$ at the back of the deque. 
        \item $\texttt{push\_front}(c)$: Insert an element $c$ at the front of the deque.
        \item $\texttt{pop\_back}()$: Remove an element from the back of the deque.
        \item $\texttt{pop\_front}()$: Remove an element from the front of the deque.
    \end{itemize}
    If only operations allowed are \texttt{push\_back} and \texttt{pop\_back} (or \texttt{push\_front} and \texttt{pop\_front}), then the deque becomes a stack. If only \texttt{push\_back} and \texttt{pop\_front} (or \texttt{push\_front} and \texttt{pop\_back}) are allowed, then the deque becomes a queue. 
    
    As a linear data structure, deques are widely used in practical and theoretical computer science with their implementations supported in most high-level programming languages, e.g., C/C++, Java, Python, etc. 
    Deques were investigated in different computational models, including Turing machines \cite{Kos79}, RAMs (random access machines) \cite{Kos94} and functional programming \cite{Hoo82,CG93,Oka95,BT95,KT99}. In addition to the basic deque operations, a natural extension of deque is to maintain more useful information related to the elements stored in it. For example, mindeque (a.k.a. deque with heap order) \cite{GT86} is a kind of extended deque that supports the \texttt{find\_min} operation, which finds the minimal element in the deque. Furthermore, a more powerful extension of mindeque, called catenable mindeque, was developed in \cite{BST95}, which supports catenation of two mindeques.
    
    \subsection{Main results} \label{sec:main-results}
    
    In this paper, we study an extension of deque, called double-ended eertree, which processes a string $s$ (initialized to empty) with four kinds of basic operations supported:
   \begin{itemize}
        \item $\texttt{push\_back}(c)$: Insert a character $c$ at the back of the string. That is, set $s \gets sc$.
        \item $\texttt{push\_front}(c)$: Insert a character $c$ at the front of the string. That is, set $s \gets cs$.
        \item $\texttt{pop\_back}()$: Remove a character from the back of the string. That is, set $s \gets s\substr{1}{\abs*{s}-1}$, provided that $s$ is not empty.
        \item $\texttt{pop\_front}()$: Remove a character from the front of the string. That is, set $s \gets s\substr{2}{\abs*{s}}$, provided that $s$ is not empty.
    \end{itemize}
    In addition to the basic operations above, we can also make online queries to the current string, including but not limited to the following:
    \begin{itemize}
        \item $\texttt{count}()$: Find the number of distinct palindromic substrings of the current string. 
        \item $\texttt{longest\_prepal}()$: Find the longest palindromic prefix of the current string, and check whether it is unique in the current string. 
        \item $\texttt{longest\_sufpal}()$: Find the longest palindromic suffix of the current string, and check whether it is unique in the current string. 
    \end{itemize}

\newcommand{\node} {\operatorname{node}}
\newcommand{\prenode} {\mathit{prenode}}
\newcommand{\sufnode} {\mathit{sufnode}}
\newcommand{\prepal} {\mathit{prepal}}
\newcommand{\sufpal} {\mathit{sufpal}}
\newcommand{\prelen} {\mathit{prelen}}
\newcommand{\suflen} {\mathit{suflen}}
\newcommand{\eertree} {\mathsf{EERTREE}}
\newcommand{\even} {\mathsf{even}}
\newcommand{\odd} {\mathsf{odd}}
\renewcommand{\next} {\operatorname{next}}
\newcommand{\link} {\operatorname{link}}
\newcommand{\nullptr} {\mathsf{null}}
\newcommand{\str} {\operatorname{str}}
\newcommand{\len} {\operatorname{len}}
\newcommand{\linkcnt} {\operatorname{linkcnt}}
\newcommand{\occur} {\operatorname{occur}}
\newcommand{\presurf} {\mathit{presurf}}
\newcommand{\sufsurf} {\mathit{sufsurf}}

\newcommand{\precnt} {\mathit{precnt}}
\newcommand{\sufcnt} {\mathit{sufcnt}}
\newcommand{\subtree} {\mathit{subtree}}
\newcommand{\cnt} {\mathit{cnt}}

    For illustration, we consider the double-ended eertree of the string $s = abaaaba$, denoted $\eertree\rbra{s}$. 
    In Figure \ref{fig:push_back}, we present the structure of the eertree after performing $\texttt{push\_back}\rbra{a}$ on $\eertree\rbra{s}$; in Figure \ref{fig:pop_back}, we present the structure of the eertree after performing $\texttt{pop\_back}\rbra{}$ on $\eertree\rbra{s}$.

    \begin{figure}
    \centering
\tikzset{every picture/.style={line width=0.75pt}} 

\begin{tikzpicture}[x=0.75pt,y=0.75pt,yscale=-1,xscale=1]

\draw   (104,116.07) .. controls (104,92.06) and (123.33,72.59) .. (147.17,72.59) .. controls (171.01,72.59) and (190.34,92.06) .. (190.34,116.07) .. controls (190.34,140.09) and (171.01,159.55) .. (147.17,159.55) .. controls (123.33,159.55) and (104,140.09) .. (104,116.07) -- cycle ;
\draw   (368.09,111.48) .. controls (368.09,87.47) and (387.42,68) .. (411.26,68) .. controls (435.1,68) and (454.43,87.47) .. (454.43,111.48) .. controls (454.43,135.49) and (435.1,154.96) .. (411.26,154.96) .. controls (387.42,154.96) and (368.09,135.49) .. (368.09,111.48) -- cycle ;
\draw   (469.66,269.54) .. controls (469.66,245.52) and (488.99,226.05) .. (512.83,226.05) .. controls (536.67,226.05) and (556,245.52) .. (556,269.54) .. controls (556,293.55) and (536.67,313.02) .. (512.83,313.02) .. controls (488.99,313.02) and (469.66,293.55) .. (469.66,269.54) -- cycle ;
\draw   (266.52,267.5) .. controls (266.52,243.49) and (285.84,224.02) .. (309.69,224.02) .. controls (333.53,224.02) and (352.85,243.49) .. (352.85,267.5) .. controls (352.85,291.51) and (333.53,310.98) .. (309.69,310.98) .. controls (285.84,310.98) and (266.52,291.51) .. (266.52,267.5) -- cycle ;
\draw   (266.52,408.17) .. controls (266.52,384.16) and (285.84,364.69) .. (309.69,364.69) .. controls (333.53,364.69) and (352.85,384.16) .. (352.85,408.17) .. controls (352.85,432.19) and (333.53,451.65) .. (309.69,451.65) .. controls (285.84,451.65) and (266.52,432.19) .. (266.52,408.17) -- cycle ;
\draw   (469.66,412.77) .. controls (469.66,388.75) and (488.99,369.29) .. (512.83,369.29) .. controls (536.67,369.29) and (556,388.75) .. (556,412.77) .. controls (556,436.78) and (536.67,456.25) .. (512.83,456.25) .. controls (488.99,456.25) and (469.66,436.78) .. (469.66,412.77) -- cycle ;
\draw  [color={rgb, 255:red, 0; green, 0; blue, 0 }  ,draw opacity=1 ][fill={rgb, 255:red, 74; green, 144; blue, 226 }  ,fill opacity=1 ] (469.66,553.44) .. controls (469.66,529.43) and (488.99,509.96) .. (512.83,509.96) .. controls (536.67,509.96) and (556,529.43) .. (556,553.44) .. controls (556,577.45) and (536.67,596.92) .. (512.83,596.92) .. controls (488.99,596.92) and (469.66,577.45) .. (469.66,553.44) -- cycle ;
\draw   (266.52,689.52) .. controls (266.52,665.51) and (285.84,646.04) .. (309.69,646.04) .. controls (333.53,646.04) and (352.85,665.51) .. (352.85,689.52) .. controls (352.85,713.53) and (333.53,733) .. (309.69,733) .. controls (285.84,733) and (266.52,713.53) .. (266.52,689.52) -- cycle ;
\draw   (266.52,548.85) .. controls (266.52,524.83) and (285.84,505.37) .. (309.69,505.37) .. controls (333.53,505.37) and (352.85,524.83) .. (352.85,548.85) .. controls (352.85,572.86) and (333.53,592.33) .. (309.69,592.33) .. controls (285.84,592.33) and (266.52,572.86) .. (266.52,548.85) -- cycle ;
\draw   (104,269.54) .. controls (104,245.52) and (123.33,226.05) .. (147.17,226.05) .. controls (171.01,226.05) and (190.34,245.52) .. (190.34,269.54) .. controls (190.34,293.55) and (171.01,313.02) .. (147.17,313.02) .. controls (123.33,313.02) and (104,293.55) .. (104,269.54) -- cycle ;
\draw    (411.26,154.96) -- (311.34,222.89) ;
\draw [shift={(309.69,224.02)}, rotate = 325.79] [fill={rgb, 255:red, 0; green, 0; blue, 0 }  ][line width=0.08]  [draw opacity=0] (12,-3) -- (0,0) -- (12,3) -- cycle    ;
\draw    (411.26,154.96) -- (506.14,224.87) ;
\draw [shift={(507.75,226.05)}, rotate = 216.38] [fill={rgb, 255:red, 0; green, 0; blue, 0 }  ][line width=0.08]  [draw opacity=0] (12,-3) -- (0,0) -- (12,3) -- cycle    ;
\draw    (309.69,310.98) -- (309.69,362.69) ;
\draw [shift={(309.69,364.69)}, rotate = 270] [fill={rgb, 255:red, 0; green, 0; blue, 0 }  ][line width=0.08]  [draw opacity=0] (12,-3) -- (0,0) -- (12,3) -- cycle    ;
\draw    (512.83,313.02) -- (512.83,367.29) ;
\draw [shift={(512.83,369.29)}, rotate = 270] [fill={rgb, 255:red, 0; green, 0; blue, 0 }  ][line width=0.08]  [draw opacity=0] (12,-3) -- (0,0) -- (12,3) -- cycle    ;
\draw    (309.69,451.65) -- (309.69,503.37) ;
\draw [shift={(309.69,505.37)}, rotate = 270] [fill={rgb, 255:red, 0; green, 0; blue, 0 }  ][line width=0.08]  [draw opacity=0] (12,-3) -- (0,0) -- (12,3) -- cycle    ;
\draw    (309.69,592.33) -- (309.69,644.04) ;
\draw [shift={(309.69,646.04)}, rotate = 270] [fill={rgb, 255:red, 0; green, 0; blue, 0 }  ][line width=0.08]  [draw opacity=0] (12,-3) -- (0,0) -- (12,3) -- cycle    ;
\draw [color={rgb, 255:red, 74; green, 144; blue, 226 }  ,draw opacity=1 ]   (512.83,456.25) -- (512.83,507.96) ;
\draw [shift={(512.83,509.96)}, rotate = 270] [fill={rgb, 255:red, 74; green, 144; blue, 226 }  ,fill opacity=1 ][line width=0.08]  [draw opacity=0] (12,-3) -- (0,0) -- (12,3) -- cycle    ;
\draw    (147.17,159.55) -- (147.17,224.05) ;
\draw [shift={(147.17,226.05)}, rotate = 270] [fill={rgb, 255:red, 0; green, 0; blue, 0 }  ][line width=0.08]  [draw opacity=0] (12,-3) -- (0,0) -- (12,3) -- cycle    ;
\draw [color={rgb, 255:red, 74; green, 144; blue, 226 }  ,draw opacity=1 ] [dash pattern={on 4.5pt off 4.5pt}]  (469.66,553.44) .. controls (404.91,404.32) and (171.29,542.95) .. (147.17,313.02) ;
\draw [shift={(147.17,313.02)}, rotate = 84.01] [fill={rgb, 255:red, 74; green, 144; blue, 226 }  ,fill opacity=1 ][line width=0.08]  [draw opacity=0] (10.72,-5.15) -- (0,0) -- (10.72,5.15) -- (7.12,0) -- cycle    ;
\draw  [dash pattern={on 4.5pt off 4.5pt}]  (472.2,395.89) .. controls (418.43,372.96) and (370.81,348.56) .. (341.48,300.41) ;
\draw [shift={(340.16,298.19)}, rotate = 59.52] [fill={rgb, 255:red, 0; green, 0; blue, 0 }  ][line width=0.08]  [draw opacity=0] (10.72,-5.15) -- (0,0) -- (10.72,5.15) -- (7.12,0) -- cycle    ;
\draw  [dash pattern={on 4.5pt off 4.5pt}]  (293.18,227.08) .. controls (281.92,154.02) and (243.57,114.46) .. (192.67,115.98) ;
\draw [shift={(190.34,116.07)}, rotate = 356.91] [fill={rgb, 255:red, 0; green, 0; blue, 0 }  ][line width=0.08]  [draw opacity=0] (10.72,-5.15) -- (0,0) -- (10.72,5.15) -- (7.12,0) -- cycle    ;
\draw  [dash pattern={on 4.5pt off 4.5pt}]  (477.28,243.7) .. controls (435.8,190.53) and (286.06,112.29) .. (193.14,115.94) ;
\draw [shift={(190.34,116.07)}, rotate = 356.68] [fill={rgb, 255:red, 0; green, 0; blue, 0 }  ][line width=0.08]  [draw opacity=0] (10.72,-5.15) -- (0,0) -- (10.72,5.15) -- (7.12,0) -- cycle    ;
\draw  [dash pattern={on 4.5pt off 4.5pt}]  (190.34,98.17) .. controls (229.11,70.21) and (314.69,64.82) .. (368.21,92.78) ;
\draw [shift={(370.63,94.08)}, rotate = 208.88] [fill={rgb, 255:red, 0; green, 0; blue, 0 }  ][line width=0.08]  [draw opacity=0] (10.72,-5.15) -- (0,0) -- (10.72,5.15) -- (7.12,0) -- cycle    ;
\draw  [dash pattern={on 4.5pt off 4.5pt}]  (451.89,89.99) .. controls (496.91,62.27) and (513.6,138.79) .. (454.63,125.2) ;
\draw [shift={(451.89,124.51)}, rotate = 14.96] [fill={rgb, 255:red, 0; green, 0; blue, 0 }  ][line width=0.08]  [draw opacity=0] (10.72,-5.15) -- (0,0) -- (10.72,5.15) -- (7.12,0) -- cycle    ;
\draw  [dash pattern={on 4.5pt off 4.5pt}]  (352.85,689.52) .. controls (428.27,632.29) and (397.92,444.45) .. (467.52,413.65) ;
\draw [shift={(469.66,412.77)}, rotate = 158.93] [fill={rgb, 255:red, 0; green, 0; blue, 0 }  ][line width=0.08]  [draw opacity=0] (10.72,-5.15) -- (0,0) -- (10.72,5.15) -- (7.12,0) -- cycle    ;
\draw  [dash pattern={on 4.5pt off 4.5pt}]  (352.85,533.76) .. controls (395.59,475.01) and (397.26,300.94) .. (467.51,270.42) ;
\draw [shift={(469.66,269.54)}, rotate = 158.93] [fill={rgb, 255:red, 0; green, 0; blue, 0 }  ][line width=0.08]  [draw opacity=0] (10.72,-5.15) -- (0,0) -- (10.72,5.15) -- (7.12,0) -- cycle    ;
\draw  [dash pattern={on 4.5pt off 4.5pt}]  (266.52,408.17) .. controls (244.01,370.88) and (222.73,343.02) .. (179.62,301.86) ;
\draw [shift={(177.64,299.97)}, rotate = 43.52] [fill={rgb, 255:red, 0; green, 0; blue, 0 }  ][line width=0.08]  [draw opacity=0] (10.72,-5.15) -- (0,0) -- (10.72,5.15) -- (7.12,0) -- cycle    ;
\draw  [dash pattern={on 4.5pt off 4.5pt}]  (177.64,236.03) .. controls (204.6,213.82) and (255.2,217.8) .. (278.1,233.05) ;
\draw [shift={(280.48,234.75)}, rotate = 217.6] [fill={rgb, 255:red, 0; green, 0; blue, 0 }  ][line width=0.08]  [draw opacity=0] (10.72,-5.15) -- (0,0) -- (10.72,5.15) -- (7.12,0) -- cycle    ;

\draw (131.44,109.93) node [anchor=north west][inner sep=0.75pt]   [align=left] {$\mathsf{even}$};
\draw (398.39,103.62) node [anchor=north west][inner sep=0.75pt]   [align=left] {$\mathsf{odd}$};
\draw (507.83,261.4) node [anchor=north west][inner sep=0.75pt]   [align=left] {$b$};
\draw (304.96,261.64) node [anchor=north west][inner sep=0.75pt]   [align=left] {$a$};
\draw (298.09,403.03) node [anchor=north west][inner sep=0.75pt]   [align=left] {$aaa$};
\draw (501.24,403.63) node [anchor=north west][inner sep=0.75pt]   [align=left] {$aba$};
\draw (491.97,545.58) node [anchor=north west][inner sep=0.75pt]   [align=left] {$aabaa$};
\draw (282.96,680.38) node [anchor=north west][inner sep=0.75pt]   [align=left] {$abaaaba$};
\draw (290.82,540.99) node [anchor=north west][inner sep=0.75pt]   [align=left] {$baaab$};
\draw (139.44,262.68) node [anchor=north west][inner sep=0.75pt]   [align=left] {$aa$};
\draw (130.74,181.63) node [anchor=north west][inner sep=0.75pt]   [align=left] {$a$};
\draw (346.58,175.24) node [anchor=north west][inner sep=0.75pt]   [align=left] {$a$};
\draw (295.8,324.86) node [anchor=north west][inner sep=0.75pt]   [align=left] {$a$};
\draw (295.8,606.21) node [anchor=north west][inner sep=0.75pt]   [align=left] {$a$};
\draw (498.94,327.42) node [anchor=north west][inner sep=0.75pt]   [align=left] {$a$};
\draw (462.12,172.16) node [anchor=north west][inner sep=0.75pt]   [align=left] {$b$};
\draw (498.94,471.93) node [anchor=north west][inner sep=0.75pt]   [align=left] {$a$};
\draw (318.65,480.36) node [anchor=north west][inner sep=0.75pt]   [align=left] {$b$};
\draw (560.83,257.4) node [anchor=north west][inner sep=0.75pt]   [align=left] {$\{2, 6\}$};
\draw (561.83,400.4) node [anchor=north west][inner sep=0.75pt]   [align=left] {$\{1, 5\}$};
\draw (561.83,543.4) node [anchor=north west][inner sep=0.75pt]   [align=left] {$\{\textcolor[rgb]{0.29,0.56,0.89}{4}\}$};
\draw (351.83,237.4) node [anchor=north west][inner sep=0.75pt]   [align=left] {$\{1, 3, 4, 5, 7, \textcolor[rgb]{0.29,0.56,0.89}{8}\}$};
\draw (356.83,385.4) node [anchor=north west][inner sep=0.75pt]   [align=left] {$\{3\}$};
\draw (241.83,537.4) node [anchor=north west][inner sep=0.75pt]   [align=left] {$\{2\}$};
\draw (241.83,674.4) node [anchor=north west][inner sep=0.75pt]   [align=left] {$\{1\}$};
\draw (54.83,233.4) node [anchor=north west][inner sep=0.75pt]   [align=left] {$\{3, 4, \textcolor[rgb]{0.29,0.56,0.89}{7}\}$};

\end{tikzpicture}
\caption{The eertree after performing $\texttt{push\_back}\rbra{a}$ on the string $s = abaaaba$.
For the definition and notations, please refer to Section \ref{sec:eertree}.
Newly created nodes and transitions are colored in {\textcolor[rgb]{0.29,0.56,0.89}{blue}}. 
A solid arrow from node $v$ with character $c$ means the transition $\next\rbra{v, c}$. 
A dashed arrow from node $v$ means the suffix link.
The set near to node $v$ means $\occur\rbra*{s, v}$.}
\label{fig:push_back}
\end{figure}

\begin{figure}
    \centering
\tikzset{every picture/.style={line width=0.75pt}} 

\begin{tikzpicture}[x=0.75pt,y=0.75pt,yscale=-1,xscale=1]

\draw   (104,116.07) .. controls (104,92.06) and (123.33,72.59) .. (147.17,72.59) .. controls (171.01,72.59) and (190.34,92.06) .. (190.34,116.07) .. controls (190.34,140.09) and (171.01,159.55) .. (147.17,159.55) .. controls (123.33,159.55) and (104,140.09) .. (104,116.07) -- cycle ;
\draw   (368.09,111.48) .. controls (368.09,87.47) and (387.42,68) .. (411.26,68) .. controls (435.1,68) and (454.43,87.47) .. (454.43,111.48) .. controls (454.43,135.49) and (435.1,154.96) .. (411.26,154.96) .. controls (387.42,154.96) and (368.09,135.49) .. (368.09,111.48) -- cycle ;
\draw   (469.66,269.54) .. controls (469.66,245.52) and (488.99,226.05) .. (512.83,226.05) .. controls (536.67,226.05) and (556,245.52) .. (556,269.54) .. controls (556,293.55) and (536.67,313.02) .. (512.83,313.02) .. controls (488.99,313.02) and (469.66,293.55) .. (469.66,269.54) -- cycle ;
\draw   (266.52,267.5) .. controls (266.52,243.49) and (285.84,224.02) .. (309.69,224.02) .. controls (333.53,224.02) and (352.85,243.49) .. (352.85,267.5) .. controls (352.85,291.51) and (333.53,310.98) .. (309.69,310.98) .. controls (285.84,310.98) and (266.52,291.51) .. (266.52,267.5) -- cycle ;
\draw   (266.52,408.17) .. controls (266.52,384.16) and (285.84,364.69) .. (309.69,364.69) .. controls (333.53,364.69) and (352.85,384.16) .. (352.85,408.17) .. controls (352.85,432.19) and (333.53,451.65) .. (309.69,451.65) .. controls (285.84,451.65) and (266.52,432.19) .. (266.52,408.17) -- cycle ;
\draw   (469.66,412.77) .. controls (469.66,388.75) and (488.99,369.29) .. (512.83,369.29) .. controls (536.67,369.29) and (556,388.75) .. (556,412.77) .. controls (556,436.78) and (536.67,456.25) .. (512.83,456.25) .. controls (488.99,456.25) and (469.66,436.78) .. (469.66,412.77) -- cycle ;
\draw  [fill={rgb, 255:red, 126; green, 211; blue, 33 }  ,fill opacity=1 ] (266.52,689.52) .. controls (266.52,665.51) and (285.84,646.04) .. (309.69,646.04) .. controls (333.53,646.04) and (352.85,665.51) .. (352.85,689.52) .. controls (352.85,713.53) and (333.53,733) .. (309.69,733) .. controls (285.84,733) and (266.52,713.53) .. (266.52,689.52) -- cycle ;
\draw   (266.52,548.85) .. controls (266.52,524.83) and (285.84,505.37) .. (309.69,505.37) .. controls (333.53,505.37) and (352.85,524.83) .. (352.85,548.85) .. controls (352.85,572.86) and (333.53,592.33) .. (309.69,592.33) .. controls (285.84,592.33) and (266.52,572.86) .. (266.52,548.85) -- cycle ;
\draw   (104,269.54) .. controls (104,245.52) and (123.33,226.05) .. (147.17,226.05) .. controls (171.01,226.05) and (190.34,245.52) .. (190.34,269.54) .. controls (190.34,293.55) and (171.01,313.02) .. (147.17,313.02) .. controls (123.33,313.02) and (104,293.55) .. (104,269.54) -- cycle ;
\draw    (411.26,154.96) -- (311.34,222.89) ;
\draw [shift={(309.69,224.02)}, rotate = 325.79] [fill={rgb, 255:red, 0; green, 0; blue, 0 }  ][line width=0.08]  [draw opacity=0] (12,-3) -- (0,0) -- (12,3) -- cycle    ;
\draw    (411.26,154.96) -- (506.14,224.87) ;
\draw [shift={(507.75,226.05)}, rotate = 216.38] [fill={rgb, 255:red, 0; green, 0; blue, 0 }  ][line width=0.08]  [draw opacity=0] (12,-3) -- (0,0) -- (12,3) -- cycle    ;
\draw    (309.69,310.98) -- (309.69,362.69) ;
\draw [shift={(309.69,364.69)}, rotate = 270] [fill={rgb, 255:red, 0; green, 0; blue, 0 }  ][line width=0.08]  [draw opacity=0] (12,-3) -- (0,0) -- (12,3) -- cycle    ;
\draw    (512.83,313.02) -- (512.83,367.29) ;
\draw [shift={(512.83,369.29)}, rotate = 270] [fill={rgb, 255:red, 0; green, 0; blue, 0 }  ][line width=0.08]  [draw opacity=0] (12,-3) -- (0,0) -- (12,3) -- cycle    ;
\draw    (309.69,451.65) -- (309.69,503.37) ;
\draw [shift={(309.69,505.37)}, rotate = 270] [fill={rgb, 255:red, 0; green, 0; blue, 0 }  ][line width=0.08]  [draw opacity=0] (12,-3) -- (0,0) -- (12,3) -- cycle    ;
\draw [color={rgb, 255:red, 126; green, 211; blue, 33 }  ,draw opacity=1 ]   (309.69,592.33) -- (309.69,644.04) ;
\draw [shift={(309.69,646.04)}, rotate = 270] [fill={rgb, 255:red, 126; green, 211; blue, 33 }  ,fill opacity=1 ][line width=0.08]  [draw opacity=0] (12,-3) -- (0,0) -- (12,3) -- cycle    ;
\draw    (147.17,159.55) -- (147.17,224.05) ;
\draw [shift={(147.17,226.05)}, rotate = 270] [fill={rgb, 255:red, 0; green, 0; blue, 0 }  ][line width=0.08]  [draw opacity=0] (12,-3) -- (0,0) -- (12,3) -- cycle    ;
\draw  [dash pattern={on 4.5pt off 4.5pt}]  (472.2,395.89) .. controls (418.43,372.96) and (370.81,348.56) .. (341.48,300.41) ;
\draw [shift={(340.16,298.19)}, rotate = 59.52] [fill={rgb, 255:red, 0; green, 0; blue, 0 }  ][line width=0.08]  [draw opacity=0] (10.72,-5.15) -- (0,0) -- (10.72,5.15) -- (7.12,0) -- cycle    ;
\draw  [dash pattern={on 4.5pt off 4.5pt}]  (293.18,227.08) .. controls (281.92,154.02) and (243.57,114.46) .. (192.67,115.98) ;
\draw [shift={(190.34,116.07)}, rotate = 356.91] [fill={rgb, 255:red, 0; green, 0; blue, 0 }  ][line width=0.08]  [draw opacity=0] (10.72,-5.15) -- (0,0) -- (10.72,5.15) -- (7.12,0) -- cycle    ;
\draw  [dash pattern={on 4.5pt off 4.5pt}]  (477.28,243.7) .. controls (435.8,190.53) and (286.06,112.29) .. (193.14,115.94) ;
\draw [shift={(190.34,116.07)}, rotate = 356.68] [fill={rgb, 255:red, 0; green, 0; blue, 0 }  ][line width=0.08]  [draw opacity=0] (10.72,-5.15) -- (0,0) -- (10.72,5.15) -- (7.12,0) -- cycle    ;
\draw  [dash pattern={on 4.5pt off 4.5pt}]  (190.34,98.17) .. controls (229.11,70.21) and (314.69,64.82) .. (368.21,92.78) ;
\draw [shift={(370.63,94.08)}, rotate = 208.88] [fill={rgb, 255:red, 0; green, 0; blue, 0 }  ][line width=0.08]  [draw opacity=0] (10.72,-5.15) -- (0,0) -- (10.72,5.15) -- (7.12,0) -- cycle    ;
\draw  [dash pattern={on 4.5pt off 4.5pt}]  (451.89,89.99) .. controls (496.91,62.27) and (513.6,138.79) .. (454.63,125.2) ;
\draw [shift={(451.89,124.51)}, rotate = 14.96] [fill={rgb, 255:red, 0; green, 0; blue, 0 }  ][line width=0.08]  [draw opacity=0] (10.72,-5.15) -- (0,0) -- (10.72,5.15) -- (7.12,0) -- cycle    ;
\draw [color={rgb, 255:red, 126; green, 211; blue, 33 }  ,draw opacity=1 ] [dash pattern={on 4.5pt off 4.5pt}]  (352.85,689.52) .. controls (428.27,632.29) and (397.92,444.45) .. (467.52,413.65) ;
\draw [shift={(469.66,412.77)}, rotate = 158.93] [fill={rgb, 255:red, 126; green, 211; blue, 33 }  ,fill opacity=1 ][line width=0.08]  [draw opacity=0] (10.72,-5.15) -- (0,0) -- (10.72,5.15) -- (7.12,0) -- cycle    ;
\draw  [dash pattern={on 4.5pt off 4.5pt}]  (352.85,533.76) .. controls (395.59,475.01) and (397.26,300.94) .. (467.51,270.42) ;
\draw [shift={(469.66,269.54)}, rotate = 158.93] [fill={rgb, 255:red, 0; green, 0; blue, 0 }  ][line width=0.08]  [draw opacity=0] (10.72,-5.15) -- (0,0) -- (10.72,5.15) -- (7.12,0) -- cycle    ;
\draw  [dash pattern={on 4.5pt off 4.5pt}]  (266.52,408.17) .. controls (244.01,370.88) and (222.73,343.02) .. (179.62,301.86) ;
\draw [shift={(177.64,299.97)}, rotate = 43.52] [fill={rgb, 255:red, 0; green, 0; blue, 0 }  ][line width=0.08]  [draw opacity=0] (10.72,-5.15) -- (0,0) -- (10.72,5.15) -- (7.12,0) -- cycle    ;
\draw  [dash pattern={on 4.5pt off 4.5pt}]  (177.64,236.03) .. controls (204.6,213.82) and (255.2,217.8) .. (278.1,233.05) ;
\draw [shift={(280.48,234.75)}, rotate = 217.6] [fill={rgb, 255:red, 0; green, 0; blue, 0 }  ][line width=0.08]  [draw opacity=0] (10.72,-5.15) -- (0,0) -- (10.72,5.15) -- (7.12,0) -- cycle    ;

\draw (131.44,109.93) node [anchor=north west][inner sep=0.75pt]   [align=left] {$\mathsf{even}$};
\draw (398.39,103.62) node [anchor=north west][inner sep=0.75pt]   [align=left] {$\mathsf{odd}$};
\draw (507.83,261.4) node [anchor=north west][inner sep=0.75pt]   [align=left] {$b$};
\draw (304.96,261.64) node [anchor=north west][inner sep=0.75pt]   [align=left] {$a$};
\draw (298.09,403.03) node [anchor=north west][inner sep=0.75pt]   [align=left] {$aaa$};
\draw (501.24,403.63) node [anchor=north west][inner sep=0.75pt]   [align=left] {$aba$};
\draw (282.96,680.38) node [anchor=north west][inner sep=0.75pt]   [align=left] {$abaaaba$};
\draw (290.82,540.99) node [anchor=north west][inner sep=0.75pt]   [align=left] {$baaab$};
\draw (139.44,262.68) node [anchor=north west][inner sep=0.75pt]   [align=left] {$aa$};
\draw (130.74,181.63) node [anchor=north west][inner sep=0.75pt]   [align=left] {$a$};
\draw (346.58,175.24) node [anchor=north west][inner sep=0.75pt]   [align=left] {$a$};
\draw (295.8,324.86) node [anchor=north west][inner sep=0.75pt]   [align=left] {$a$};
\draw (295.8,606.21) node [anchor=north west][inner sep=0.75pt]   [align=left] {$a$};
\draw (498.94,327.42) node [anchor=north west][inner sep=0.75pt]   [align=left] {$a$};
\draw (462.12,172.16) node [anchor=north west][inner sep=0.75pt]   [align=left] {$b$};
\draw (313.65,463.36) node [anchor=north west][inner sep=0.75pt]   [align=left] {$b$};
\draw (560.83,257.4) node [anchor=north west][inner sep=0.75pt]   [align=left] {$\{2, 6\}$};
\draw (561.83,400.4) node [anchor=north west][inner sep=0.75pt]   [align=left] {$\{1, \textcolor[rgb]{0.49,0.83,0.13}{5}\}$};
\draw (351.83,237.4) node [anchor=north west][inner sep=0.75pt]   [align=left] {$\{1, 3, 4, 5, \textcolor[rgb]{0.49,0.83,0.13}{7}\}$};
\draw (356.83,385.4) node [anchor=north west][inner sep=0.75pt]   [align=left] {$\{3\}$};
\draw (241.83,537.4) node [anchor=north west][inner sep=0.75pt]   [align=left] {$\{2\}$};
\draw (241.83,674.4) node [anchor=north west][inner sep=0.75pt]   [align=left] {$\{\textcolor[rgb]{0.49,0.83,0.13}{1}\}$};
\draw (67.83,234.4) node [anchor=north west][inner sep=0.75pt]   [align=left] {$\{3, 4\}$};

\end{tikzpicture}
\caption{The eertree after performing $\texttt{pop\_back}\rbra{}$ on the string $s = abaaaba$.
Newly deleted nodes and transitions are colored in {\textcolor[rgb]{0.49,0.83,0.13}{green}}. 
See Figure \ref{fig:push_back} for the meaning of the arrows. }
\label{fig:pop_back}
\end{figure}

    The data structure adopted in our implementation is an extended eertree. This indicates that we always have access to the eertree of the current string, and therefore the basic operations of eertrees are naturally inherited. 
    Roughly speaking, we propose a self-organizing data structure double-ended eertree which supports online deque operations on the stored string in linear time.
    We formally state our efficient algorithm for the double-ended eertree in the following theorem. 
    
    \begin{theorem} [Double-ended eertrees] \label{thm:main}
        Double-ended eertree can be implemented
        with worst-case time and space complexity $O\rbra*{\log\rbra*{\sigma}}$ per operation, where $\sigma$ is the size of the alphabet.
    \end{theorem}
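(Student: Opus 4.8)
The plan is to maintain a single eertree whose nodes are the distinct palindromic substrings of the current string $s$, augmented with two active pointers: a node $u_{\mathrm{suf}}$ for the longest palindromic suffix of $s$ and a node $u_{\mathrm{pre}}$ for its longest palindromic prefix. The starting observation is a symmetry: for any palindrome $P$, its palindromic prefixes and palindromic suffixes coincide as abstract strings, so a single suffix-link forest already encodes the border information needed at both ends. Consequently \texttt{push\_front} is the mirror image of \texttt{push\_back} (run the standard extension from $u_{\mathrm{pre}}$ instead of $u_{\mathrm{suf}}$), and \texttt{pop\_front} mirrors \texttt{pop\_back}. It therefore suffices to implement one push and one pop correctly and obtain the other two by reversing roles.

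For \texttt{push\_back}$(c)$ I would run the usual online eertree step from $u_{\mathrm{suf}}$: walk up suffix links to the longest palindromic suffix $Q$ whose preceding character equals $c$, and create at most one new node $cQc$ together with its suffix link. Each character transition is stored in a balanced search tree (or hash table) keyed by the alphabet, which is where the $O\rbra*{\log\rbra*{\sigma}}$ factor enters. The only extra bookkeeping relative to the classical construction is updating the opposite pointer $u_{\mathrm{pre}}$; here one exploits the fact that appending a character can lengthen the longest palindromic prefix only when the whole string becomes a palindrome, so $u_{\mathrm{pre}}$ changes in a fully controlled way. Since each push inserts at most one palindrome, the node set grows by at most one per operation, keeping the space linear in the number of operations.

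The difficult half, and the real content of the theorem, is \texttt{pop\_back}. Writing $s = s'c$, the palindromes that must vanish are exactly those all of whose occurrences used the deleted position. By the $\abs*{s}+1$ bound, at most one palindrome disappears: every palindromic substring of $s'$ is already one of $s$, while appending a character adds at most one, so the two palindrome sets differ by at most a single node. The trouble is that the classical eertree records nothing that decides whether the current longest palindromic suffix still occurs away from the boundary once $c$ is removed. In a pure stack this is answered by undoing the single matching push, but in a deque the creation event of the node to delete may be arbitrarily deep and entangled with opposite-end insertions. This is where I would invoke surfaces: since surfaces are neither a prefix nor a suffix of any other palindromic substring and (by the structural results assumed above) characterize the entire link structure, the occurrence of any palindromic suffix can be certified through the surface that covers it. I would keep a \emph{surface-recording} log that tracks, for each active palindrome, whether it still occurs off the current boundary; a \texttt{pop\_back} then consults this log in $O\rbra*{\log\rbra*{\sigma}}$ time to decide whether $u_{\mathrm{suf}}$ loses its last occurrence, deletes the corresponding node, repairs the $O(1)$ affected links, and resets both $u_{\mathrm{suf}}$ and $u_{\mathrm{pre}}$.

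The main obstacle I anticipate is twofold. First, correctness of the surface-recording argument: one must prove that maintaining surface information alone suffices to detect, exactly, the unique palindrome that disappears under each pop, and that the recording can be updated consistently under insertions and deletions arriving from either end. Second, obtaining the \emph{worst-case} rather than amortized $O\rbra*{\log\rbra*{\sigma}}$ bound per operation requires de-amortizing the suffix-link walk, since a single naive push may traverse a long suffix-link chain. I expect to control this by charging each traversal against the surface structure and by maintaining the two boundary pointers so that the combined suffix-link depth they expose changes by a bounded amount per operation. Verifying that pushes and pops from both ends preserve this invariant is, I believe, the crux of the proof.
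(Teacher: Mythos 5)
Your high-level intuitions are sound (one eertree serves both ends by palindrome symmetry; each push creates at most one node and each pop destroys at most one; surfaces are the right concept for pops), but two concrete mechanisms that the theorem actually rests on are missing, and the ones you substitute would fail. First, after \texttt{pop\_back} you say the algorithm ``resets both $u_{\mathrm{suf}}$ and $u_{\mathrm{pre}}$,'' but you give no way to compute the \emph{new} longest palindromic suffix of $s[1..|s|-1]$; this is precisely one of the two central difficulties. A per-node ``log of whether each palindrome still occurs off the boundary'' does not recover it, because the answer is position-sensitive: the paper instead maintains two arrays indexed by (global) positions, $\mathit{presurf}(s,i)$ and $\mathit{sufsurf}(s,i)$, recording the surface starting (resp.\ ending) at each position, proves that the longest palindromic prefix/suffix is exactly the surface at position $1$ (resp.\ $|s|$), and proves four update lemmas showing these arrays change in only $O(1)$ entries under each of the four deque operations. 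That positional indexing, plus the global index system that makes front insertions not shift indices, is what makes pops $O(1)$ worst case. Similarly, your uniqueness test is unspecified; the paper's test is a counter $\mathit{cnt}(s,v)$ (the number of positions whose longest palindromic prefix, equivalently suffix, is $\mathrm{str}(v)$ --- their equality is itself a lemma), together with the identity $|\mathrm{occur}(s,v)| = \sum_{u \in T_v} \mathit{cnt}(s,u)$ over the link subtree, so that uniqueness is ``$\mathrm{linkcnt}=0$ and $\mathit{cnt}=1$,'' with an $O(1)$ update rule per operation.

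Second, your plan for the worst-case push bound --- run the classical suffix-link walk and then ``de-amortize'' it by charging traversals against the surface structure --- is not what is needed and is unlikely to work: the classical amortized potential (suffix-link depth) is already broken by deque operations, since characters removed and re-inserted at either end can force the same long walk repeatedly; this is exactly why the paper argues the stack backup strategy and the queue lazy strategy do not extend. The paper avoids any walk entirely by reusing Rubinchik--Shur direct links: $\mathit{dlink}(v,c)$ stored in persistent balanced search trees lets one jump to the correct ancestor in a single $O(\log\sigma)$ lookup (each new node builds its direct-link table from its tree parent's in $O(\log\sigma)$ time and space), which is where the worst-case $O(\log\sigma)$ per push comes from. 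So the gap is not a detail: without the positional surface arrays and the $\mathit{cnt}$ counter for pops, and without direct links (rather than a hoped-for de-amortized walk) for pushes, the claimed worst-case bounds do not follow.
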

    
    The double-ended eertree is assumed to work in the word RAM model \cite{FW90} under the constant cost criterion (see Section \ref{sec:word-ram} for the formal definition), which considers any operations from the C programming language as constant time, thereby a more practical computational model than RAM \cite{CR73}. The word RAM model is arguably the most widely used computational model for practical algorithms (cf. \cite{Hag98}). In practice, the size $\sigma$ of the alphabet is usually a constant, e.g., $\sigma = 2$ for binary strings, $\sigma = 4$ for DNA sequences, and $\sigma = 26$ for English dictionaries; in this case, the implementation of double-ended eertrees in Theorem \ref{thm:main} achieves worst-case time and space complexity $O\rbra*{1}$ per operation.
    We propose two different methods to implement double-ended eertrees (see Section \ref{sec:tech} for further discussions), with their practical and efficient C/C++ implementations provided in \cite{Wan22}. 
    
    For comparison, we collect known implementations of different types of eertrees in Table \ref{tab:eertrees}.
    If only basic operations \texttt{push\_back} and \texttt{pop\_back} are allowed, the double-ended eertree is called a stack eertree; and if only basic operations \texttt{push\_back} and \texttt{pop\_front} are allowed, the double-ended eertree is called a queue eertree. See Section \ref{sec:other-possible-implementation} for further comparisons and discussions.
    
    \begin{table}[!htp]
    \centering
    \caption{Different types of eertrees.}
    \label{tab:eertrees}
    \begin{tabular}{cc}
    \toprule
    Eertree Type         & Time Complexity Per Operation \\ \midrule
    Stack Eertree \cite{RS18}        & $O\rbra*{\log\rbra*{\sigma}}$                             \\ 
    Queue Eertree \cite{MWN+22}       & $O\rbra*{\log\rbra*{\sigma}}$                             \\ 
    Double-Ended Eertree (This Paper) & $O\rbra*{\log\rbra*{\sigma}}$                             \\ \bottomrule
    \end{tabular}
    \end{table}
    
    \subsection{Our techniques} \label{sec:tech}
    
    We propose a framework of double-ended eertrees (see Section \ref{sec:framework}).
    The difficulty to implement double-ended eertrees lies in two parts, namely, checking the uniqueness of palindromic substrings and maintaining the longest palindromic prefix and suffix. 
    Under this framework, we propose 
    a method to implement double-ended eertrees, called the surface recording method.
    This method is 
    based on a new concept called surface. 
    
    Surfaces are essential palindromic substring occurrences that capture the palindromic structure of the whole string, thereby yielding a linear-time algorithm to maintain double-ended eertrees. To the best of our knowledge, surfaces have not been noticed in the literature (see Section \ref{sec:intro-surface-recording} for details and Section \ref{sec:discussion-surfaces} for more discussions). 

    In Section \ref{sec:intro-occurrence-recording}, we introduce the concept of reduced set of occurrences. Maintaining such sets for all palindromic substrings, one can efficiently decide the uniqueness of a palindrome and find the longest prefix- and suffix-palindromes. Then, in Section \ref{sec:intro-surface-recording}, we describe the unique minimal reduced sets, their relation to surfaces, and their efficient update on deque operations.

    \subsubsection{Reduced sets of occurrences} \label{sec:intro-occurrence-recording}
    
    To overcome the difficulties mentioned above, we first propose a suboptimal algorithm to implement double-ended eertrees with time complexity per operation $O\rbra*{\log\rbra*{\sigma} + \log\rbra*{n}}$, where $n$ is the length of the current string, based on occurrence recording.
    
    For each palindromic substring $t$ of the current string $s$, let $v = \node \rbra*{t}$ denote the node in the eertree that represents string $t$, and also write $\str\rbra*{v} = t$ and $\len\rbra*{v} = \abs*{t}$.
    Let $\occur\rbra*{s, v}$ be the set of all occurrences of $\str\rbra*{v}$ in $s$. 
    If we have access to $\occur\rbra*{s, v}$ for every node $v$, then the uniqueness of palindromic substrings can be easily checked and the longest palindromic suffix and prefix can be maintained. 
    However, the total size of all $\occur\rbra*{s, v}$, i.e., the number of all occurrences over all palindromic substrings of $s$, can be $\Omega\rbra{n^2}$. 
    To reduce the computation, an observation is that an occurrence $i$ of node $v$ implies two occurrences of node $\link\rbra*{v}$, where $\link\rbra*{v}$ is the node of the longest palindromic suffix of $\str\rbra*{v}$: one is $i$ itself, and the other is $i + \len\rbra*{v} - \len\rbra*{\link\rbra*{v}}$ induced by $v$. 
    To this end, our idea is to store any reduced (sub)set $S\rbra*{s, v}$ of $\occur\rbra*{s, v}$ such that 
    \[
    \occur\rbra*{s,v} = S\rbra*{s,v} \cup \bigcup_{\link\rbra*{u} = v} \rbra*{\occur\rbra*{s,u} \cup \overline{\occur}\rbra*{s,u}},
    \]
    where $\overline{\occur}\rbra*{s,v}$ denotes the set of the occurrences induced by $\occur\rbra*{s, v}$. 
    With the help of $S\rbra*{s, v}$, we are able to check the uniqueness of a palindromic substring directly (see Lemma \ref{lemma:unique-by-sv}), as well as maintain the longest palindromic prefix and suffix of $s$ indirectly, under deque operations.
    This part of the method can be seen as a generalization of occurrence recording in queue eertrees \cite{MWN+22}, wherein the last two occurrences are recorded. By contrast, our method records a reduced set of occurrences. 
    
    Specifically, we store two sets $\prenode\rbra*{s, i}$ and $\sufnode\rbra*{s, i}$ related to the reduced sets $S\rbra*{s, v}$ for every $1 \leq i \leq \abs*{s}$.
    Roughly speaking, $\prenode\rbra*{s, i}$ (resp. $\sufnode\rbra*{s, i}$) collects all nodes $v$ with reduced occurrence $i$ (resp. end position $i$), i.e., $i \in S\rbra*{s, v}$ (resp. $i - \len\rbra*{v} + 1 \in S\rbra*{s, v}$).
    It is worth noting that the longest palindromic prefix (resp. suffix) of $s$ is indeed the palindrome represented by the node of the longest length in $\prenode\rbra*{s, 1}$ (resp. $\sufnode\rbra*{s, \abs*{s}}$) (see Lemma  \ref{lemma:prepal-by-prenode}).
    Since the two series of sets $\prenode\rbra*{s, i}$ and $\sufnode\rbra*{s, i}$ can be maintained by balanced binary search trees as the reduced sets $S\rbra*{s, v}$ change, the key is to find any suitable choice of reduced sets $S\rbra*{s, v}$ that can be maintained efficiently. 
    
    A subtle observation yields that a certain construction of the reduced sets $S\rbra*{s, v}$ can be maintained by modifying only amortized $O\rbra*{1}$ elements for each deque operation. 
    With these, we can design an online algorithm to implement double-ended eertrees with time complexity $O\rbra*{\log\rbra*{\sigma} + \log\rbra*{n}}$ per operation, where the term $\log\rbra*{n}$ comes from operations of balanced binary search trees. 
    This running time can be further improved to $O(\log \sigma)$ by an appropriate choice of reduced sets, described in Section \ref{sec:intro-surface-recording}. Accordingly, we put the details of the suboptimal method into Appendix \ref{sec:occurrence-recording-app} for readability.
    
    \subsubsection{Surface recording} \label{sec:intro-surface-recording}
    
    In the occurrence recording method in Section \ref{sec:intro-occurrence-recording}, we find a construction of the reduced sets $S\rbra*{s, v}$ with only $O\rbra*{1}$ amortized modifications per operation, but still need extra $O\rbra*{\log\rbra*{n}}$ binary tree operations to maintain them. Digging further into the reduced sets, we find that they have a unique minimal choice (see Lemma \ref{lemma:relation-sv-surface}). Inspired by this, we identify a class of substring occurrences, called surfaces (see Definition \ref{def:surface}), that can characterize how palindromic substrings are distributed in the string. Surprisingly, the unique minimal choice of the reduced sets consists exclusively of all surfaces.
    
    Roughly speaking, a surface in a string $s$ is a palindromic substring occurrence of $s$ that is not a prefix or suffix of any other palindromic substring occurrences of $s$. The name ``surface'' is chosen to denote not being ``covered'' by any other entities. Intuitively, long palindromes derive shorter ones, but a surface is such a palindrome that is not implied by any other palindromes. For example, in the string $s = abacaba$,  $s\substr{1}{7}$ is a surface but $s\substr{1}{3}$ is not because $s\substr{1}{3}$ is a palindromic proper prefix of, therefore ``covered'' by, $s\substr{1}{7}$. In other words, surface $s\substr{1}{7}$ naturally implies shorter palindromes  $s\substr{1}{3}$ and $s\substr{5}{7}$. 
    See Section \ref{sec:discussion-surfaces} for more discussions about surfaces.
    
    The key of the surface recording method (see Section \ref{sec:surface-recording}) is to maintain all surfaces in the string 
    in an implicit manner. To achieve this, we store two lists $\presurf\rbra*{s, i}$ and $\sufsurf\rbra*{s, i}$
    similar to the sets $\prenode\rbra*{s, i}$ and $\sufnode\rbra*{s, i}$ used in the occurrence recording method. By contrast, $\presurf\rbra*{s, i}$ and $\sufsurf\rbra*{s, i}$ only store (the pointer to the node of) a palindrome. 
    Specifically, $\presurf\rbra*{s, i}$ (resp. $\sufsurf\rbra*{s, i}$) indicates the surface with occurrence $i$ (resp. with end position $i$); or the empty string $\epsilon$, if no such surface exists. 
    It is shown in Observation \ref{lemma:prepal-by-presurf} that the longest palindromic prefix (resp. suffix) of $s$ is the leftmost surface with start position $1$ (resp. the rightmost surface with end position $\abs*{s}$), which is actually $\presurf\rbra*{s, 1}$ (resp. $\sufsurf\rbra*{s, \abs*{s}}$). 
    Therefore, the longest palindromic prefix and suffix are naturally obtained if we maintain $\presurf\rbra*{s, i}$ and $\sufsurf\rbra*{s, i}$. Looking into the relationship between surfaces, we find a way to maintain $\presurf\rbra*{s, i}$ and $\sufsurf\rbra*{s, i}$ by modifying only $O\rbra*{1}$ elements with time complexity $O\rbra*{1}$ per deque operation 
    (see Section \ref{sec:surface-recording-algorithm}).
    
    It remains to check the uniqueness of palindromic substrings. To achieve this, we find an efficient algorithm to maintain the number of occurrences of each existing palindromic substrings of the current string.
    We define $\precnt\rbra*{s, v}$ (resp. $\sufcnt\rbra*{s, v}$) to denote the number of occurrences of node $v$ in string $s$ that are not a prefix (resp. suffix) of a longer palindromic substring. Surprisingly, we find that $\precnt\rbra*{s, v} = \sufcnt\rbra*{s, v}$ always holds (see Lemma \ref{lemma:precnt=sufcnt}), thereby letting $\cnt\rbra*{s, v}$ denote either of them; 
    also, the number of occurrences of every node $v$, i.e., the cardinality of $\occur\rbra*{s, v}$, is the sum of $\cnt\rbra*{s, u}$ over all nodes $u$ in the link tree rooted at node $v$ (see Lemma \ref{lemma:unique-by-cnt}). 
    It is clear that uniqueness (i.e., only one occurrence) can be checked because the number of occurrences can be computed through $\cnt\rbra*{s, v}$. 
    At last, a simple update rule of $\cnt\rbra*{s, v}$ is established for each deque operation with time complexity $O\rbra*{1}$ per operation (see Lemma \ref{lemma:cnt-update}). 
    
    \subsection{Related works}
    
    Approximate palindromes are strings close to a palindrome with gaps or mismatches, which were investigated in a series of works \cite{Gus97,Por99,PB02,HCC10,CHC12,AP14}.
    Formulas for the expected number of gapped palindromes in a random string were presented in \cite{DNP17}.
    As in information processing, the study of palindromes aims to find efficient algorithms for strings concerning their palindromic structures. Trie \cite{DLB59,Fre60} is one of the earliest data structures that can store and look up words in a dictionary, which is widely used in string-searching problems with its extensions developed, including Aho-Corasick automata \cite{AC75,Mey85} and suffix trees \cite{Wei73,McC76,Ukk95,Far97}. Suffix arrays \cite{MM93,GBS92} were introduced to improve the space complexity of suffix trees, and were later improved to linear-time \cite{KSB06,NZC09}. Suffix trees with deletions on one end and insertions on the other end, namely, suffix trees that support queue operations or also called suffix trees in a sliding window, were studied in a series of works \cite{FG89,Lar96,IS11,MHK+20,MKA+20}. 
    
    \subsection{Discussion}
    
    \subsubsection{Surfaces} \label{sec:discussion-surfaces}
    
    A similar notion to surface is the maximal palindromic substring, which has been extensively studied in the literature  \cite{Man75,GS78,Jeu94,ABG95,IIBT10,IIT13}. 
    A substring occurrence $s\substr{i}{j}$ is said to be maximal palindromic, if it is the longest palindromic substring occurrence with center $\frac{i+j}{2}$.
    By contrast, a surface is a palindromic substring occurrence that is not a prefix or suffix of any other palindromic substring occurrences. 
    It can be seen that a maximal palindromic substring occurrence is not necessarily a surface, and vice versa:
    (i) for $s = abaaaba$, $s\substr{1}{3} = s\substr{5}{7} = aba$ are maximal palindromic substring occurrences but not surfaces, and (ii) for $s = abcba$, $s\substr{2}{4} = bcb$ is a surface but not a maximal palindromic substring occurrence.
    From the perspective of eertrees, maximal palindromic substring occurrences characterize the trie-like structure of eertrees while surfaces characterize the link tree of eertrees. 
    
    Another similar notion to surface is the border-maximal palindrome \cite{MWN+22}. A border-maximal palindrome is a palindrome that is not a proper suffix of other palindromic substrings. The difference between surfaces and border-maximal palindromes is that the former is index sensitive but the latter is not. For example, $bab$ is not a border-maximal palindrome in $s = aababbaababab$ because it is a proper suffix of $s\substr{9}{13} = babab$. By contrast, $s\substr{3}{5} = bab$ is a surface in $s$ because it is not contained in any other palindromic substrings as their prefix or suffix. Intuitively, border-maximal palindromes characterize global properties of a string while surfaces catch local properties.
    
    As shown in this paper, we find that surfaces are quite useful in palindrome related problems. 
    We are not aware that the concept of surface has been defined and used elsewhere but we believe that it will bring new insights into string processing. 
    
    \subsubsection{Other possible implementations} \label{sec:other-possible-implementation}
    
    It has been shown in \cite{RS18} that eertree is a very efficient data structure to process palindromes. 
    Theoretically, one might doubt whether it is possible to implement basic operations of double-ended eertrees in other ways, with or without the notion of eertrees.
    
    \paragraph{By suffix trees.}
    Intuitively, one might wonder whether it is possible to adapt suffix tree tricks (cf. \cite{Gus97}) to implementing double-ended eertrees. A common way concerning palindromic substrings of a string $s$, for example, is to build a suffix tree of $s\$s^R$, where $s^R$ is the reverse of $s$ and $\$$ is any character that does not appear in $s$. Then, palindromic substrings of $s$ can be found using longest common extension queries on $s$ and $s^R$. This kind of trick was already used in finding the longest palindrome \cite{Gus97}, counting palindromes \cite{GPR10}, and finding distinct palindromes online \cite{KRS13}. However, in our case, after a $\texttt{push\_back}(c)$ (resp. $\texttt{push\_front}(c)$) operation, the suffix tree of $sc\$cs^R$ (resp. $cs\$s^Rc$) is required; also, after a $\texttt{pop\_back}$ (resp. $\texttt{pop\_front}$) operation, the suffix tree of $s'\$s'^R$ is required, where $s' = s\substr{1}{\abs*{s}-1}$ (resp. $s' = s\substr{2}{\abs*{s}}$). 
    As can be seen, to support deque operations together with queries concerning palindromes on string $s$, we need complicated modifications on suffix trees, such as inserting and removing characters at both ends and certain intermediate positions of a string; however, the state-of-the-art dynamic suffix tree \cite{KK22} only supports these kinds of operations in $\polylog\rbra{n}$ time per operation. 
    Concerning these, it could be difficult to implement a linear-time double-ended eertree directly using suffix trees. 
    
    \paragraph{By other variants of eertrees.}
    
    Since the double-ended eertree can be considered as an extension of stack eertree \cite{RS18} or queue eertree \cite{MWN+22}, it might be possible to implement the double-ended eertree using the tricks in stack eertree and queue eertree at the first glance. Next, we discuss about this issue. 
    \begin{itemize}
        \item The stack eertree proposed in \cite{RS18} depends on the structure of stack. That is, a character will not change before all characters after it are removed. Based on this fact, a backup strategy is to store necessary information, e.g., the node of the current longest suffix palindromic substring, when a character is inserted; and restore the previous configuration of the eertree through the backups when the last character is removed. In the scene of double-ended eertrees, characters at the very front of the string can apparently be removed or inserted, thereby making the backup strategy no longer effective. 
        \item The queue eertree proposed in \cite{MWN+22} requires auxiliary data structures for removing characters, where the key technique is to check whether a palindromic substring is unique in the current string. To achieve this, they maintain the rightmost occurrence and the second rightmost occurrence of every node in the eertree in a lazy manner. When a character is inserted at the back, the occurrences of only the longest palindromic suffix are updated; when a character is removed from the front, we do not have to deal with the occurrences of the longest palindromic prefix, but its longest palindromic proper prefix (by lazy maintenance). This approach, however, is based on the monotonicity of queue operations. That is, the rightmost occurrence can become the second rightmost occurrence, but not vice versa. In the scene of double-ended eertrees, the lazy strategy will fail when characters are removed from the back, because this time the second rightmost occurrence can be the rightmost occurrence.
    \end{itemize}
    
    As discussed, we could not obtain a straightforward implementation of double-ended eertrees from similar or related data structures mentioned above. Nevertheless, we will be glad to see if there exist other efficient implementations of double-ended eertrees.
    
    \subsubsection{Potential applications}
    
    As mentioned in \cite{RS18}, eertrees could have potential in Watson-Crick palindromes \cite{KM10,MMP22} and RNA structures \cite{MP05,Str07,MKB+11,BA15}. In addition, gene editing is an emerging research field in biology (cf. \cite{BGOP18}), e.g. the clustered regularly interspaced short palindrome repeats (CRISPR)-Cas9 system (cf. \cite{BD16,LHCT18}). One might require real-time palindrome-related properties while editing DNA and RNA sequences. We believe that double-ended eertrees could have potential practical applications in such tasks.

    There are several simple applications of our double-ended eertree with range queries by adopting the trick in Mo's algorithm (cf. \cite{DKPW20}): 
    \begin{itemize}
        \item \textsc{Counting Distinct Palindromic Substrings} \cite{RS17}: Find the number of distinct palindromic substring of $s\substr{l}{r}$. 
        \item \textsc{Longest Palindromic Substring} \cite{ACPR20,MMSH23}: Find the longest palindromic substring of $s\substr{l}{r}$.
        \item \textsc{Shortest Unique Palindromic Substring}: Find the shortest unique palindromic substring of $s\substr{l}{r}$. 
        This problem is motivated by molecular biology \cite{KYK+92,YYK+92}. Algorithms about the shortest unique palindromic substring was investigated in a series of works \cite{INM+18,WNI+20,FM21,MF22}.
        \item \textsc{Shortest Absent Palindrome}: Find the shortest absent palindrome of $s\substr{l}{r}$. 
        This is the palindromic version \cite{CMRS00,MRS02,CC12} of finding absent words.
    \end{itemize}
    In addition, our double-ended eertree can be used to the following counting problem:
    \begin{itemize}
        \item \textsc{Counting Rich Strings with Given Word}: Given a string $t$ of length $n$ and a number $k$, count the number of palindromic rich strings $s$ of length $n + k$ such that $t$ is a substring of $s$. This generalizes the problem of counting palindromic rich strings studied in \cite{RS18}.
    \end{itemize}
    The details of these applications can be found in Appendix \ref{sec:app-app}. 
    
    \subsubsection{Future work} \label{sec:conclusion}
    
    In this paper, we propose a linear-time implementation for double-ended eertrees, and provide a practical and efficient implementation for double-ended eertrees. There are several aspects left for future research.
    \begin{itemize}
        \item The concept of surface is useful in our algorithms. It would be interesting to find how surfaces can be used in string processing.
        \item Another direction is to consider the dynamic maintenance of palindromes in a collection of strings under split and concatenation. A possible way could be to combine the dynamic data structures of \cite{AB19} and \cite{GKK+18}.
        \item Recently, palindromes in circles \cite{Sim14} and trees \cite{BLP15,GKRW15,GSS19,FNI+19} have been investigated. It would be interesting to generalize eertrees for palindromes in these special structures. 
        \item Quantum algorithms have been proposed for string problems such as pattern matching \cite{RV03,Mon17,KNW25}, edit distance \cite{BEG+21,GJKT24}, longest common substring \cite{LGS22,AJ23,JN24}, lexicographically minimal string rotation \cite{WY24,AJ23,CKKD+25,Wan25}, and longest distinct substring \cite{ABB+25}. 
        Among them, only a palindrome-related quantum algorithm was proposed in \cite{LGS22}. 
        An interesting direction is to find more quantum algorithms for palindromes.
    \end{itemize}

    \subsection{Recent developments}
    After the work of this paper, the double-ended palindromic tree became testable on the platform Library Checker hosted by Morita \cite{Mor20} in 2024. 
    After that, Kulkov \cite{Kul24} presented a different implementation of the double-ended palindromic tree. 
    
    \subsection{Organization of this paper}
    
    In the rest of this paper, we first introduce preliminaries in Section \ref{sec:preliminaries}. Then, we define a framework of implementing double-ended eertrees in Section \ref{sec:framework}. Under the framework, we propose an occurrence recording method to maintain double-ended eertrees in Section \ref{sec:occurrence-recording}, inspired by which we define the concept of surface in Section \ref{sec:surface}, and then propose a more efficient method called surface recording in Section \ref{sec:surface-recording}. 

    \section{Preliminaries} \label{sec:preliminaries}

    \subsection{Strings}

    Let $\Sigma$ be an alphabet of size $\sigma = \abs{\Sigma}$. A string $s$ of length $n$ over $\Sigma$ is an array $s\sbra*{1} s\sbra*{2} \dots s\sbra*{n}$, where $s\sbra*{i} \in \Sigma$ is the $i$-th character of $s$ for $1 \leq i \leq n$. We write $\abs{s}$ to denote the length of string $s$.
    Let $\Sigma^n$ denote the set of all strings of length $n$ over $\Sigma$, and $\Sigma^*$ denotes the set of all (finite) strings over $\Sigma$. Especially, $\epsilon$ denotes the empty string, i.e., $\abs*{\epsilon} = 0$.
    The concatenation of two strings $s$ and $t$ is denoted by the string $st = s\sbra*{1} s\sbra*{2} \dots s\sbra*{\abs{s}} t\sbra{1} t\sbra{2} \dots t\sbra{\abs{t}}$ of length $\abs{st} = \abs{s} + \abs{t}$.
    The substring occurrence
    $s\substr{i}{j}$ denotes the string $s\sbra*{i} s\sbra*{i+1} \dots s\sbra*{j}$ if $1 \leq i \leq j \leq \abs{s}$, and $\epsilon$ otherwise.
    A string $t$ is called a substring of $s$ if $t = s\substr{i}{j}$ for some $i$ and $j$.
    If a substring of $s$ starts with the first character of $s$, then it is a prefix of $s$; if it ends with the last character of $s$, then it is a suffix of $s$.
    Formally, a substring $t = s\substr{i}{j}$ of string $s$ is called a prefix of $s$ if $i = 1$, and is called a suffix of $s$ if $j = \abs*{s}$.
    In particular, the empty string $\epsilon$ is a substring, prefix and suffix of any strings. 
    A prefix (resp. suffix) $t$ of string $s$ is proper if $t \neq s$.
    A non-empty substring $t$ of $s$ is unique in $s$ if $t$ occurs only once in $s$, i.e., there is only one pair of indices $i$ and $j$ such that $1 \leq i \leq j \leq \abs{s}$ and $s\substr{i}{j} = t$.
    A string $s$ is palindromic (or a palindrome) if $s\sbra*{i} = s\sbra*{\abs*{s}-i+1}$ for all $1 \leq i \leq \abs*{s}$. In particular, the empty string $\epsilon$ is a palindrome.
    The center of a palindromic substring $s\substr{i}{j}$ of $s$ is $\rbra*{i+j}/2$. Then, the center of $s\substr{i}{j}$ is an integer if its length is odd, and the center of $s\substr{i}{j}$ is a half-integer if its length is even.
    A positive integer $p$ is a period of a non-empty string $s$, if $s\sbra*{i} = s\sbra*{i+p}$ for all $1 \leq i \leq \abs*{s}-p$. Note that $\abs*{s}$ is always a period of non-empty string $s$.
    
    The following lemma shows a useful property of the periods of palindromes. 
    \begin{lemma} [Periods of palindromes, Lemma 2 and Lemma 3 in \cite{KRS15}] \label{lemma:period-of-pal}
        Suppose $s$ is a non-empty palindrome and $1 \leq p \leq \abs*{s}$ is an integer. Then $p$ is a period of $s$ if and only if $s\substr{1}{\abs*{s}-p}$ is a palindrome.
    \end{lemma}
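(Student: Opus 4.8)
The plan is to prove both implications by direct index manipulation, exploiting the symmetry of $s$ as a palindrome together with the respective hypothesis. Throughout, write $n = \abs*{s}$ and recall that the palindrome condition on $s$ gives $s\sbra*{j} = s\sbra*{n-j+1}$ for every $1 \leq j \leq n$. Observe that the prefix $t = s\substr{1}{n-p}$ has length $n-p$, and is a palindrome precisely when $s\sbra*{i} = s\sbra*{n-p-i+1}$ holds for all $1 \leq i \leq n-p$; likewise, $p$ is a period of $s$ precisely when $s\sbra*{i} = s\sbra*{i+p}$ holds for all $1 \leq i \leq n-p$. The key algebraic observation is that reflecting the index $i+p$ through the center of $s$ yields $n - \rbra*{i+p} + 1 = \rbra*{n-p} - i + 1$, which is exactly the reflection of $i$ through the center of the prefix $t$. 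Thus the shift-by-$p$ relation and the prefix-reflection relation are conjugate via the global reflection of $s$, and each direction of the equivalence amounts to composing two of these three symmetries to recover the third.

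For the forward direction, assuming $p$ is a period, I would fix $1 \leq i \leq n-p$ and compute $s\sbra*{n-p-i+1} = s\sbra*{p+i}$ using the palindrome symmetry of $s$ (since $n-\rbra*{n-p-i+1}+1 = p+i$), and then $s\sbra*{p+i} = s\sbra*{i}$ by the period hypothesis, which is valid in the stated range. This gives $s\sbra*{i} = s\sbra*{n-p-i+1}$, i.e., $t$ is a palindrome. For the converse, assuming $t$ is a palindrome, I would similarly fix $1 \leq i \leq n-p$ and compute $s\sbra*{i+p} = s\sbra*{n-p-i+1}$ by the palindrome symmetry of $s$, and then $s\sbra*{n-p-i+1} = s\sbra*{i}$ by the palindrome hypothesis on $t$, yielding $s\sbra*{i} = s\sbra*{i+p}$, i.e., $p$ is a period.

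Finally I would dispose of the boundary case $p = n$ (either separately or by noting it is subsumed): here $t = \epsilon$ is a palindrome by convention, and $n$ is always a period of $s$, so both sides of the equivalence trivially hold. Since the argument reduces to two short index computations, I do not anticipate any genuine obstacle; the only point demanding care is keeping the index ranges consistent, so that each application of the period hypothesis or the prefix-palindrome hypothesis stays within bounds $1 \leq i \leq n-p$.
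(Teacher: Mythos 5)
Your proof is correct, and every index computation in it checks out. One important point of comparison: the paper gives no proof of this lemma at all --- it is imported verbatim from \cite{KRS15} (Lemma 2 and Lemma 3 there) --- so your argument is not competing with an in-paper proof but rather supplying a self-contained one for a cited fact. Your derivation is the standard elementary one: the reflection identity $n - \rbra*{i+p} + 1 = \rbra*{n-p} - i + 1$ is precisely what makes the shift-by-$p$ condition and the prefix-reflection condition conjugate under the global symmetry of $s$, and in both directions you apply each hypothesis only at indices inside the legal range $1 \leq i \leq n-p$. If you want a more compact packaging of the same idea, note that since $s$ is a palindrome, the suffix $s\substr{p+1}{n}$ is the reversal of the prefix $s\substr{1}{n-p}$; hence $p$ is a period of $s$ (i.e., $s\substr{1}{n-p} = s\substr{p+1}{n}$) if and only if $s\substr{1}{n-p}$ equals its own reversal, i.e., is a palindrome --- this collapses your two symmetric computations into one observation. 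Your disposal of the boundary case $p = n$ is also consistent with the paper's conventions ($\epsilon$ is a palindrome, and $\abs*{s}$ is always a period), though as you note it is already subsumed vacuously by the general argument.
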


    Suppose $s$ is a string and $1 \leq i \leq \abs*{s}$. 
    Let $\prepal\rbra*{s, i}$ (resp. $\sufpal\rbra*{s, i}$) denote the longest prefix (resp. suffix) palindromic substring of $s\substr{i}{\abs*{s}}$ (resp. $s\substr{1}{i}$), and $\prelen\rbra*{s, i}$ (resp. $\suflen\rbra*{s, i}$) denote its length. Formally,
    \begin{align*}
        \prepal\rbra*{s, i} & = s\substr{i}{i+\prelen\rbra*{s, i}-1}, \\
        \sufpal\rbra*{s, i} & = s\substr{i-\suflen\rbra*{s, i}+1}{i},
    \end{align*}
    where
    \begin{align*}
        \prelen\rbra*{s, i} & = \max \set{1 \leq l \leq \abs{s} - i + 1}{s\substr{i}{i+l-1} \text{ is a palindrome}}, \\
        \suflen\rbra*{s, i} & = \max \set{1 \leq l \leq i}{s\substr{i-l+1}{i} \text{ is a palindrome}}.
    \end{align*}
    
    \subsection{Word RAM} \label{sec:word-ram}
    
    The word RAM (random access machine) \cite{FW90} is an extended computational model of RAM \cite{CR73}. In the word RAM model, all data are stored in the memory as an array $A$ of $w$-bit words. Here, a $w$-bit word means an integer between $0$ and $2^w-1$ (inclusive), which can be represented by a binary string of length $w$. The instruction set of the word RAM is an analog of that of the RAM, with indirect addressing the only exception. In the word RAM, the value of $A\sbra*{i}$ can only store limited addresses from $0$ to $2^w-1$, therefore we only have access to $2^w$ addresses. The input of size $n$ is stored in the first $n$ elements of $A$ initially. In our case, we are only interested in algorithms with space complexity polynomial in $n$, thereby assuming that $w = \Omega\rbra*{\log n}$ such that $w$ is large enough.
    
    For our purpose, our algorithms work in the word RAM under the constant cost criterion. That is, the cost of each instruction of the word RAM is a constant, i.e., $O\rbra*{1}$. In our algorithms, we only require the instruction set of the word RAM to contain basic arithmetic operations (addition and subtraction) but not multiplication, division or bit operations. 

    \subsection{Eertrees} \label{sec:eertree}
    
\newcommand{\prev}{\operatorname{prev}}

    The eertree of a string $s$ is a data structure that stores the information of all palindromic substrings of $s$ in a efficient way \cite{RS18}. Roughly speaking, the eertree is a finite-state machine that resembles trie-like trees with additional links between internal nodes, which is similar to, for example, Aho-Corasick automata \cite{AC75}. Formally, the eertree of a string $s$ is a tuple
    \[
        \eertree\rbra*{s} = \rbra*{ V, \even, \odd, \next, \link },
    \]
    where
    \begin{enumerate}
      \item $V$ is a finite set of nodes, each of which is used to represent a palindromic substring of $s$.
      \item $\even, \odd \in V$ are two special nodes, indicating the roots of palindromes of even and odd lengths, respectively.
      \item $\next \colon V \times \Sigma \to V \cup \cbra*{\nullptr}$ describes the tree structure of $\eertree\rbra*{s}$. Specifically, for every node $v \in V$ and character $c \in \Sigma$, $\next\rbra*{v, c}$ indicates the outgoing edge of $v$ labeled by $c$. In case of $\next\rbra*{v, c} = \nullptr$, it means that there is no outgoing edge of $v$ labeled by $c$. If $\next\rbra*{v, c} = u$, we write $\prev\rbra*{u} = v$ to denote the node $v$ that points to $u$. It is guaranteed that all of the outgoing edges together form a forest consisting of two trees, whose roots are $\even$ and $\odd$, respectively.
      \item $\link \colon V \to V$ is the suffix link.\footnote{The terminology ``suffix link'' is also used in suffix trees. Here, we use the same terminology ``suffix link'' in eertrees as they have similar meanings.}
    \end{enumerate}

    Let $\str\rbra*{v}$ denote the palindromic string that is represented by node $v$. For every node $v \in V$ and character $c \in \Sigma$ such that $\next(v, c) \neq \nullptr$, define $\str\rbra*{\next(v, c)} = c \str\rbra*{v} c$. Especially, $\str\rbra*{\even} = \epsilon$ is the empty string and $\str\rbra*{\odd} = \epsilon_{-1}$ is the empty string of length $-1$,\footnote{Here, the empty string $\epsilon_{-1}$ of negative length $-1$ is non-standard. This is for convenience of notations and definitions, and it is only used for defining a palindromic substring of an odd length that an eertree node represents.}
    i.e., $\abs*{\epsilon_{-1}} = -1$. Here, $c \epsilon_{-1} c = c$ for every character $c \in \Sigma$. We write $\len\rbra*{v} = \abs*{\str\rbra*{v}}$ to denote the length of the palindrome represented by node $v$. Especially, $\len\rbra*{\even} = 0$ and $\len\rbra*{\odd} = -1$. For every node $v \in V$ and character $c \in \Sigma$ such that $\next(v, c) \neq \nullptr$, we have $\len\rbra*{\next\rbra*{v, c}} = \len\rbra*{v} + 2$. For convenience, for every palindromic substring $t$ of $s$, let $\node\rbra*{t} \in V$ be the node in $\eertree\rbra*{s}$ such that $\str\rbra*{\node\rbra*{t}} = t$. For every node $v \in V \setminus \cbra*{\even, \odd}$, let $\occur\rbra*{s, v}$ be the set of start position of occurrences of $\str\rbra*{v}$ in $s$, i.e.,
    \begin{equation} \label{eq:def-occur}
        \occur\rbra*{s, v} = \set{1 \leq i \leq \abs{s} - \len\rbra*{v} + 1}{s\substr{i}{i+\len\rbra*{v}-1} = \str\rbra*{v}},
    \end{equation}
    and it holds that $\occur\rbra*{s, v} \neq \emptyset$.
    It is obvious that a palindromic substring $t$ of $s$ is unique in $s$ if and only if $\abs*{\occur\rbra*{s, \node\rbra*{t}}} = 1$. For an occurrence $i \in \occur\rbra*{s, v}$ of $\str\rbra*{v}$ in $s$, we also call $i$ the start position and $i + \len\rbra*{v} - 1$ the end position of this occurrence.

    For every node $v \in V$, $\link\rbra*{v}$ is the node of the longest proper palindromic suffix of $\str\rbra*{v}$, i.e.,
    \[
        \link\rbra*{v} = \argmax_{u \in V} \set{\len\rbra*{u}}{ \text{$\str\rbra*{u}$ is a proper palindromic suffix of $\str\rbra*{v}$} }.
    \]
    In particular, $\link\rbra*{\even} = \link\rbra*{\odd} = \odd$. 
    Indeed, $\link \colon V \to V$ defines a rooted tree $T = \rbra*{V, E}$ called the link tree with $\odd$ being the root, where
    \[
        E = \set{\rbra*{\link\rbra*{v}, v}}{v \in V \setminus \cbra*{\odd}}. 
    \]
    For every $v \in V$, let $T_v = \rbra*{V_v, E_v}$ denote the subtree of node $v$ in the link tree $T$, i.e., 
    \begin{align*}
        V_v & = \set{u \in V}{v \in \link^*\rbra*{u}}, \\
        E_v & = \set{\rbra*{a, b} \in E}{\cbra*{a, b} \subseteq V_v},
    \end{align*}
    where $\link^*\rbra{v} = \set{\link^k\rbra{v}}{k \in \mathbb{N}} = \cbra{v, \link\rbra{v}, \link^2\rbra{v}, \dots}$, $\link^k\rbra{v} = \link\rbra{\link^{k-1}\rbra{v}}$ for $k \geq 1$, and $\link^0\rbra*{v} = v$.
    
    For space efficiency, we are only interested in the minimal eertree of a string $s$. That is, there is no redundant node in the eertree with respect to string $s$. A node $v$ in the eertree is redundant with respect to a string $s$ if $\occur\rbra*{s, v} = \emptyset$. Throughout this paper, we use $\eertree\rbra*{s}$ to denote the minimal eertree of string $s$. 
    For every node $v \in V$ in $\eertree\rbra*{s}$, let $\linkcnt\rbra*{s,v}$ be the number of nodes which link to $v$, i.e.,
    \begin{equation} \label{eq:def-linkcnt}
        \linkcnt\rbra*{s,v} = \abs*{\set{u \in V}{\link\rbra*{u} = v}}.
    \end{equation}
    Note that node $u$ in Equation~(\ref{eq:def-linkcnt}) ranges over all nodes that are not redundant with respect to $s$.
    
    We first show a relationship between every node and its ancestors in the link tree.
    \begin{lemma} \label{lemma:occur-link}
        Suppose $s$ is a string and $u$ is a node in $\eertree\rbra*{s}$. For every $v \in \link^*\rbra*{u}$ with $\len\rbra*{v} \geq 1$, we have $\occur\rbra*{s,u} \cup \rbra*{\occur\rbra*{s,u} + \len\rbra*{u} - \len\rbra*{v}} \subseteq \occur\rbra*{s,v}$,
        where $\occur\rbra*{s,u} + \len\rbra*{u} - \len\rbra*{v} = \set{i + \len\rbra*{u} - \len\rbra*{v}}{i \in \occur\rbra*{s,u}}$.
    \end{lemma}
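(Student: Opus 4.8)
The plan is to exploit the fact that an ancestor in the link tree corresponds to a palindromic suffix, and that a palindromic suffix of a palindrome is automatically a prefix as well. First I would unpack the hypothesis $v \in \link^*\rbra*{u}$: by the definition of the suffix link, this forces $\str\rbra*{v}$ to be a (not necessarily proper) palindromic suffix of $\str\rbra*{u}$, so in particular $\len\rbra*{v} \leq \len\rbra*{u}$ and the shift $\len\rbra*{u} - \len\rbra*{v} \geq 0$ is well-defined. Writing $\str\rbra*{u} = w\,\str\rbra*{v}$ with $\abs*{w} = \len\rbra*{u} - \len\rbra*{v}$ and taking reverses, the palindromicity of both $\str\rbra*{u}$ and $\str\rbra*{v}$ yields $\str\rbra*{u} = \str\rbra*{v}\,w^R$; hence $\str\rbra*{v}$ is simultaneously a prefix and a suffix of $\str\rbra*{u}$. (The hypothesis $\len\rbra*{v} \geq 1$ merely guarantees $v \notin \cbra*{\even, \odd}$, so that $\occur\rbra*{s, v}$ is defined; note it also forces $\len\rbra*{u} \geq 1$.)

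Next I would fix an arbitrary occurrence $i \in \occur\rbra*{s, u}$, so that $s\substr{i}{i+\len\rbra*{u}-1} = \str\rbra*{u}$. Reading off the length-$\len\rbra*{v}$ prefix $\str\rbra*{v}$ gives $s\substr{i}{i+\len\rbra*{v}-1} = \str\rbra*{v}$, whence $i \in \occur\rbra*{s, v}$; this establishes $\occur\rbra*{s, u} \subseteq \occur\rbra*{s, v}$. Reading off the suffix copy of $\str\rbra*{v}$, which begins at offset $\len\rbra*{u} - \len\rbra*{v}$ inside the occurrence, gives $s\substr{i+\len\rbra*{u}-\len\rbra*{v}}{i+\len\rbra*{u}-1} = \str\rbra*{v}$, whence $i + \len\rbra*{u} - \len\rbra*{v} \in \occur\rbra*{s, v}$; this establishes $\occur\rbra*{s, u} + \len\rbra*{u} - \len\rbra*{v} \subseteq \occur\rbra*{s, v}$. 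The union of the two inclusions is exactly the claim.

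The only point requiring care is the index bookkeeping for the suffix copy---confirming that the shifted start position $i + \len\rbra*{u} - \len\rbra*{v}$ paired with length $\len\rbra*{v}$ ends precisely at $i + \len\rbra*{u} - 1$, the end of the original occurrence---but this is a one-line computation. I do not expect any substantive obstacle: the lemma is in essence the observation that every palindromic suffix of a palindrome occurs twice inside each occurrence of that palindrome, once aligned with its front and once with its back.
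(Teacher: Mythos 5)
Your proposal is correct and follows essentially the same route as the paper's proof: both take an arbitrary occurrence $i \in \occur\rbra*{s,u}$, observe that $\str\rbra*{v}$ is a palindromic suffix of the palindrome $\str\rbra*{u}$ and hence also a prefix of it, and read off the two induced occurrences of $\str\rbra*{v}$ at positions $i$ and $i + \len\rbra*{u} - \len\rbra*{v}$. The only difference is that you spell out the reversal argument showing a palindromic suffix of a palindrome is also a prefix, which the paper asserts without elaboration.
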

    \begin{proof}
        Let $i \in \occur\rbra*{s,u}$. By the definition, $s\substr{i}{i+\len\rbra*{u}-1} = \str\rbra*{u}$ is a palindrome. Note that $\str\rbra*{v}$ is a palindromic prefix (and also suffix) of $\str\rbra*{u}$. It follows that $s\substr{i}{i+\len\rbra*{v}-1} = s\substr{i+\len\rbra*{u}-\len\rbra*{v}}{i+\len\rbra*{u}-1} = \str\rbra*{v}$ are two occurrences of $v$, i.e., $\cbra*{i, i+\len\rbra*{u}-\len\rbra*{v}} \subseteq \occur\rbra*{s,v}$. 
    \end{proof}
    
    Inspired by Lemma \ref{lemma:occur-link}, we realize that $\occur\rbra*{s,v}$ actually implies certain occurrences of $\link\rbra*{v}$ for every node $v \in V$, provided that $\len\rbra*{\link\rbra*{v}} \geq 1$. For this reason, we define the complement of $\occur\rbra*{s,v}$ by
    \begin{equation} \label{eq:def-overline-occur}
        \overline{\occur}\rbra*{s,v} = \set{ i+\len\rbra*{v}-\len\rbra*{\link\rbra*{v}} }{i \in \occur\rbra*{s,v}}.
    \end{equation}
    Intuitively, the complement $\overline{\occur}\rbra*{s,v}$ of $\occur\rbra*{s,v}$ contains occurrences of $\str\rbra*{\link\rbra*{v}}$ which are implied by and symmetric to those occurrences in $\occur\rbra*{s,v}$ with respect to the center of $\str\rbra*{v}$ as shown in the following corollary.
    
    \begin{corollary} \label{corollary:occur}
        Suppose $s$ is a string and $u$ is a node in $\eertree\rbra*{s}$. If $\len\rbra*{\link\rbra*{u}} \geq 1$, then $\occur\rbra*{s,u} \cup \overline{\occur}\rbra*{s,u} \subseteq \occur\rbra*{s,\link\rbra*{u}}$.
    \end{corollary}
    
    \subsubsection{Stack eertrees}
    
\newcommand{\directlink} {\mathit{dlink}}
    
    Eertrees that support stack operations, which we call stack eertrees, were studied in \cite{RS18}, with an efficient trick, called the direct link, proposed for adding a new character to the end of the string. The direct link is an auxiliary information that allows us to find the longest suffix palindromic substring in $O\rbra*{1}$ time. Formally, the direct link $\directlink \colon V \times \Sigma \to V$ is defined by
    \[
        \directlink\rbra*{v, c} = \argmax_{u \in \link^*\rbra{v} \setminus \cbra*{\odd}} \set{\len\rbra*{u}}{ \str\rbra*{v}\sbra*{\len\rbra*{v} - \len\rbra*{u}} = c },
    \]
    and $\directlink\rbra*{v, c} = \odd$ if no $u$ satisfies the conditions. Proposition \ref{prop:lsp-by-dlink} shows that the longest suffix palindromic substring can be found by using direct links, and that direct links can be maintained efficiently.
    
    \begin{proposition} [Longest suffix palindromes using direct links \cite{RS18}] \label{prop:lsp-by-dlink}
    Suppose $s$ is a string and $c$ is a character. We have
    \[
        \suflen\rbra*{sc, \abs*{sc}} = \len\rbra*{ \directlink\rbra*{ \node\rbra*{\sufpal\rbra*{s, \abs*{s}}}, c } } + 2.
    \]
    \end{proposition}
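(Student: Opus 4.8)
The plan is to reduce the identity to a combinatorial characterization of the longest palindromic suffix of $sc$ and then match that characterization against the defining condition of the direct link. Write $v = \node\rbra*{\sufpal\rbra*{s, \abs*{s}}}$, so that $\str\rbra*{v}$ is the longest palindromic suffix of $s$. Since $\sufpal\rbra*{s, \abs*{s}}$ is the longest palindromic suffix, every palindromic suffix of $s$ is a suffix of $\str\rbra*{v}$, and iterating the suffix link shows that the palindromic suffixes of $\str\rbra*{v}$ are exactly $\set{\str\rbra*{u}}{u \in \link^*\rbra*{v}}$. This bijection between palindromic suffixes of $s$ and the chain $\link^*\rbra*{v}$ is the backbone of the argument.

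First I would characterize $\suflen\rbra*{sc, \abs*{sc}}$. Any palindromic suffix $P$ of $sc$ with $\abs*{P} \geq 1$ ends in $c$, hence (being a palindrome) begins with $c$; if $\abs*{P} = 1$ then $P = c$, and if $\abs*{P} \geq 2$ then $P = c Q c$ where $Q = P\substr{2}{\abs*{P}-1}$ is a palindrome and $cQ$ is a suffix of $s$. Thus $Q$ is precisely a palindromic suffix of $s$ that is immediately preceded by the character $c$ in $s$; conversely, every such $Q$ yields the palindromic suffix $cQc$ of $sc$. Maximizing $\abs*{P}$ therefore amounts to maximizing $\abs*{Q}$ over palindromic suffixes $Q$ of $s$ preceded by $c$, with the degenerate value $Q = \epsilon_{-1}$ (length $-1$) covering the case $P = c$ when no genuine palindromic suffix is preceded by $c$. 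Writing $u^\ast$ for the node of the longest such $Q$ (and $u^\ast = \odd$ in the degenerate case), this step gives $\suflen\rbra*{sc, \abs*{sc}} = \len\rbra*{u^\ast} + 2$.

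It then remains to identify $u^\ast$ with $\directlink\rbra*{v, c}$. I would argue that for $u \in \link^*\rbra*{v} \setminus \cbra*{\odd}$ the defining condition $\str\rbra*{v}\sbra*{\len\rbra*{v} - \len\rbra*{u}} = c$ expresses exactly that the occurrence of $\str\rbra*{u}$ as the length-$\len\rbra*{u}$ suffix of $s$ is immediately preceded by $c$: because $\str\rbra*{v}$ is the length-$\len\rbra*{v}$ suffix of $s$ and $\str\rbra*{u}$ its length-$\len\rbra*{u}$ suffix, the character of $s$ just before this occurrence is $s\sbra*{\abs*{s} - \len\rbra*{u}}$, which coincides with $\str\rbra*{v}\sbra*{\len\rbra*{v} - \len\rbra*{u}}$ whenever $\len\rbra*{u} < \len\rbra*{v}$. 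Hence the qualifying nodes for the $\argmax$ in $\directlink\rbra*{v,c}$ are exactly the nodes of palindromic suffixes of $s$ preceded by $c$, and taking the one of maximal length gives $\directlink\rbra*{v,c} = u^\ast$ (both falling back to $\odd$ when the qualifying set is empty). Combining with the previous step yields $\suflen\rbra*{sc, \abs*{sc}} = \len\rbra*{u^\ast} + 2 = \len\rbra*{\directlink\rbra*{v,c}} + 2$.

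The main obstacle is the boundary case $u = v$, i.e., when the whole current longest palindromic suffix $\str\rbra*{v}$ extends to $c\,\str\rbra*{v}\,c$ as a suffix of $sc$. Here the index $\len\rbra*{v} - \len\rbra*{u}$ equals $0$ and falls just outside $\str\rbra*{v}$, so the condition must be read through the occurrence of $\str\rbra*{v}$ in $s$ as ``$s\sbra*{\abs*{s} - \len\rbra*{v}} = c$'' (and treated as false when $\str\rbra*{v} = s$, where no preceding character exists). Making this boundary reading consistent with the $\str\rbra*{v}\sbra*{\cdot}$ notation, and checking the small edge cases --- $s$ ending in $c$ (giving the suffix $cc$ via the node $\even$), no palindromic suffix preceded by $c$ (giving the single character $c$ via $\odd$, matching $\len\rbra*{\odd} + 2 = 1$), and $s = \epsilon$ --- is the only delicate part; the rest is a direct translation between palindromic suffixes of $sc$ and palindromic suffixes of $s$ preceded by $c$.
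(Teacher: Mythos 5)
Your proposal cannot be compared against a proof in the paper, because the paper gives none: Proposition \ref{prop:lsp-by-dlink} is stated as a citation to \cite{RS18} and used as a black box. Judged on its own, your reconstruction is sound and follows the natural route: the bijection $P = cQc$ between palindromic suffixes $P$ of $sc$ with $\abs*{P} \geq 2$ and palindromic suffixes $Q$ of $s$ whose suffix occurrence is preceded by $c$ (with $\odd$, of length $-1$, absorbing the case $P = c$); the identification of the palindromic suffixes of $s$ with $\set{\str\rbra*{u}}{u \in \link^*\rbra*{v}}$ for $v = \node\rbra*{\sufpal\rbra*{s,\abs*{s}}}$; and the index translation $\str\rbra*{v}\sbra*{\len\rbra*{v}-\len\rbra*{u}} = s\sbra*{\abs*{s}-\len\rbra*{u}}$, which is valid precisely when $\len\rbra*{u} < \len\rbra*{v}$.

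The boundary case you flag deserves emphasis, because it is not merely ``delicate'': your reinterpretation there is \emph{necessary} for the statement to be true at all. Under the paper's literal definition, $u = v$ can never satisfy $\str\rbra*{v}\sbra*{\len\rbra*{v}-\len\rbra*{u}} = c$, since the index is $0$ and $\str\rbra*{v}$ is indexed from $1$; hence $\directlink\rbra*{v,c}$ only ever ranges over proper palindromic suffixes of $\str\rbra*{v}$, and the identity fails whenever the entire longest palindromic suffix of $s$ extends. Concretely, take $s = ab$ and $c = a$: then $\sufpal\rbra*{s,2} = b$ and $v = \node\rbra*{b}$; neither $u = v$ (index $0$) nor $u = \even$ (which requires $\str\rbra*{v}\sbra*{1} = b$ to equal $a$) qualifies, so $\directlink\rbra*{v,a} = \odd$ and the right-hand side is $1$, while $sc = aba$ is a palindrome and the left-hand side is $3$. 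Your reading --- for $u = v$, test $s\sbra*{\abs*{s}-\len\rbra*{v}} = c$, declared false when $\len\rbra*{v} = \abs*{s}$ --- is exactly the convention under which the proposition holds, and it matches how the construction of \cite{RS18} actually proceeds (the full-extension case is tested against $s$ separately, and the direct link is consulted only when that test fails). So your proof is correct relative to the intended reading, and as a byproduct it exposes a genuine imprecision in the paper's formulation of $\directlink$ that the paper itself, lacking any proof, never confronts.
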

    
    \begin{lemma} [Efficient direct links \cite{RS18}] \label{lemma:dlink}
        The arrays $\directlink\rbra*{v, \cdot}$ for all nodes $v$ can be stored in a persistent binary search tree such that the array $\directlink\rbra*{v, \cdot}$ for a new node $v$ can be created from $\directlink\rbra*{\prev\rbra*{v}, \cdot}$ in $O\rbra*{\log\rbra*{\sigma}}$ time and space, where $\sigma$ is the size of the alphabet.
    \end{lemma}
    
    With this efficient approach to maintain direct links, Rubinchik and Shur \cite{RS18} developed an algorithm that supports efficient online stack operations on eertrees.

    \begin{theorem} [Stack operations for eertrees \cite{RS18}]
        Stack eertrees can be implemented with time and space complexity per operation  $O\rbra*{\log\rbra*{\sigma}}$, where $\sigma$ is the size of the alphabet.
    \end{theorem}
    
    \subsubsection{Queue eertrees}
    
    Recently, eertrees for a sliding window were studied in \cite{MWN+22}. From the perspective of data structures, they support queue operations on eertrees, which we call queue eertrees. The main difficulty met in queue operations on eertrees is to check the uniqueness of the longest palindromic prefix of a string. This is because when the front character of the string is being removed, the longest palindromic prefix of the string should be deleted from the eertree if it is unique (i.e., appears only once). This difficulty was resolved in \cite{MWN+22} by maintaining the second rightmost occurrence of every palindromic substrings. To check whether a palindromic substring is unique, it is sufficient to check whether its second rightmost occurrence exists. Based on this observation, Mieno et al. \cite{MWN+22} proposed an algorithm that supports efficient online queue operations on eertrees. 

    \begin{theorem} [Queue operations for eertrees \cite{MWN+22}]
        Queue eertrees can be implemented with time and space complexity per operation  $O\rbra*{\log\rbra*{\sigma}}$, where $\sigma$ is the size of the alphabet.
    \end{theorem}
    
    \section{Framework of Double-Ended Eertrees} \label{sec:framework}
    
\newcommand{\stpos} {\textup{\texttt{start\_pos}}}
\newcommand{\edpos} {\textup{\texttt{end\_pos}}}
\newcommand{\data} {\textup{\texttt{data}}}
    
    For readability, we first illustrate the framework of double-ended eertrees in this section. Throughout this paper, three kinds of font types are used in our algorithms:
    \begin{enumerate}
        \item Typewriter font: Variables in typewriter font are being maintained in the algorithms, e.g., $\stpos$, $\edpos$, $\texttt{S}$, $\texttt{prenode}$, $\texttt{sufnode}$, etc. Moreover, after a variable of typewriter font, we write its index inside a squared bracket, e.g., $\texttt{S}\sbra*{v}$, $\texttt{data}\sbra*{\stpos}$, etc.
        \item Math italic: Variables in math italics are mathematical concepts defined independent of the execution of the algorithms, e.g., $\prepal$, $\sufpal$, etc.
        \item Roman font: Variables in Roman font are simple notions clear in the context and easy to maintain (thus we need not take much attention on their maintenance), though may depend on the execution of the algorithms, e.g., $\len$, $\link$, $\linkcnt$, etc.
    \end{enumerate}
    
    \subsection{Global index system} \label{sec:global-index}

    We need some auxiliary information that helps to maintain the whole data structure. In order to indicate indices of the strings that once occur during the operations easily and clearly, we use a range $\left[ \stpos , \edpos-1 \right]$ to store a string $s$ of length $\abs*{s} = \edpos - \stpos$ with global indices $\stpos$ and $\edpos$, where the $i$-th character of $s$ is associated with the index $\stpos + i - 1$ for $1 \leq i \leq \abs*{s}$ during the execution of the algorithm. To this end, we introduce an array $\data$ such that $\data\sbra*{\stpos + i - 1} = s\sbra*{i}$ holds during the algorithm.
    Every time an element is pushed into (resp. popped from) the front, we set $\stpos \gets \stpos - 1$ (resp. $\stpos \gets \stpos + 1$); and if it is pushed into (resp. popped from) the back, we set $\edpos \gets \edpos + 1$ (resp. $\edpos \gets \edpos - 1$). Initially, $s = \epsilon$ is the empty string, and we set $\stpos = \edpos = 0$.
    
    \subsection{\texttt{push\_back} and \texttt{push\_front}} \label{sec:framework-push}
    
    The overall idea to push a new character $c$ into the front (resp. back) of the string $s$ is straightforward (see Algorithm \ref{algo:framework:push_back} and Algorithm \ref{algo:framework:push_front}). Let $t$ be the longest palindromic suffix (resp. the longest palindromic prefix) of $s$. Then $\eertree\rbra*{sc}$ (resp. $\eertree\rbra*{cs}$) can be obtained from $\eertree\rbra*{s}$ with the help of $t$, which can be done by the original incremental construction of eertrees given in \cite{RS18} (see Appendix \ref{app:incremental-eertree} for more details). 
    
    \begin{algorithm}[!htp]
        \caption{The framework of $\texttt{push\_back}\rbra*{c}$}
        \label{algo:framework:push_back}
        \begin{algorithmic}[1]

        \State $s \gets \data\substr{\stpos}{\edpos-1}$.
        \State $t \gets \sufpal\rbra*{s, \abs*{s}}$.
        \State Obtain $\eertree\rbra*{sc}$ from $\eertree\rbra*{s}$ with the help of $t$.
        \State $\data\sbra*{\edpos} \gets c$.
        \State $\edpos \gets \edpos + 1$.
        \end{algorithmic}
    \end{algorithm}
    
    \begin{algorithm}[!htp]
        \caption{The framework of $\texttt{push\_front}\rbra*{c}$}
        \label{algo:framework:push_front}
        \begin{algorithmic}[1]

        \State $s \gets \data\substr{\stpos}{\edpos-1}$.
        \State $t \gets \prepal\rbra*{s, 1}$.
        \State Obtain $\eertree\rbra*{cs}$ from $\eertree\rbra*{s}$ with the help of $t$.
        \State $\stpos \gets \stpos - 1$.
        \State $\data\sbra*{\stpos} \gets c$.
        \end{algorithmic}
    \end{algorithm}
    
    \subsection{\texttt{pop\_back} and \texttt{pop\_front}} \label{sec:framework-pop}
    
    When dealing with pop operations, we need to delete the nodes in the eertree that are no longer involved (see Algorithm \ref{algo:framework:pop_back} and Algorithm \ref{algo:framework:pop_front}). Let $t$ be the longest palindromic suffix (resp. prefix) of $s$. $\eertree\rbra*{s\substr{1}{\abs*{s}-1}}$ (resp. $\eertree\rbra*{s\substr{2}{\abs*{s}}}$) can be obtained from $\eertree\rbra*{s}$, with $\node\rbra*{t}$ deleted if $t$ is unique in $s$.
    
    \begin{algorithm}[!htp]
        \caption{The framework of $\texttt{pop\_back}\rbra*{}$}
        \label{algo:framework:pop_back}
        \begin{algorithmic}[1]

        \State $s \gets \data\substr{\stpos}{\edpos-1}$.
        \State $t \gets \sufpal\rbra*{s, \abs*{s}}$.
        \If {$t$ is unique in $s$}
            \State Delete $\node\rbra*{t}$ from the eertree.
        \EndIf
        \State $\edpos \gets \edpos - 1$.
        \end{algorithmic}
    \end{algorithm}
    
    \begin{algorithm}[!htp]
        \caption{The framework of $\texttt{pop\_front}\rbra*{}$}
        \label{algo:framework:pop_front}
        \begin{algorithmic}[1]

        \State $s \gets \data\substr{\stpos}{\edpos-1}$.
        \State $t \gets \prepal\rbra*{s, 1}$.
        \If {$t$ is unique in $s$}
            \State Delete $\node\rbra*{t}$ from the eertree.
        \EndIf
        \State $\stpos \gets \stpos + 1$.
        \end{algorithmic}
    \end{algorithm}
    
    \subsection{Main difficulties}
    
    As discussed in Section \ref{sec:framework-push} and Section \ref{sec:framework-pop}, we see that there are two major difficulties that we must face:
    \begin{enumerate}
        \item An efficient maintenance of $\sufpal\rbra*{s, \abs*{s}}$ and $\prepal\rbra*{s, 1}$ is needed in all deque operations.
        \item An efficient approach for checking the uniqueness of a palindromic prefix or suffix is needed in all pop operations. 
    \end{enumerate}
    We will provide a solution to the above difficulties in Section \ref{sec:surface-recording}.
    
    \section{Occurrence Recording Method and Surfaces} \label{sec:occurrence-recording}
    
    In order to understand the main idea of our linear-time algorithm, we first introduce a suboptimal occurrence recording method to maintain double-ended eertrees, which requires reduced sets of occurrences in its implementation.
    In Section \ref{sec:reduced-set}, we discuss the use of reduced sets of occurrences, based on which we define surfaces in Section \ref{sec:surface}. 
    
    \subsection{Reduced sets of occurrences} \label{sec:reduced-set}

    The main idea of the occurrence recording method is to record the occurrences of all palindromic substrings. 
    The idea of this method was inspired by the algorithm to implement queue eertrees \cite{MWN+22}. However, the method used in \cite{MWN+22} is heavily based on the monotonicity of queue operations. Here, we extend the occurrence recording strategy for our more general case.
    
    Suppose that $s$ is a string and $\eertree\rbra*{s} = \rbra*{ V, \even, \odd, \next, \link }$.
    In fact, it is difficult to directly maintain the set $\occur\rbra*{s,v}$ for every node $v \in V$ because the total number of elements that appear in $\occur\rbra*{s,v}$ for all $v \in V$ can be $\Omega\rbra{\abs{s}^2}$, which is too large for our purpose. A simple example is that $s = a^n$, where $\abs*{\occur\rbra*{s,\node\rbra*{a^k}}} = n-k+1$ for every $1 \leq k \leq n$. 
    
    To overcome this issue, we maintain a reduced set $S\rbra{s, v}$ for every $v \in V \setminus \cbra*{\even, \odd}$ instead, which represents $\occur\rbra*{s,v}$ for every $v \in V \setminus \cbra*{\even, \odd}$ indirectly. 
    Specifically, we wish to maintain a reduced set $S\rbra*{s, v}$ for every $v \in V \setminus \cbra*{\even, \odd}$ such that 
    \begin{equation} \label{eq:sv}
        \occur\rbra*{s,v} = S\rbra*{s,v} \cup \bigcup_{\link\rbra*{u} = v} \rbra*{\occur\rbra*{s,u} \cup \overline{\occur}\rbra*{s,u}}.
    \end{equation}
    Intuitively, some of the occurrences of node $v$ can be obtained implicitly from the occurrences of its child nodes in the link tree; and $S\rbra*{s,v}$ is used to contain those occurrences of node $v$ that are not implied by the former.
    Specifically, for every node $u$ with $\link\rbra{u} = v$, an occurrence $i$ of node $u$ directly implies two occurrences $i$ and $i+\len\rbra{u}-\len\rbra{v}$ of node $v$ (see Corollary \ref{corollary:occur}).
    However, we may not find all occurrences of node $v$ in this way (by enumerating all nodes $u$ with $\link\rbra{u} = v$).
    In this sense, the set $S\rbra{s, v}$ complements the remaining occurrences of node $v$.

    We claim that any reduced sets $S\rbra*{s,v}$ that satisfy Equation~(\ref{eq:sv}) can help to check the uniqueness of a palindromic substring of $s$ as follows.
    \begin{lemma} [Uniqueness checking via reduced sets of occurrences] \label{lemma:unique-by-sv}
        Let $s$ be a non-empty string.
        Suppose that every node $v$ in $\eertree\rbra{s}$ is associated with a set $S\rbra{s, v}$ satisfying Equation~(\ref{eq:sv}).
        Let $s\substr{i}{j}$ be a palindrome for some $1 \leq i \leq j \leq \abs{s}$ and $v = \node\rbra{s\substr{i}{j}}$.
        Then, $s\substr{i}{j}$ is unique in $s$ if and only if  $\linkcnt\rbra*{s,v} = 0$ and $\abs*{S\rbra*{s,v}} = 1$.
    \end{lemma}
    
    \begin{proof}
    	``$\Longrightarrow$''. If $s\substr{i}{j}$ is unique in $s$, then $v$ must be a leaf node of the link tree $T$ of $\eertree\rbra*{s}$. Otherwise, $v$ has a child node $u$ in $T$, i.e., there exists a node $u$ such that $\link\rbra*{u} = v$. Let $i \in \occur\rbra*{s,u}$, and we have $\cbra*{i, i+\len\rbra*{u}-\len\rbra*{v}} \subseteq \occur\rbra*{s,v}$. Note that $i \neq i+\len\rbra*{u}-\len\rbra*{v}$ due to $\len\rbra*{v} < \len\rbra*{u}$. Therefore, $\abs*{\occur\rbra*{s,v}} \geq 2$, which contradicts with the uniqueness of $\str\rbra*{v} = s\substr{i}{j}$. Now we conclude that $v$ is a leaf node of $T$. According to Equation~(\ref{eq:def-linkcnt}), we have $\linkcnt\rbra*{s,v} = 0$, and then Equation~(\ref{eq:sv}) becomes
    	\begin{equation} \label{eq:sv_t}
    		\occur\rbra*{s,v} = S\rbra*{s,v},
    	\end{equation}
    	which implies that $\abs*{S\rbra*{s,v}} = \abs*{\occur\rbra*{s,v}} = 1$.
    	
    	``$\Longleftarrow$''. If $\linkcnt\rbra*{s,v} = 0$, then Equation~(\ref{eq:sv_t}) holds again and thus $\abs*{\occur\rbra*{s,v}} = \abs*{S\rbra*{s,v}} = 1$, which implies the uniqueness of $\str\rbra*{v} = s\substr{i}{j}$ in $s$. 
    \end{proof}

    Given a family of the reduced sets $S\rbra*{s,v}$, we further define
    \begin{align}
        \prenode\rbra*{s, i} & = \set{v \in V}{i \in S\rbra*{s,v}}, \label{eq:def-prenode} \\
        \sufnode\rbra*{s, i} & = \set{v \in V}{i - \len\rbra*{v} + 1 \in S\rbra*{s,v}}. \label{eq:def-sufnode}
    \end{align}

    As will be shown in Lemma \ref{lemma:prepal-by-prenode}, we can represent $\prepal\rbra*{s,1}$ and $\sufpal\rbra*{s,\abs*{s}}$ by an element in $\prenode\rbra*{s,1}$ and $\sufnode\rbra*{s,\abs*{s}}$, respectively. 
	\begin{lemma} \label{lemma:prepal-by-prenode}
		Suppose $s$ is a non-empty string, $S\rbra*{s,v}$ satisfies Equation~(\ref{eq:sv}) for every node $v$ in $\eertree\rbra*{s}$. We have
		\begin{align}
		\prepal\rbra*{s,1} & = \str\rbra*{\argmax_{v \in \prenode\rbra*{s,1}} \len\rbra*{v}}, \label{eq:prepal-max} \\
		\sufpal\rbra*{s,\abs*{s}} & = \str\rbra*{\argmax_{v \in \sufnode\rbra*{s,\abs*{s}}} \len\rbra*{v}}. \label{eq:sufpal-max}
		\end{align}
	\end{lemma}
	\begin{proof}
            We only show Equation~(\ref{eq:prepal-max}), and Equation~(\ref{eq:sufpal-max}) can be obtained symmetrically. 
            Let $v = \node\rbra*{\prepal\rbra*{s,1}}$. 
            Then, $\str\rbra{v}$ is non-empty, i.e., $\len\rbra{v} \geq 1$. 
            This is because the first character of $s$ forms a palindromic substring itself. 
            Moreover, there is no node $u$ in $\eertree\rbra*{s}$ satisfying both $\link\rbra{u} = v$ and $1 \in \occur\rbra{s, v}$;
            otherwise, $\str\rbra{u}$ is a palindromic prefix of $s$ and $\len\rbra{u} > \len\rbra{v}$, which violates that $\str\rbra{v}$ is the longest palindromic prefix of $s$. 
            This means that for any node $u$ with $\link\rbra{u} = v$, it always holds that $1 \notin \occur\rbra{s, u}$; 
            moreover, we have $1 \notin \overline{\occur}\rbra{s, u}$ according to Equation~(\ref{eq:def-overline-occur}).

            On the other hand, $1 \in \occur\rbra{s, v}$. 
            This leads to $1 \in S\rbra{s, v}$ by Equation~(\ref{eq:sv}).
            By Equation~(\ref{eq:def-prenode}), we have $v \in \prenode\rbra{s, 1}$; this also implies that $\prenode\rbra{s, 1} \neq \emptyset$. 
            To complete the proof, it remains to show that there is no node $u \in \prenode\rbra{s, 1}$ such that $\len\rbra{u} > \len\rbra{v}$. 
            If there is such a node $u$ satisfying these, then $1 \in S\rbra{s, u} \subseteq \occur\rbra{s, u}$, which means that $\str\rbra{u}$ is a palindromic prefix of $s$ longer than $\str\rbra{v} = \prepal\rbra{s, 1}$, violating the definition of $\prepal\rbra{s, 1}$. 
	\end{proof}

    By Lemmas \ref{lemma:unique-by-sv} and \ref{lemma:prepal-by-prenode}, we can implement a double-ended eertree by maintaining the reduced sets $S\rbra*{s,v}$. 
    In Appendix \ref{sec:occurrence-recording-app}, we provide an implementation of eertree based on the reduced sets, with amortized time complexity $O\rbra{\log\rbra{n} + \log\rbra{\sigma}}$ and worst-case space complexity $O\rbra{\log\rbra{\sigma}}$.

    \subsection{Surfaces}
	\label{sec:surface}
	
	We introduce a new notion of \textit{surfaces}, which will be an important concept in our surface recording method (see Section \ref{sec:surface-recording}).
	\begin{definition} [Surface] \label{def:surface}
		A substring occurrence $s\substr{i}{j}$ of $s$, where $1 \leq i \leq j \leq \abs*{s}$, is called a surface in $s$, if $s\substr{i}{j}$ is a palindrome, but neither $s\substr{i}{r}$ nor $s\substr{l}{j}$ is a palindrome for any $1 \leq l < i$ or $j < r \leq \abs*{s}$.
	\end{definition}
	\begin{remark}
		In Definition \ref{def:surface}, a palindromic substring $t$ of $s$ can have multiple occurrences, some of which are surfaces while the others are not. Consider the string $s = abacabaxyaba$. There are three occurrences $s\substr{1}{3}, s\substr{5}{7}, s\substr{10}{12}$ of $aba$ in $s$. According to the definition, neither $s\substr{1}{3}$ nor $s\substr{5}{7}$ is a surface in $s$ because $s\substr{1}{7} = abacaba$ is a palindrome with $s\substr{1}{3}$ and $s\substr{5}{7}$ being its prefix and suffix, respectively. However, it can be easily verified that $s\substr{10}{12}$ is a surface in $s$.
	\end{remark}
	Intuitively, a surface marks a palindromic substring occurrence that is not covered by any other palindromic substring occurrences. By the definitions of $\prelen\rbra*{s,i}$ and $\suflen\rbra*{s,i}$, we have the following straightforward properties.
    \begin{observation}
        Let $s$ be a non-empty string.
        For $1 \leq i \leq j \leq \abs{s}$, $s\substr{i}{j}$ is a surface if and only if $\prelen\rbra*{s, i} = \suflen\rbra*{s, j} = j-i+1$. 
        In particular, $s\substr{1}{\prelen\rbra*{s, 1} }$ and $s\substr{\abs*{s} - \suflen\rbra*{s, \abs*{s}} + 1}{\abs*{s}}$ are surfaces.
    \end{observation}
	
    \newcommand{\surf}{\operatorname{surf}}
    
    For every node $v$ in $\eertree\rbra*{s}$, we define the set of surfaces corresponding to $\str\rbra*{v}$ as
    \[
        \surf\rbra*{s, v} = 
        \set{ i \in \occur\rbra*{s, v} }{ s\substr{i}{i+\len\rbra*{v}-1} \text{ is a surface in $s$} }.
    \]
    The following lemma relates sets of surfaces to reduced sets of occurrences $S\rbra*{s, v}$ related to surfaces.
    
    \begin{lemma} \label{lemma:relation-sv-surface}
        Suppose $s$ is a string and $t$ is a non-empty palindromic substring of $s$.
        Let $v = \node\rbra{t}$ be the node corresponding to $t$ in $\eertree\rbra{s}$.
        The reduced set $S\rbra*{s, v}$ satisfies Equation~(\ref{eq:sv}) if and only if $\surf\rbra*{s, v} \subseteq S\rbra*{s, v} \subseteq \occur\rbra*{s, v}$.
    \end{lemma}
    \begin{proof}
        We only need to show that
        \[
            \surf\rbra*{s, v} = \occur\rbra*{s, v} \setminus \bigcup_{\link\rbra*{u} = v} \rbra[\Big]{ \occur\rbra*{s, u} \cup \overline{\occur}\rbra*{s, u} }
        \]
        for every node $v$ in $\eertree\rbra*{s}$. 
        
        ``$\supseteq$''.
        To show that the set on the right hand side is contained in $\surf\rbra*{s, v}$, we choose any element $i \in \occur\rbra*{s, v}$ such that $i \notin \occur\rbra*{s, u}$ and $i \notin \overline{\occur}\rbra*{s, u}$ for every node $u$ with $\link\rbra*{u} = v$. That is, the following three conditions hold:
        \begin{enumerate}
            \item $s\substr{i}{i+\len\rbra*{v}-1} = \str\rbra*{v}$,
            \item $s\substr{i}{i+\len\rbra*{u}-1} \neq \str\rbra*{u}$ for every node $u$ with $\link\rbra*{u} = v$,
            \item $s\substr{i - \len\rbra*{u} + \len\rbra*{v}}{i+\len\rbra*{v}-1} \neq \str\rbra*{u}$ for every node $u$ with $\link\rbra*{u} = v$.
        \end{enumerate}
        If $s\substr{i}{i+\len\rbra*{v}-1}$ is a proper suffix of a palindromic string $w = s\substr{l}{i+\len\rbra*{v}-1}$ for some $1 \leq l < i$, then there is a positive integer $k$ such that $\link^k\rbra*{\node\rbra*{w}} = v$, thus condition 3 does not hold by letting $u = \link^{k-1}\rbra*{\node\rbra*{w}}$. Therefore, $s\substr{i}{i+\len\rbra*{v}-1}$ is not a proper suffix of any palindromic substrings of $s$. Similarly, $s\substr{i}{i+\len\rbra*{v}-1}$ is not a proper prefix of any palindromic substrings of $s$. By Definition \ref{def:surface}, we have that $s\substr{i}{i+\len\rbra*{v}-1}$ is a surface in $s$, thereby $i \in \surf\rbra*{s, v}$.
        
        ``$\subseteq$''. To show that $\surf\rbra*{s, v}$ is contained in the set on the right hand side, we choose any $i \in \surf\rbra*{s, v}$, then $s\substr{i}{i+\len\rbra*{v}-1}$ is a surface in $s$. We are about to check the three conditions given above. Condition 1 holds immediately. If condition 2 does not hold, i.e., $s\substr{i}{i+\len\rbra*{u}-1} = \str\rbra*{u}$ for some node $u$ with $\link\rbra*{u} = v$, then $\str\rbra*{v} = s\substr{i}{i+\len\rbra*{v}-1}$ is a proper prefix of $\str\rbra*{u} = s\substr{i}{i+\len\rbra*{u}-1}$, which implies that $s\substr{i}{i+\len\rbra*{v}-1}$ is not a surface in $s$ --- a contradiction. Therefore, condition 2 holds. Similarly, condition 3 holds. Finally, we have that $i$ is in the set on the right hand side. 
    \end{proof}
    
    As seen in Lemma \ref{lemma:relation-sv-surface}, the set $\surf\rbra*{s, v}$ is the minimal choice of the reduced set $S\rbra*{s, v}$. Inspired by this, we are going to maintain the set $\surf\rbra*{s, v}$ implicitly.
    To this end, we write $\presurf\rbra*{s, i}$ and $\sufsurf\rbra*{s, i}$ to denote the surface in $s$ with start and end index $i$, respectively. That is,
    \begin{equation} \label{eq:def-presurf}
        \presurf\rbra*{s, i} = \begin{cases}
            \prepal\rbra*{s, i}, & s\substr{i}{i+\prelen\rbra*{s, i}-1} \text{ is a surface in } $s$, \\
            \epsilon, & \text{otherwise}.
        \end{cases}
    \end{equation}
    \begin{equation} \label{eq:def-sufsurf}
        \sufsurf\rbra*{s, i} = \begin{cases}
            \sufpal\rbra*{s, i}, & s\substr{i-\suflen\rbra*{s, i}+1}{i} \text{ is a surface in } $s$, \\
            \epsilon, & \text{otherwise}.
        \end{cases}
    \end{equation}

    The following lemma shows that $\prepal\rbra*{s, 1}$ and $\sufpal\rbra*{s, \abs*{s}}$ can be obtained by $\presurf\rbra*{s, i}$ and $\sufsurf\rbra*{s, i}$, respectively.
    This suggests an alternative way to implement a double-ended eertree, which will be investigated in Section \ref{sec:surface-recording}.
    \begin{observation} \label{lemma:prepal-by-presurf}
        Suppose $s$ is a non-empty string. Then, we have
        \begin{align*}
            \prepal\rbra*{s, 1} & = \presurf\rbra*{s, 1}, \\
            \sufpal\rbra*{s, \abs*{s}} & = \sufsurf\rbra*{s, \abs*{s}}.
        \end{align*}
    \end{observation}
    
    \section{Surface Recording Method} \label{sec:surface-recording}
    
    In the occurrence recording method, we maintain $S\rbra*{s, v}$ as a reduced set of occurrences of every palindromic substring of $s$. Although the total size of such abstractions is $O\rbra*{\abs*{s}}$, it requires an extra logarithmic factor $O\rbra*{\log\rbra*{\abs*{s}}}$ to maintain $\prepal\rbra*{s, i}$ and $\sufpal\rbra*{s, i}$ for each $1 \leq i \leq \abs*{s}$ and the elements in each $S\rbra*{s, v}$ (see Appendix \ref{sec:occurrence-recording-app} for details).
    Lemma \ref{lemma:unique-by-sv} suggests a way to check the uniqueness of palindromic substrings of $s$, which, however, does require the cardinality of $S\rbra*{s, v}$ rather than the exact elements in it. There are two aspects that can be optimized, thereby the reduced set $S\rbra*{s, v}$ no longer needed explicitly. 
    
    \begin{itemize}
        \item To make it better, our new idea is to maintain certain number $\cnt\rbra*{s, v}$ rather than a set $S\rbra*{s, v}$ for each node $v$, where $\cnt\rbra*{s, v}$ can be also used to check the uniqueness of palindromic substrings (see Lemma \ref{lemma:unique-by-cnt}). 
        
        \item We avoid maintaining $\prenode\rbra*{s, i}$ and $\sufnode\rbra*{s, i}$ according to $S\rbra*{s, v}$ as in the occurrence recording method. Instead, we find a different way to maintain the longest palindromic prefix and suffix of $s$, called the surface recording method. 
    \end{itemize}
    
    We first introduce how to check the uniqueness of palindromic substrings without the help of $S\rbra*{s, v}$ in Section \ref{sec:indirect-occur-count}, and then provide the surface recording method in Section \ref{sec:surface-recording-algorithm} using the properties of surfaces.
    
    \subsection{Indirect occurrence counting} \label{sec:indirect-occur-count}

    In this subsection, we propose an indirect occurrence counting.
    Let $\precnt\rbra*{s, v}$ and $\sufcnt\rbra*{s, v}$ denote the number of occurrences of $\str\rbra*{v}$ that 
    are, respectively, longest palindromic prefixes of suffixes of $s$ and longest palindromic suffixes of prefixes of $s$. That is,
    \begin{equation} \label{eq:def-precnt}
        \precnt\rbra*{s, v} = \abs*{\set{1 \leq i \leq \abs*{s}}{\prepal\rbra*{s, i} = \str\rbra*{v}}}.
    \end{equation}
    \begin{equation} \label{eq:def-sufcnt}
        \sufcnt\rbra*{s, v} = \abs*{\set{1 \leq i \leq \abs*{s}}{\sufpal\rbra*{s, i} = \str\rbra*{v}}}.
    \end{equation}
    The following lemma shows that the values of $\precnt\rbra*{s, v}$ and $\sufcnt\rbra*{s, v}$ are equal.

\begin{lemma} \label{lemma:precnt=sufcnt}
    Suppose $s$ is a string and $t$ is a non-empty palindromic substring of $s$.
    Let $v = \node\rbra{t}$ be the node corresponding to $t$ in $\eertree\rbra{s}$. 
    Then, we have $\precnt\rbra{s, v} = \sufcnt\rbra{s, v}$, and
        \begin{equation} \label{eq:occur=sum-sufcnt}
        \abs*{\occur\rbra*{s, v}} = \sum_{u \in T_v} \precnt\rbra*{s, u} = \sum_{u \in T_v} \sufcnt\rbra*{s, u},
        \end{equation}
    where $T_v$ is the subtree of node $v$ in the link tree of $\eertree\rbra*{s}$.
\end{lemma}
\begin{proof}
    Let $\mathit{pre}\rbra{s, v} = \set{1 \leq i \leq \abs{s}}{\prepal\rbra{s, i} = \str\rbra{v}}$.
    Then, we have $\precnt\rbra{s, v} = \abs{\mathit{pre}\rbra{s, v}}$ and
    \[
    \sum_{u \in T_v} \precnt\rbra{s, u} = \sum_{u \in T_v} \abs*{\mathit{pre}\rbra{s, u}} = \abs*{\bigsqcup_{u \in T_v} \mathit{pre}\rbra{s, u}},
    \]
    where $\bigsqcup$ denotes disjoint union. 
    For every $i \in \occur\rbra{s, v}$, we have $i \in \mathit{pre}\rbra{s, \prepal\rbra{s, i}}$ where $\prepal\rbra{s, i} \in T_v$, which gives $\abs{\occur\rbra{s, v}} \leq \abs{\bigsqcup_{u \in T_v} \mathit{pre}\rbra{s, u}}$. 
    On the other hand, for every $i \in \mathit{pre}\rbra{s, u}$ for some $u \in T_v$, we have $s\substr{i}{i+\len\rbra{v}-1} = \str\rbra{v}$ as $\str\rbra{v}$ is a prefix of $\str\rbra{u}$, which gives $\abs{\bigsqcup_{u \in T_v} \mathit{pre}\rbra{s, u}} \leq \abs{\occur\rbra{s, v}}$.
    Therefore, we have
    \[
    \abs*{\occur\rbra{s, v}} = \abs*{\bigsqcup_{u \in T_v} \mathit{pre}\rbra{s, u}} = \sum_{u \in T_v} \precnt\rbra{s, u}.
    \]
    Similarly, we can also show that 
    \[
    \abs*{\occur\rbra{s, v}} = \sum_{u \in T_v} \sufcnt\rbra{s, u},
    \]
    and obtain Equation~(\ref{eq:occur=sum-sufcnt}).

    To see $\precnt\rbra{s, v} = \sufcnt\rbra{s, v}$, note that
    \begin{align*}
        \sum_{u \in T_v} \precnt\rbra{s, u} & = \precnt\rbra{s, v} + \sum_{\link\rbra{w} = v} \sum_{u \in T_w} \precnt\rbra{s, u} \\
        & = \precnt\rbra{s, v} + \sum_{\link\rbra{w} = v} \abs{\occur\rbra{s, w}},
    \end{align*}
    where the last equality uses Equation~(\ref{eq:occur=sum-sufcnt}) for each node $w$. 
    Similarly, we have
    \[
        \sum_{u \in T_v} \sufcnt\rbra{s, u} = \sufcnt\rbra{s, v} + \sum_{\link\rbra{w} = v} \abs{\occur\rbra{s, w}}.
    \]
    Again using Equation~(\ref{eq:occur=sum-sufcnt}), we obtain that $\precnt\rbra{s, v} = \sufcnt\rbra{s, v}$.
\end{proof}

According to Lemma \ref{lemma:precnt=sufcnt}, we can define an attribute $\cnt\rbra*{s, v}$ associated with a node $v$ in $\eertree\rbra{s}$ as follows. 
\begin{definition} \label{def:cnt}
    Suppose $s$ is a string. For every node $v$ in $\eertree\rbra{s}$ that is not $\even$ or $\odd$, we define
    \[
    \cnt\rbra*{s, v}:= \precnt\rbra*{s, v} = \sufcnt\rbra*{s, v}.
    \]
\end{definition}
	
We find that $\cnt\rbra*{s, v}$ can be used to count the number of occurrences of palindromic substrings.
\begin{corollary} \label{lemma:occur-by-cnt}
    Suppose $s$ is a string and $t$ is a non-empty palindromic substring of $s$.
    Let $v = \node\rbra{t}$ be the node corresponding to $t$ in $\eertree\rbra{s}$. 
    Then, we have
    \[
    \abs*{\occur\rbra*{s, v}} = \sum_{u \in T_v} \cnt\rbra*{s, u},
    \]
    where $T_v$ is the subtree of node $v$ in the link tree $T$ of $\eertree\rbra*{s}$.
\end{corollary}
	
	Note that $\cnt\rbra*{s, v}$ is not the cardinality of $S\rbra*{s, v}$. Indeed, both of $\cnt\rbra*{s, v}$ and $S\rbra*{s, v}$ describe certain properties of $\occur\rbra*{v}$ from different perspectives. Lemma \ref{lemma:unique-by-cnt} shows that $\cnt\rbra*{s, v}$ can be used to check whether a palindromic substring is unique.

    \begin{lemma} [Uniqueness checking via indirect occurrence counting] \label{lemma:unique-by-cnt}
        Suppose $s$ is a string and $t$ is a non-empty palindromic substring of $s$. Let $v = \node\rbra{t}$ be the node corresponding to $t$ in $\eertree\rbra{s}$.
        Then $t$ is unique in $s$ if and only if  $\linkcnt\rbra*{s,v} = 0$ and $\cnt\rbra*{s, v} = 1$.
    \end{lemma}
    \begin{proof}
        By the definition of $\occur\rbra*{s, v}$, we know that $t = s\substr{i}{j}$ is unique if and only if $\abs*{\occur\rbra*{s, v}} = 1$. 
        
        ``$\Longrightarrow$''. If $s\substr{i}{j}$ is unique, i.e., $\abs*{\occur\rbra*{s, v}} = 1$, by Corollary \ref{lemma:occur-by-cnt}, we have 
        \begin{equation} \label{eq:sum-cnt=1}
            1 = \abs*{\occur\rbra*{s, v}} = \sum_{u \in T_v} \cnt\rbra*{s, u}. 
        \end{equation}
        Since $s\substr{i}{j}$ is the leftmost occurrence of $\str\rbra{v}$, we have $\sufpal\rbra{s, j} = s\substr{i}{j}$; otherwise, if $\sufpal\rbra{s, j} = s\substr{i'}{j}$ for some $i' < i$, then $s\substr{i'}{i'+\len\rbra{v}-1} = s\substr{i}{j}$ is also an occurrence of $\str\rbra{v}$, which conflicts with the uniqueness of $s\substr{i}{j}$.
        Therefore, $\cnt\rbra{s, v} \geq 1$.
        
        On the other hand, we have $\linkcnt\rbra{s, v} = 0$; otherwise, there is a node $u$ such that $\link\rbra{u} = v$ and $\str\rbra{u}$ occurs at least once in $s$, which implies that $\str\rbra{v}$ appears at least twice and thus conflicts with the uniqueness of $s\substr{i}{j}$.
        Therefore, the summation over $u \in T_v$ in Equation~(\ref{eq:sum-cnt=1}) involves only one term $\cnt\rbra{s, v}$, which implies that $\cnt\rbra{s, v} = 1$.
        
        ``$\Longleftarrow$''. If $\linkcnt\rbra*{s, v} = 0$ and $\cnt\rbra*{s, v} = 1$, 
        by Corollary \ref{lemma:occur-by-cnt}, we have that $\abs*{\occur\rbra*{s, v}} = \cnt\rbra*{s, v} = 1$. 
    \end{proof}
    
    To conclude this subsection, we show the relationship between $\cnt\rbra*{s, v}$ and $\cnt\rbra*{s', v}$, with $s'$ being modified from $s$ by adding or deleting a character at any of the both ends. 
    
    \begin{lemma} \label{lemma:cnt-update}
        Let $s$ be a string and $c$ be a character. Then
        \begin{itemize}
            \item \textup{\texttt{push\_back}}:
            \[
            \cnt\rbra*{sc, v} = \begin{cases}
                \cnt\rbra*{s, v} + 1, & \str\rbra*{v} = \sufpal\rbra*{sc, \abs*{sc}}, \\
                \cnt\rbra*{s, v}, & \text{otherwise}.
            \end{cases}
            \]
            \item \textup{\texttt{push\_front}}:
            \[
            \cnt\rbra*{cs, v} = \begin{cases}
                \cnt\rbra*{s, v} + 1, & \str\rbra*{v} = \prepal\rbra*{cs, 1}, \\
                \cnt\rbra*{s, v}, & \text{otherwise}.
            \end{cases}
            \]
            \item \textup{\texttt{pop\_back}}: If $s$ is not empty, then
            \[
            \cnt\rbra*{s\substr{1}{\abs*{s}-1}, v} = \begin{cases}
                \cnt\rbra*{s, v} - 1, & \str\rbra*{v} = \sufpal\rbra*{s, \abs*{s}}, \\
                \cnt\rbra*{s, v}, & \text{otherwise}.
            \end{cases}
            \]
            \item \textup{\texttt{pop\_front}}: If $s$ is not empty, then
            \[
            \cnt\rbra*{s\substr{2}{\abs*{s}}, v} = \begin{cases}
                \cnt\rbra*{s, v} - 1, & \str\rbra*{v} = \prepal\rbra*{s, 1}, \\
                \cnt\rbra*{s, v}, & \text{otherwise}.
            \end{cases}
            \]
        \end{itemize}
        
    \end{lemma}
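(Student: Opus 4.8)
The plan is to derive each of the four update rules directly from the definitions \eqref{eq:def-precnt} and \eqref{eq:def-sufcnt}, using Lemma \ref{lemma:precnt=sufcnt} to switch freely between $\precnt$ and $\sufcnt$. Since $\cnt\rbra*{s,v} = \precnt\rbra*{s,v} = \sufcnt\rbra*{s,v}$, I would evaluate $\cnt$ through $\sufcnt$ for the two operations acting at the back of the string, and through $\precnt$ for the two operations acting at the front, because each count is defined in terms of exactly the end that the corresponding operation disturbs.

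For \texttt{push\_back} and \texttt{pop\_back} I would work with $\sufcnt\rbra*{s,v} = \abs*{\set{1 \leq i \leq \abs*{s}}{\sufpal\rbra*{s,i} = \str\rbra*{v}}}$, recalling that $\sufpal\rbra*{s,i}$ is the longest palindromic suffix of $s\substr{1}{i}$. The key observation is that modifying the back of the string does not alter any surviving prefix: for \texttt{push\_back} we have $\rbra*{sc}\substr{1}{i} = s\substr{1}{i}$ for all $1 \leq i \leq \abs*{s}$, hence $\sufpal\rbra*{sc,i} = \sufpal\rbra*{s,i}$ there, and the only new position is $i = \abs*{sc}$, whose longest palindromic suffix is $\sufpal\rbra*{sc,\abs*{sc}}$, contributing $+1$ to the node representing it. For \texttt{pop\_back} the prefixes $s\substr{1}{i}$ with $1 \leq i \leq \abs*{s}-1$ are likewise unchanged, and the only position lost is $i = \abs*{s}$, whose longest palindromic suffix is $\sufpal\rbra*{s,\abs*{s}}$, contributing $-1$. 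This yields the two stated formulas.

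For \texttt{push\_front} and \texttt{pop\_front} I would argue symmetrically with $\precnt\rbra*{s,v} = \abs*{\set{1 \leq i \leq \abs*{s}}{\prepal\rbra*{s,i} = \str\rbra*{v}}}$, where $\prepal\rbra*{s,i}$ is the longest palindromic prefix of $s\substr{i}{\abs*{s}}$. Prepending or deleting a front character leaves every suffix $s\substr{i}{\abs*{s}}$ unchanged as a string but shifts its index by one; concretely $\prepal\rbra*{cs, j} = \prepal\rbra*{s, j-1}$ for $2 \leq j \leq \abs*{cs}$, so positions $2,\dots,\abs*{cs}$ of $cs$ are in a length-preserving bijection with positions $1,\dots,\abs*{s}$ of $s$. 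Thus only the new front position matters: \texttt{push\_front} adds position $j = 1$ with longest palindromic prefix $\prepal\rbra*{cs, 1}$, giving $+1$, while \texttt{pop\_front} removes position $i = 1$ with $\prepal\rbra*{s, 1}$, giving $-1$, which are exactly the claimed rules.

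The main obstacle, such as it is, lies entirely in the index bookkeeping for the front operations: unlike the back case, where the surviving positions keep their labels, prepending or deleting at the front relabels every position, so I must verify that the shift $j \mapsto j-1$ preserves the relevant $\prepal$ values and that no position other than the front one is affected. Once this correspondence is checked, each case reduces to a one-line counting argument, since the substantive content — that counting by longest palindromic prefixes agrees with counting by longest palindromic suffixes — has already been secured in Lemma \ref{lemma:precnt=sufcnt}.
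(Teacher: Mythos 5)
Your proposal is correct and follows essentially the same route as the paper's proof: the paper also evaluates $\cnt$ via $\sufcnt$ for \texttt{push\_back} and \texttt{pop\_back}, observing that $\sufpal\rbra*{\cdot, i}$ is unchanged at all surviving positions so that only the single added or removed end position contributes $\pm 1$, and then dispatches \texttt{push\_front} and \texttt{pop\_front} by symmetry (i.e., via $\precnt$), which you simply spell out with the explicit index shift $j \mapsto j-1$.
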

    
    \begin{proof}
        We first show the identity for \texttt{push\_back}. For this case, we use the property that $\cnt\rbra*{s, v} = \sufcnt\rbra*{s, v}$. For string $sc$, the longest palindromic suffix of $sc$ is the surface with end position $\abs*{sc}$, which is $s\substr{\abs*{sc} - \suflen\rbra*{sc, \abs*{sc}} + 1}{\abs*{sc}}$; and $\sufpal\rbra*{sc, i}$ does not depend on $c$ for every $1 \leq i \leq \abs*{s}$. By the definition of $\sufcnt\rbra*{sc, v}$, we have 
        \[
            \sufcnt\rbra*{sc, v} = \begin{cases}
                \sufcnt\rbra*{s, v} + 1, & \str\rbra*{v} = \sufpal\rbra*{sc, \abs*{sc}}, \\
                \sufcnt\rbra*{s, v}, & \text{otherwise}.
            \end{cases}
        \]
        
        Next, we will show the identity for \texttt{pop\_back}. For this case, we also use the property that $\cnt\rbra*{s, v} = \sufcnt\rbra*{s, v}$. Here, we only consider the case that $\abs*{s} \geq 2$. 
        Note that the longest palindromic suffix $\sufpal\rbra*{s, \abs*{s}}$ is deleted while $\sufpal\rbra*{s, i} = \sufpal\rbra*{s\substr{1}{\abs*{s} - 1}, i}$ for $1 \leq i < \abs{s}$. 
        By the definition of $\sufcnt\rbra*{sc, v}$, we have 
        \[
            \sufcnt\rbra*{s\substr{1}{\abs*{s}-1}, v} = \begin{cases}
                \sufcnt\rbra*{s, v} - 1, & \str\rbra*{v} = \sufpal\rbra*{s, \abs*{s}}, \\
                \sufcnt\rbra*{s, v}, & \text{otherwise}.
            \end{cases}
        \]
        
        At last, the identities for \texttt{push\_front} and \texttt{pop\_front} are symmetric to those for \texttt{push\_back} and \texttt{pop\_back}. Then, these yield the proof. 
    \end{proof}

    \subsection{The algorithm} \label{sec:surface-recording-algorithm}
    
    In this subsection, we will provide an efficient algorithm for implementing double-ended eertree based on surface recording. The algorithm is rather simple, but with a complicated correctness proof.
    As we will focus on the surfaces, let
    \[
    \mathrm{surface}\rbra{s} = \set{\rbra{i, j}}{s\substr{i}{j}\text{ is a surface in }s\text{ for }1 \leq i \leq j \leq \abs{s}}
    \]
    denote the set of (the index pairs of) all surfaces in a string $s$. 
    
    \subsubsection{\texttt{push\_back} and \texttt{push\_front}} \label{sec:surface-recording-push}

    To see how surfaces change after appending a character, we point out a simple observation.
    
    \begin{observation} \label{lemma:cond2}
        Among all suffixes of a string $s$, only the longest palindromic suffix of $s$ is a surface in $s$.
    \end{observation}
    
    We also need the following basic properties of surfaces. 
    \begin{lemma} \label{lemma:cond1}
        A non-surface substring occurrence cannot become a surface after appending a character.
        That is, for any string $s$ and character $c$, if $s\substr{i}{j}$ is not a surface where $1 \leq i \leq j \leq \abs{s}$, then $sc\substr{i}{j}$ is not a surface. 
    \end{lemma}
    \begin{proof}
    We consider the following cases. 
    \begin{itemize}
        \item The string occurrence $s\substr{i}{j}$ is not a palindrome. In this case, $sc\substr{i}{j}$ is thus not a palindrome, implying that $sc\substr{i}{j}$ is not a surface in $sc$. 
        \item The string occurrence $s\substr{i}{j}$ is a palindrome but not a surface in $s$. In this case, there must be either an index $1 \leq i' < i$ or an index $j < j' \leq \abs{s}$ such that $s\substr{i'}{j}$ or $s\substr{i}{j'}$ is a palindrome, implying that $sc\substr{i}{j}$ is not a surface in $sc$. 
    \end{itemize}
    \end{proof}

    \begin{lemma} \label{lemma:cond3}
        Suppose that $s\substr{i}{j}$ is a surface in $s$. 
        Then, $sc\substr{i}{j}$ is not a surface in $sc$ if and only if $sc\substr{i}{\abs{s}+1}$ is a surface in $sc$. 
        Moreover, if $sc\substr{i}{\abs{s}+1}$ is a surface in $sc$, then $sc\substr{i}{j}$ is the longest proper palindromic prefix of $sc\substr{i}{\abs{s}+1}$.
    \end{lemma}
    \begin{proof}
        ``$\Longleftarrow$''. If $sc\substr{i}{\abs{s}+1}$ is a surface in $sc$, then $sc\substr{i}{j}$ is not a surface as $sc\substr{i}{j}$ is a proper prefix of $sc\substr{i}{\abs{s}+1}$. 

        ``$\Longrightarrow$''. If $sc\substr{i}{j}$ is not a surface in $sc$, then there is an index $j < j' \leq \abs{s}+1$ such that $sc\substr{i}{j'}$ is a palindrome.
        We further conclude that $j' = \abs{s}+1$; otherwise, $j < j' \leq \abs{s}$, which means that $sc\substr{i}{j'} = s\substr{i}{j'}$ is a palindrome, violating that $s\substr{i}{j}$ is a surface in $s$. 
        Therefore, we have that $sc\substr{i}{\abs{s}+1}$ is a palindrome. 
        We can further conclude that $sc\substr{i}{\abs{s}+1}$ is a surface in $sc$; otherwise, there is an index $1 \leq i' < i$ such that $sc\substr{i'}{\abs{s}+1}$ is a palindrome. 
        Then we consider the following three cases. See Figure \ref{fig:illustration} for the illustration of each case. 
        \begin{enumerate}[(a)]
            \item $i+j < i'+\abs{s}+1$, i.e., the center of $sc\substr{i}{j}$ is to the left of the center of $sc\substr{i'}{\abs{s}+1}$. 
            Let $i'' = i'+\abs{s}+1 - i \leq \abs{s}$ be the symmetric index of $i$ about the center of $sc\substr{i'}{\abs{s}+1}$. 
            Then, $sc\substr{i}{i''}$ is a palindrome (note that $i \leq i''$), as $sc\substr{i'}{\abs{s}+1}$ is a palindrome. 
            On the other hand, $s\substr{i}{j} = sc\substr{i}{j}$ is a proper prefix of $s\substr{i}{i''} = sc\substr{i}{i''}$ (note that $j < i''$), which violates that $s\substr{i}{j}$ is a surface in $s$. 
            \item $i+j = i'+\abs{s}+1$, i.e., the centers of $sc\substr{i}{j}$ and $sc\substr{i'}{\abs{s}+1}$ coincide. 
            Note that $sc\substr{i}{\abs{s}+1}$ is a palindromic suffix of $sc\substr{i'}{\abs{s}+1}$, then by symmetry, $sc\substr{i'}{j}$ is a palindromic prefix of $sc\substr{i'}{\abs{s}+1}$. 
            On the other hand, $sc\substr{i}{j} = s\substr{i}{j}$ is a proper prefix of $sc\substr{i'}{j} = s\substr{i'}{j}$, which violates that $s\substr{i}{j}$ is a surface in $s$. 
            \item $i+j > i'+\abs{s}+1$, i.e., the center of $sc\substr{i}{j}$ is to the right of the center of $sc\substr{i'}{\abs{s}+1}$. 
            Let $j'' = i'+\abs{s}+1 - j \leq \abs{s}$ be the symmetric index of $j$ about the center of $sc\substr{i'}{\abs{s}+1}$. 
            Then, $sc\substr{j''}{j}$ is a palindrome (note that $j'' \leq j$), as $sc\substr{i'}{\abs{s}+1}$ is a palindrome. 
            On the other hand, $s\substr{i}{j} = sc\substr{i}{j}$ is a proper suffix of $s\substr{j''}{j} = sc\substr{j''}{j}$ (note that $j'' < i$), which violates that $s\substr{i}{j}$ is a surface in $s$. 
        \end{enumerate}
    \end{proof}

    \begin{figure}
        \centering
        \tikzset{every picture/.style={line width=0.75pt}} 
\begin{subfigure}[t]{.31\linewidth}
\centering
\begin{tikzpicture}[x=0.75pt,y=0.75pt,yscale=-1,xscale=1]

\draw  [color={rgb, 255:red, 0; green, 0; blue, 0 }  ,draw opacity=1 ][fill={rgb, 255:red, 126; green, 211; blue, 33 }  ,fill opacity=1 ] (130,151) -- (290,151) -- (290,161) -- (130,161) -- cycle ;
\draw  [fill={rgb, 255:red, 245; green, 166; blue, 35 }  ,fill opacity=1 ] (150,131) -- (230,131) -- (230,141) -- (150,141) -- cycle ;
\draw  [fill={rgb, 255:red, 189; green, 16; blue, 224 }  ,fill opacity=1 ] (150,141) -- (270,141) -- (270,151) -- (150,151) -- cycle ;
\draw [color={rgb, 255:red, 0; green, 0; blue, 0 }  ,draw opacity=1 ] [dash pattern={on 4.5pt off 4.5pt}]  (210,101) -- (210,181) ;
\draw [color={rgb, 255:red, 245; green, 166; blue, 35 }  ,draw opacity=1 ] [dash pattern={on 4.5pt off 4.5pt}]  (190,101) -- (190,181) ;
\draw    (150,104) -- (150,131) ;
\draw [shift={(150,101)}, rotate = 90] [fill={rgb, 255:red, 0; green, 0; blue, 0 }  ][line width=0.08]  [draw opacity=0] (8.93,-4.29) -- (0,0) -- (8.93,4.29) -- cycle    ;
\draw    (152,121) -- (208,121) ;
\draw [shift={(210,121)}, rotate = 180] [color={rgb, 255:red, 0; green, 0; blue, 0 }  ][line width=0.75]    (10.93,-3.29) .. controls (6.95,-1.4) and (3.31,-0.3) .. (0,0) .. controls (3.31,0.3) and (6.95,1.4) .. (10.93,3.29)   ;
\draw [shift={(150,121)}, rotate = 0] [color={rgb, 255:red, 0; green, 0; blue, 0 }  ][line width=0.75]    (10.93,-3.29) .. controls (6.95,-1.4) and (3.31,-0.3) .. (0,0) .. controls (3.31,0.3) and (6.95,1.4) .. (10.93,3.29)   ;
\draw    (212,121) -- (268,121) ;
\draw [shift={(270,121)}, rotate = 180] [color={rgb, 255:red, 0; green, 0; blue, 0 }  ][line width=0.75]    (10.93,-3.29) .. controls (6.95,-1.4) and (3.31,-0.3) .. (0,0) .. controls (3.31,0.3) and (6.95,1.4) .. (10.93,3.29)   ;
\draw [shift={(210,121)}, rotate = 0] [color={rgb, 255:red, 0; green, 0; blue, 0 }  ][line width=0.75]    (10.93,-3.29) .. controls (6.95,-1.4) and (3.31,-0.3) .. (0,0) .. controls (3.31,0.3) and (6.95,1.4) .. (10.93,3.29)   ;
\draw    (270,104) -- (270,151) ;
\draw [shift={(270,101)}, rotate = 90] [fill={rgb, 255:red, 0; green, 0; blue, 0 }  ][line width=0.08]  [draw opacity=0] (8.93,-4.29) -- (0,0) -- (8.93,4.29) -- cycle    ;
\draw    (130,161) -- (130,178) ;
\draw [shift={(130,181)}, rotate = 270] [fill={rgb, 255:red, 0; green, 0; blue, 0 }  ][line width=0.08]  [draw opacity=0] (8.93,-4.29) -- (0,0) -- (8.93,4.29) -- cycle    ;
\draw    (290,161) -- (290,178) ;
\draw [shift={(290,181)}, rotate = 270] [fill={rgb, 255:red, 0; green, 0; blue, 0 }  ][line width=0.08]  [draw opacity=0] (8.93,-4.29) -- (0,0) -- (8.93,4.29) -- cycle    ;
\draw    (230,104) -- (230,131) ;
\draw [shift={(230,101)}, rotate = 90] [fill={rgb, 255:red, 0; green, 0; blue, 0 }  ][line width=0.08]  [draw opacity=0] (8.93,-4.29) -- (0,0) -- (8.93,4.29) -- cycle    ;
\draw (124,183) node [anchor=north west][inner sep=0.75pt]   [align=left] {$i'$};
\draw (267,183) node [anchor=north west][inner sep=0.75pt]   [align=left] {$\abs{s}+1$};
\draw (147,82) node [anchor=north west][inner sep=0.75pt]   [align=left] {$i$};
\draw (226,82) node [anchor=north west][inner sep=0.75pt]   [align=left] {$j$};
\draw (266,82) node [anchor=north west][inner sep=0.75pt]   [align=left] {$i''$};

\end{tikzpicture}
\caption{$i+j < i'+\abs{s}+1$}
\end{subfigure}
\begin{subfigure}[t]{.31\linewidth}
\centering

\tikzset{every picture/.style={line width=0.75pt}} 

\begin{tikzpicture}[x=0.75pt,y=0.75pt,yscale=-1,xscale=1]

\draw  [color={rgb, 255:red, 0; green, 0; blue, 0 }  ,draw opacity=1 ][fill={rgb, 255:red, 126; green, 211; blue, 33 }  ,fill opacity=1 ] (360,140) -- (520,140) -- (520,150) -- (360,150) -- cycle ;
\draw  [fill={rgb, 255:red, 245; green, 166; blue, 35 }  ,fill opacity=1 ] (400,109) -- (480,109) -- (480,120) -- (400,120) -- cycle ;
\draw    (480,94) -- (480,120) ;
\draw [shift={(480,91)}, rotate = 90] [fill={rgb, 255:red, 0; green, 0; blue, 0 }  ][line width=0.08]  [draw opacity=0] (8.93,-4.29) -- (0,0) -- (8.93,4.29) -- cycle    ;
\draw    (360,151) -- (360,168) ;
\draw [shift={(360,171)}, rotate = 270] [fill={rgb, 255:red, 0; green, 0; blue, 0 }  ][line width=0.08]  [draw opacity=0] (8.93,-4.29) -- (0,0) -- (8.93,4.29) -- cycle    ;
\draw    (520,151) -- (520,168) ;
\draw [shift={(520,171)}, rotate = 270] [fill={rgb, 255:red, 0; green, 0; blue, 0 }  ][line width=0.08]  [draw opacity=0] (8.93,-4.29) -- (0,0) -- (8.93,4.29) -- cycle    ;
\draw    (400,93) -- (400,120) ;
\draw [shift={(400,90)}, rotate = 90] [fill={rgb, 255:red, 0; green, 0; blue, 0 }  ][line width=0.08]  [draw opacity=0] (8.93,-4.29) -- (0,0) -- (8.93,4.29) -- cycle    ;
\draw  [fill={rgb, 255:red, 189; green, 16; blue, 224 }  ,fill opacity=1 ] (360,120) -- (480,120) -- (480,130) -- (360,130) -- cycle ;
\draw  [fill={rgb, 255:red, 189; green, 16; blue, 224 }  ,fill opacity=1 ] (400,130) -- (520,130) -- (520,140) -- (400,140) -- cycle ;
\draw [color={rgb, 255:red, 0; green, 0; blue, 0 }  ,draw opacity=1 ] [dash pattern={on 4.5pt off 4.5pt}]  (440,91) -- (440,171) ;

\draw (354,172) node [anchor=north west][inner sep=0.75pt]   [align=left] {$i'$};
\draw (497,172) node [anchor=north west][inner sep=0.75pt]   [align=left] {$|s|+1$};
\draw (397,72) node [anchor=north west][inner sep=0.75pt]   [align=left] {$i$};
\draw (474,72) node [anchor=north west][inner sep=0.75pt]   [align=left] {$j$};

\end{tikzpicture}

\caption{$i+j = i'+\abs{s}+1$}
\end{subfigure}
\begin{subfigure}[t]{.31\linewidth}
\centering

\tikzset{every picture/.style={line width=0.75pt}} 

\begin{tikzpicture}[x=0.75pt,y=0.75pt,yscale=-1,xscale=1]

\draw  [color={rgb, 255:red, 0; green, 0; blue, 0 }  ,draw opacity=1 ][fill={rgb, 255:red, 126; green, 211; blue, 33 }  ,fill opacity=1 ] (130,151) -- (290,151) -- (290,161) -- (130,161) -- cycle ;
\draw  [fill={rgb, 255:red, 245; green, 166; blue, 35 }  ,fill opacity=1 ] (190,131) -- (270,131) -- (270,141) -- (190,141) -- cycle ;
\draw  [fill={rgb, 255:red, 189; green, 16; blue, 224 }  ,fill opacity=1 ] (150,141) -- (270,141) -- (270,151) -- (150,151) -- cycle ;
\draw [color={rgb, 255:red, 0; green, 0; blue, 0 }  ,draw opacity=1 ] [dash pattern={on 4.5pt off 4.5pt}]  (210,101) -- (210,181) ;
\draw [color={rgb, 255:red, 245; green, 166; blue, 35 }  ,draw opacity=1 ] [dash pattern={on 4.5pt off 4.5pt}]  (230,100) -- (230,180) ;
\draw    (150,104) -- (150,141) ;
\draw [shift={(150,101)}, rotate = 90] [fill={rgb, 255:red, 0; green, 0; blue, 0 }  ][line width=0.08]  [draw opacity=0] (8.93,-4.29) -- (0,0) -- (8.93,4.29) -- cycle    ;
\draw    (152,121) -- (208,121) ;
\draw [shift={(210,121)}, rotate = 180] [color={rgb, 255:red, 0; green, 0; blue, 0 }  ][line width=0.75]    (10.93,-3.29) .. controls (6.95,-1.4) and (3.31,-0.3) .. (0,0) .. controls (3.31,0.3) and (6.95,1.4) .. (10.93,3.29)   ;
\draw [shift={(150,121)}, rotate = 0] [color={rgb, 255:red, 0; green, 0; blue, 0 }  ][line width=0.75]    (10.93,-3.29) .. controls (6.95,-1.4) and (3.31,-0.3) .. (0,0) .. controls (3.31,0.3) and (6.95,1.4) .. (10.93,3.29)   ;
\draw    (212,121) -- (268,121) ;
\draw [shift={(270,121)}, rotate = 180] [color={rgb, 255:red, 0; green, 0; blue, 0 }  ][line width=0.75]    (10.93,-3.29) .. controls (6.95,-1.4) and (3.31,-0.3) .. (0,0) .. controls (3.31,0.3) and (6.95,1.4) .. (10.93,3.29)   ;
\draw [shift={(210,121)}, rotate = 0] [color={rgb, 255:red, 0; green, 0; blue, 0 }  ][line width=0.75]    (10.93,-3.29) .. controls (6.95,-1.4) and (3.31,-0.3) .. (0,0) .. controls (3.31,0.3) and (6.95,1.4) .. (10.93,3.29)   ;
\draw    (270,104) -- (270,151) ;
\draw [shift={(270,101)}, rotate = 90] [fill={rgb, 255:red, 0; green, 0; blue, 0 }  ][line width=0.08]  [draw opacity=0] (8.93,-4.29) -- (0,0) -- (8.93,4.29) -- cycle    ;
\draw    (130,161) -- (130,178) ;
\draw [shift={(130,181)}, rotate = 270] [fill={rgb, 255:red, 0; green, 0; blue, 0 }  ][line width=0.08]  [draw opacity=0] (8.93,-4.29) -- (0,0) -- (8.93,4.29) -- cycle    ;
\draw    (290,161) -- (290,178) ;
\draw [shift={(290,181)}, rotate = 270] [fill={rgb, 255:red, 0; green, 0; blue, 0 }  ][line width=0.08]  [draw opacity=0] (8.93,-4.29) -- (0,0) -- (8.93,4.29) -- cycle    ;
\draw    (190,104) -- (190,131) ;
\draw [shift={(190,101)}, rotate = 90] [fill={rgb, 255:red, 0; green, 0; blue, 0 }  ][line width=0.08]  [draw opacity=0] (8.93,-4.29) -- (0,0) -- (8.93,4.29) -- cycle    ;

\draw (124,183) node [anchor=north west][inner sep=0.75pt]   [align=left] {$i'$};
\draw (267,183) node [anchor=north west][inner sep=0.75pt]   [align=left] {$\abs{s}+1$};
\draw (145,82) node [anchor=north west][inner sep=0.75pt]   [align=left] {$j''$};
\draw (186,82) node [anchor=north west][inner sep=0.75pt]   [align=left] {$i$};
\draw (266,82) node [anchor=north west][inner sep=0.75pt]   [align=left] {$j$};

\end{tikzpicture}

\caption{$i+j > i'+\abs{s}+1$}
\end{subfigure}
\caption{Illustration for the proof of Lemma \ref{lemma:cond3}.}
    \label{fig:illustration}
\end{figure}
    
    By Observation \ref{lemma:cond2}, Lemma \ref{lemma:cond1}, and Lemma \ref{lemma:cond3}, we can fully characterize the change of surfaces after appending a character to a string. 

    \begin{lemma} \label{lemma:surface-push-back}
        Suppose that $s$ is a non-empty string and $c$ is a character. 
        Let $t = \sufpal\rbra{sc, \abs{sc}}$.
        \begin{enumerate}
        \item If $\abs{t} = 1$, then \label{item:1}
        \[
        \mathrm{surface}\rbra{sc} = \mathrm{surface}\rbra{s} \cup \cbra{\rbra{\abs{sc}, \abs{sc}}}.
        \]
        \item If $\abs{t} \geq 2$, then \label{item:2}
        \[
        \mathrm{surface}\rbra{sc} = \mathrm{surface}\rbra{s} \setminus \cbra{\rbra{\abs{sc}-\abs{t}+1, \abs{sc}-\abs{t}+\abs{t'}}} \cup \cbra{\rbra{\abs{sc}-\abs{t}+1, \abs{sc}}},
        \]
        where $t' = \str\rbra{\link\rbra{\node\rbra{t}}}$.
        \end{enumerate}
    \end{lemma}
    \begin{proof}
        By Observation \ref{lemma:cond2} and Lemma \ref{lemma:cond1}, most surfaces do not change.
        Item \ref{item:1} is trivial, and Item \ref{item:2} is due to the fact that only the elements $\rbra{i, j}$ with $i=\abs{sc}-\abs{t}+1$ will change due to Lemma \ref{lemma:cond3}.
    \end{proof}

    The following lemma shows how to efficiently maintain the surfaces by revealing how $\presurf\rbra*{s, i}$ and $\sufsurf\rbra*{s, i}$ relate to $\presurf\rbra*{sc, i}$ and $\sufsurf\rbra*{sc, i}$ as a character $c$ is added at the back of string $s$. 
    
    \begin{lemma} [Surface recording for \texttt{push\_back}] \label{lemma:surface-recording-push-back}
        Let $s$ be a string and $c$ be a character. 
        Let $t = \sufpal\rbra*{sc, \abs*{sc}}$ and $t' = \str\rbra*{\link\rbra*{\node\rbra*{t}}}$. Then
        \[
            \presurf\rbra*{sc, i} = \begin{cases}
                t, & i = \abs*{sc} - \abs*{t} + 1, \\
                \epsilon, & i = \abs*{sc} \text{ and } \abs*{t} \neq 1, \\
                \presurf\rbra*{s, i}, & \text{otherwise},
            \end{cases}
        \]
        and
        \[
            \sufsurf\rbra*{sc, i} = \begin{cases}
                t, & i = \abs*{sc}, \\
                \epsilon, & i = \abs*{sc} - \abs*{t} + \abs*{t'} \text{ and } \sufsurf\rbra*{s, \abs*{sc} - \abs*{t} + \abs*{t'}} = t' \text{ and } \abs*{t'} \geq 1, \\
                \sufsurf\rbra*{s, i}, & \text{otherwise}.
            \end{cases}
        \]
    \end{lemma}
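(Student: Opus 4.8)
The plan is to track how the two local quantities $\prelen$ and $\suflen$---and hence, via Lemma~\ref{lemma:def-surface}, the surface status of each palindromic substring---respond to appending $c$. Two observations drive everything. First, for any end position $j \le \abs*{s}$ we have $sc\substr{1}{j} = s\substr{1}{j}$, so $\suflen\rbra*{sc, j} = \suflen\rbra*{s, j}$: the suffix structure at old positions is untouched. Second, for a start position $i \le \abs*{s}$ the prefix palindrome can only grow, and it grows exactly when $sc\substr{i}{\abs*{sc}}$ is a palindrome, in which case it jumps to reach the new last position, i.e.\ $\prelen\rbra*{sc, i} = \abs*{sc} - i + 1$. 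Thus surface status can change only at those $i$ for which $sc\substr{i}{\abs*{sc}}$ becomes a palindrome, and these are precisely the start positions $\abs*{sc} - \len\rbra*{\link^k\rbra*{\node\rbra*{t}}} + 1$ ($k \ge 0$) of the palindromic suffixes of $sc$.

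For $\presurf$, the surface starting at $\abs*{sc} - \abs*{t} + 1$ is $t$ itself, which is a surface by Lemma~\ref{lemma:prepal-sufpal-are-surfaces}; this is the first case. At $i = \abs*{sc}$ the only palindrome is the single character $c$, a surface iff it is not covered by a longer palindromic suffix, i.e.\ iff $\abs*{t} = 1$, giving the second case. For every other $i$ I will show $\presurf\rbra*{sc, i} = \presurf\rbra*{s, i}$: if $sc\substr{i}{\abs*{sc}}$ is not a palindrome, then $\prelen\rbra*{\cdot, i}$ and the relevant $\suflen$ are both unchanged, so the status is unchanged; the remaining situation is $sc\substr{i}{\abs*{sc}}$ a palindrome with $i \neq \abs*{sc}-\abs*{t}+1$, whence $sc\substr{i}{\abs*{sc}}$ is a \emph{proper} palindromic suffix of $t$, hence covered, so $\presurf\rbra*{sc,i} = \epsilon$, and it remains only to check that already $\presurf\rbra*{s, i} = \epsilon$---the sole nontrivial point, isolated below.

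For $\sufsurf$, the surface ending at $\abs*{sc}$ is $t$, again by Lemma~\ref{lemma:prepal-sufpal-are-surfaces}, giving the first case. For $i \le \abs*{s}$ the candidate surface ending at $i$ is $\sufpal\rbra*{s, i} = \sufpal\rbra*{sc, i}$ with some start position $l$, and its status flips only when $sc\substr{l}{\abs*{sc}}$ becomes a palindrome, in which case $\prelen\rbra*{sc, l} = \abs*{sc}-l+1 > \suflen\rbra*{s,i}$, so the candidate is covered and $\sufsurf\rbra*{sc, i} = \epsilon$. Among such flips, only one destroys an \emph{actual} surface of $s$: if $sc\substr{l}{\abs*{sc}}$ is a proper palindromic suffix of $t$, then by the isolated point below $\sufpal\rbra*{s,i}$ was not a surface, so $\sufsurf\rbra*{s,i}=\epsilon=\sufsurf\rbra*{sc,i}$ and we are in the ``otherwise'' case; when $l = \abs*{sc}-\abs*{t}+1$, the surviving surface ending at $i$ must be the longest proper palindromic prefix of $t$, which equals $t'$, forcing $i = \abs*{sc}-\abs*{t}+\abs*{t'}$, $\sufsurf\rbra*{s,i} = t'$ and $\abs*{t'}\ge 1$, matching the second case exactly, with the surface vanishing because it is now covered by $t$.

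The crux---the isolated point used in both halves---is: if $sc\substr{i}{\abs*{sc}}$ is a proper palindromic suffix of $t$, then the prefix palindrome $s\substr{i}{i+\prelen\rbra*{s,i}-1}$ is not a surface of $s$. The plan is to invoke Lemma~\ref{lemma:period-of-pal}: being a proper palindromic suffix of $t$ makes $d := i - \rbra*{\abs*{sc}-\abs*{t}+1}$ a period of $t$, so $s\substr{i}{\abs*{s}}$ coincides with a prefix of $t$; writing $p = \prelen\rbra*{s,i}$ and $j = i+p-1$, the palindrome $s\substr{i}{j}$ equals both $t\substr{1}{p}$ and its shift $t\substr{d+1}{d+p}$, giving $t$ the two periods $\abs*{t}-p$ and $d$. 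I then want $t\substr{1}{p+d}$ to be a palindrome, which would exhibit the longer palindromic suffix $s\substr{i-d}{j}$ of $s\substr{1}{j}$ and prove $\suflen\rbra*{s,j} > p$, i.e.\ non-surface. Combining the two periods by Fine--Wilf settles $d \le p$ cleanly; the delicate subcase is $d > p$, where one must additionally exploit the maximality of $p$ as the longest palindromic prefix of $t\substr{1}{\abs*{t}-1-d}$ (and possibly Lemma~\ref{lemma:sub-arithmetic}) to preclude a short period coexisting with a long palindromic suffix. I expect this periodicity bookkeeping to be the main obstacle, the surrounding case analysis being routine once it is in hand.
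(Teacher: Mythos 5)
Your reduction (surface status at an old index can change only where $sc\substr{i}{\abs*{sc}}$ becomes a palindrome), your treatment of the two special indices, and your identification of $t'$ in the $\sufsurf$ half are all correct, and they isolate exactly the crux that any proof must face. The gap is in the crux itself. Your plan is to exhibit the covering palindrome as the $d$-shift $s\substr{i-d}{j}$, i.e.\ to show $t\substr{1}{p+d}$ is a palindrome; Fine--Wilf does give this when $d \le p$, but in the subcase $d > p$ the claim is simply false, and the subcase cannot be ``precluded'' because it genuinely occurs. Concretely, take $sc = abaaba$, $s = abaab$: then $t = abaaba$, and $i = 4$ gives the proper palindromic suffix $sc\substr{4}{6} = aba$, so $d = 3$, while $p = \prelen\rbra*{s,4} = 1$ and $j = 4$. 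Here $t\substr{1}{p+d} = abaa$ is not a palindrome, and no palindrome ending at $j = 4$ starts at $i - d = 1$; Fine--Wilf fails because $\gcd\rbra*{d, \abs*{t}-p} = \gcd\rbra*{3,5} = 1 < d - p$. Yet the crux conclusion still holds: $s\substr{3}{4} = aa$ covers $s\substr{4}{4}$, and this covering palindrome is not a $d$-shift of anything --- it is the palindrome symmetric about the center of $t$.

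That symmetry is the missing idea, and it is what the paper's proof uses in place of periodicity. Since $t$ is a palindrome, reflection through its center maps palindromic substrings of $t$ to palindromic substrings of $t$. Comparing the center of $s\substr{i}{j}$ with the center of $t$: if it lies strictly to the right, then $s\substr{2\abs*{sc}-\abs*{t}+1-j}{j}$ is a palindrome ending at $j$ of length $> p$, so $\suflen\rbra*{s,j} > p$ and $s\substr{i}{j}$ is not a surface; if it lies strictly to the left, then $s\substr{i}{2\abs*{sc}-\abs*{t}+1-i}$ is a palindrome starting at $i$ of length $> p$, contradicting the maximality of $\prelen\rbra*{s,i}$; and if the centers coincide, reflecting the whole suffix $sc\substr{i}{\abs*{sc}}$ yields the palindromic prefix $t\substr{1}{\abs*{t}-d}$, which ends exactly at $j$ and strictly covers $s\substr{i}{j}$. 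All three reflections stay inside $s$ because $i > \abs*{sc}-\abs*{t}+1$ and $j \le \abs*{s}$. Without this reflection step your periodicity bookkeeping cannot be completed, so the proposal as it stands does not prove the lemma.
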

    \begin{proof}
        This is straightforward due to Lemma \ref{lemma:surface-push-back}.
    \end{proof}
    
    Symmetrically, we show how $\presurf\rbra*{s, i}$ and $\sufsurf\rbra*{s, i}$ relate to $\presurf\rbra*{cs, i}$ and $\sufsurf\rbra*{cs, i}$ as a character is added at the front of string $s$.
    
    \begin{lemma} [Surface recording for \texttt{push\_front}] \label{lemma:surface-recording-push-front}
        Let $s$ be a string and $c$ be a character. 
        Let $t = \prepal\rbra*{cs, 1}$ and $t' = \str\rbra*{\link\rbra*{\node\rbra*{t}}}$. Then
        \[
            \presurf\rbra*{cs, i} = \begin{cases}
                t, & i = 1, \\
                \epsilon, & i = \abs*{t} - \abs*{t'} + 1 \text{ and } \presurf\rbra*{s, \abs*{t} - \abs*{t'}} = t' \text{ and } \abs*{t'} \geq 1, \\
                \presurf\rbra*{s, i - 1}, & \text{otherwise}.
            \end{cases}
        \]
        and
        \[
            \sufsurf\rbra*{cs, i} = \begin{cases}
                t, & \abs*{t}, \\
                \epsilon, & i = 1 \text{ and } \abs*{t} \neq 1, \\
                \sufsurf\rbra*{s, i - 1}, & \text{otherwise}.
            \end{cases}
        \]
    \end{lemma}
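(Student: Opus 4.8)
The plan is to derive this lemma from Lemma~\ref{lemma:surface-recording-push-back} by the reversal symmetry of strings, rather than repeating the long case analysis. Write $s^R$ for the reverse of a string $s$. The observations I would rely on are: (i) a string is a palindrome if and only if its reverse is, and a palindrome equals its own reverse; (ii) the surface condition in Definition~\ref{def:surface} forbids both a left-extension $s\substr{l}{j}$ and a right-extension $s\substr{i}{r}$ from being palindromic, hence it is symmetric under reversal, so $s\substr{i}{j}$ is a surface in $s$ exactly when its mirror image is a surface in $s^R$; and (iii) reversal interchanges prefixes with suffixes, so under the index map $i \mapsto \abs*{w}-i+1$ one has $\prepal \leftrightarrow \sufpal$ and $\presurf \leftrightarrow \sufsurf$ between a string $w$ and its reverse.

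First I would set $m = \abs*{cs} = \abs*{s}+1$ and record that $\rbra*{cs}^R = s^R c$, so that appending $c$ to the back of $s^R$ is precisely the mirror image of prepending $c$ to the front of $s$. Applying Lemma~\ref{lemma:surface-recording-push-back} to the string $s^R$ and character $c$ yields formulas for $\presurf\rbra*{s^R c, k}$ and $\sufsurf\rbra*{s^R c, k}$ in terms of $\hat t = \sufpal\rbra*{s^R c, m}$ and $\hat t' = \str\rbra*{\link\rbra*{\node\rbra*{\hat t}}}$. Since $\prepal\rbra*{cs, 1}$ is a palindrome whose reverse is the longest palindromic suffix of $\rbra*{cs}^R = s^R c$, we obtain $\hat t = t$ as strings; and because $\str\rbra*{\link\rbra*{\node\rbra*{t}}}$ is just the longest proper palindromic suffix of the palindrome $t$, an intrinsic property of $t$ independent of the ambient string, we also obtain $\hat t' = t'$.

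It then remains to translate indices through $i \mapsto m-i+1$. By (iii), $\sufsurf\rbra*{cs, i}$ equals $\presurf\rbra*{s^R c, m-i+1}$ and $\presurf\rbra*{cs, i}$ equals $\sufsurf\rbra*{s^R c, m-i+1}$ as palindromic strings (or $\epsilon$). Substituting into the push\_back formula for $\presurf$, the case $k = m-\abs*{\hat t}+1$ becomes $i = \abs*{t}$ with value $t$, the case $k=m$ with $\abs*{\hat t}\neq 1$ becomes $i=1$ with $\abs*{t}\neq 1$ and value $\epsilon$, and the residual term $\presurf\rbra*{s^R, m-i+1}$ maps back to $\sufsurf\rbra*{s, i-1}$ because a further reversal of $s^R$ sends position $m-i+1$ to position $i-1$ of $s$; this reproduces exactly the three cases of the $\sufsurf\rbra*{cs, \cdot}$ formula. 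The symmetric substitution into the push\_back formula for $\sufsurf$ reproduces the $\presurf\rbra*{cs, \cdot}$ formula, with the middle case $k = m-\abs*{t}+\abs*{t'}$ and $\sufsurf\rbra*{s^R, k}=t'$ translating to $i = \abs*{t}-\abs*{t'}+1$ and $\presurf\rbra*{s, \abs*{t}-\abs*{t'}}=t'$.

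The routine part is the index bookkeeping above; the only place demanding care is confirming that the auxiliary quantity $t' = \str\rbra*{\link\rbra*{\node\rbra*{t}}}$ is genuinely reversal-invariant, which I expect to be the main (though mild) obstacle. This holds because the suffix link of a palindromic node depends only on the palindrome it represents and not on its occurrences, so passing to $s^R$ leaves $t$ and hence $t'$ unchanged; once this is settled, the entire statement follows by specializing Lemma~\ref{lemma:surface-recording-push-back}. Alternatively, one could mirror the direct case analysis of that lemma verbatim, swapping every left-extension argument for a right-extension argument, which is what the word ``symmetrically'' in the statement signals.
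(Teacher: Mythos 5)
Your proposal is correct, and it makes rigorous exactly what the paper leaves implicit: the paper's entire proof of this lemma is the single sentence ``Because of symmetry, the proof is similar to that of Lemma \ref{lemma:surface-recording-push-back}'', i.e.\ an invitation to rerun the long case analysis with left and right swapped, whereas you treat Lemma \ref{lemma:surface-recording-push-back} as a black box and apply it to $s^R$ and $c$, using $\rbra*{cs}^R = s^R c$. Your dictionary is sound at every entry: palindromicity and the two-sided extension condition in Definition \ref{def:surface} are reversal-invariant, so under $i \mapsto \abs*{w} - i + 1$ one has $\presurf\rbra*{w, i} = \sufsurf\rbra*{w^R, \abs*{w}-i+1}$ as strings (both palindromes, hence equal to their own reverses, or both $\epsilon$); $\hat t = \sufpal\rbra*{s^Rc, \abs*{s^Rc}}$ equals $t$ because $t$ is a palindrome; and $\hat t' = t'$ because $\link\rbra*{\node\rbra*{t}}$ is the longest proper palindromic suffix of $t$, an intrinsic property of the palindrome $t$ --- here one implicitly also uses that $cs$ and $s^Rc$ have the same set of palindromic substrings, so $\node\rbra*{t}$ refers to the same palindrome in either eertree, which you correctly flag as the only non-formal step. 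The index bookkeeping checks out as well; for instance the middle case of the push\_back $\sufsurf$ formula, $k = \abs*{cs} - \abs*{t} + \abs*{t'}$ with $\sufsurf\rbra*{s^R, k} = \hat t'$, translates via $\sufsurf\rbra*{s^R, k} = \presurf\rbra*{s, \abs*{s} - k + 1} = \presurf\rbra*{s, \abs*{t} - \abs*{t'}}$ to precisely the second case of the claimed $\presurf\rbra*{cs, \cdot}$ formula. What your route buys over the paper's is that no case analysis is duplicated and the ``symmetry'' becomes an actual reduction rather than an analogy; the cost is setting up and verifying the reversal dictionary, which is mild. Incidentally, your translation confirms that the first case of the $\sufsurf\rbra*{cs,\cdot}$ formula should read $i = \abs*{t}$; the bare ``$\abs*{t}$'' in the statement is a typo in the paper, not an issue with your argument.
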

    
    \begin{proof}
        Because of symmetry, the proof is similar to that of Lemma \ref{lemma:surface-recording-push-back}.
    \end{proof}
    
    The algorithms for $\texttt{push\_back}$ and $\texttt{push\_front}$ are given as follows.
    
    \begin{itemize}
        \item $\texttt{push\_back}\rbra*{c}$:
        \begin{enumerate}
            \item Find the node $v$ of the longest palindromic suffix $\sufpal\rbra*{s, \abs*{s}}$ of $s$.
            \item Modify $\eertree\rbra*{s}$ to $\eertree\rbra*{sc}$ according to $v$ and $c$ by online construction of eertrees \cite{RS18}.
            \item Maintain $\texttt{presurf}$ and $\texttt{sufsurf}$ according to Lemma \ref{lemma:surface-recording-push-back}.
            \item Maintain $\texttt{cnt}$ according to Lemma \ref{lemma:cnt-update}.
        \end{enumerate}
        \item $\texttt{push\_front}\rbra*{c}$:
        \begin{enumerate}
            \item Find the node $v$ of the longest palindromic prefix $\prepal\rbra*{s, 1}$ of $s$.
            \item Modify $\eertree\rbra*{s}$ to $\eertree\rbra*{cs}$ according to $v$ and $c$ by online construction of eertrees \cite{RS18}.
            \item Maintain $\texttt{presurf}$ and $\texttt{sufsurf}$ according to Lemma \ref{lemma:surface-recording-push-front}.
            \item Maintain $\texttt{cnt}$ according to Lemma \ref{lemma:cnt-update}.
        \end{enumerate}
    \end{itemize}
    
    For completeness, we provide formal and detailed descriptions of the algorithm for \texttt{push\_back} in Algorithm \ref{algo:surface:push_back} and Algorithm \ref{algo:surface:push_front}.
    
    \begin{algorithm}[htp]
        \caption{$\texttt{push\_back}\rbra*{c}$ via surface recording method}
        \label{algo:surface:push_back}
        \begin{algorithmic}[1]

        \State $s \gets \data\substr{\stpos}{\edpos-1}$.
        \State $v \gets \texttt{sufsurf}\sbra*{\edpos - 1}$. \Comment{This ensures that $\str\rbra*{v} = \sufpal\rbra*{s, \abs*{s}}$.}
        \State Obtain $\eertree\rbra*{sc}$ from $\eertree\rbra*{s}$ with the help of $v$, and let $v' \gets \node\rbra*{\sufpal\rbra*{sc, \abs*{sc}}}$.
        \State $\data\sbra*{\edpos} \gets c$.
        \State $\edpos \gets \edpos + 1$.
        \State $\texttt{presurf}\sbra*{\edpos - 1} \gets \even$.
        \State $\texttt{presurf}\sbra*{\edpos - \len\rbra*{v'}} \gets v'$.
        \State $\texttt{sufsurf}\sbra*{\edpos - 1} \gets v'$.
        \If {$\len\rbra*{\link\rbra*{v'}} \geq 1$ and $\texttt{sufsurf}\sbra*{\edpos - \len\rbra*{v'} + \len\rbra*{\link\rbra*{v'}} - 1} = \link\rbra*{v'}$}
            \State $\texttt{sufsurf}\sbra*{\edpos - \len\rbra*{v'} + \len\rbra*{\link\rbra*{v'}} - 1} \gets \even$.
        \EndIf
        \State $\texttt{cnt}\sbra*{v'} \gets \texttt{cnt}\sbra*{v'} + 1$.
        \end{algorithmic}
    \end{algorithm}

    \begin{algorithm}[htp]
        \caption{$\texttt{push\_front}\rbra*{c}$ via surface recording method}
        \label{algo:surface:push_front}
        \begin{algorithmic}[1]

        \State $s \gets \data\substr{\stpos}{\edpos-1}$.
        \State $v \gets \texttt{presurf}\sbra*{\stpos}$. \Comment{This ensures that $\str\rbra*{v} = \prepal\rbra*{s, 1}$.}
        \State Obtain $\eertree\rbra*{sc}$ from $\eertree\rbra*{s}$ with the help of $v$, and let $v' \gets \node\rbra*{\sufpal\rbra*{cs, \abs*{cs}}}$.
        \State $\stpos \gets \stpos - 1$.
        \State $\data\sbra*{\stpos} \gets c$.
        \State $\texttt{presurf}\sbra*{\stpos} \gets v'$.
        \State $\texttt{sufsurf}\sbra*{\stpos} \gets \even$.
        \State $\texttt{sufsurf}\sbra*{\stpos + \len\rbra*{v'} - 1} \gets v'$.
        \If {$\len\rbra*{\link\rbra*{v'}} \geq 1$ and $\texttt{presurf}\sbra*{\stpos + \len\rbra*{v'} - \len\rbra*{\link\rbra*{v'}}} = \link\rbra*{v'}$}
            \State $\texttt{presurf}\sbra*{\stpos + \len\rbra*{v'} - \len\rbra*{\link\rbra*{v'}}} \gets \even$.
        \EndIf
        \State $\texttt{cnt}\sbra*{v'} \gets \texttt{cnt}\sbra*{v'} + 1$.
        \end{algorithmic}
    \end{algorithm}
    
    \subsubsection{\texttt{pop\_back} and \texttt{pop\_front}} \label{sec:surface-recording-pop}

    To see how surfaces change after deleting a character, we show the following basic properties of surfaces. 

    \begin{observation} \label{lemma:cond1p}
        A surface that does not cover the last character will still be a surface after deleting the last character. 
        That is, for any non-empty string $s$, if $s\substr{i}{j}$ is a surface in $s$ with $j < \abs{s}$, then $s'\substr{i}{j}$ is a surface in $s'$, where $s' = s\substr{1}{\abs{s}-1}$.
    \end{observation}

    \begin{lemma} \label{lemma:cond2p}
        Suppose that $s$ is a string with $\abs{s} \geq 2$ and $t = \sufpal\rbra{s, \abs{s}}$ satisfies $\abs{t} \geq 2$. 
        Let $s\substr{i}{j}$ be a palindrome but not a surface in $s$, where $1 \leq i \leq j < \abs{s}$. 
        Denote $s' = s\substr{1}{\abs{s}-1}$ and $t' = \str\rbra{\link\rbra{\node\rbra{t}}}$. 
        Then, $s'\substr{i}{j}$ is a surface in $s'$ if and only if all of the following three conditions hold:
        \begin{enumerate}
            \item $s\substr{i}{\abs{s}} = t$. \label{item:c1}
            \item $s\substr{i}{j} = t'$. \label{item:c2}
            \item $\sufpal\rbra{s, j} = t'$. \label{item:c3}
        \end{enumerate}
    \end{lemma}
    \begin{proof}
        ``$\Longrightarrow$''. If $s'\substr{i}{j}$ is a surface in $s'$, then $s\substr{i}{\abs{s}}$ is a surface in $s$ by using Lemma \ref{lemma:cond3} with $s \coloneqq s'$ and $c \coloneqq s\sbra{\abs{s}}$. 
        By Observation \ref{lemma:prepal-by-presurf}, we further have $s\substr{i}{\abs{s}} = \sufpal\rbra{s, \abs{s}} = t$, which shows Condition \ref{item:c1}. 
        Note that $s\substr{i}{j}$ is a proper palindromic prefix of $s\substr{i}{\abs{s}}$, which implies that $\abs{t'} \geq j-i+1$. 
        We can conclude that $\abs{t'} = j-i+1$, which shows Condition \ref{item:c2}; otherwise, $s\substr{i}{i+\abs{t'}-1}$ is a palindrome with a proper prefix $s\substr{i}{j}$, violating that $s'\substr{i}{j}$ is a surface in $s'$. 
        Now that $s\substr{i}{j} = t'$, which implies that $\suflen\rbra{s, j} \geq \abs{t'}$. 
        We can conclude that $\suflen\rbra{s, j} = \abs{t'}$, which shows Condition \ref{item:c3}; otherwise, $\suflen\rbra{s, j} > \abs{t'}$, which implies that $s\substr{j-\suflen\rbra{s,j}+1}{j}$ is a palindrome with a proper suffix $s\substr{i}{j}$, violating that $s'\substr{i}{j}$ is a surface in $s'$. 

        ``$\Longleftarrow$''. Assume that Conditions \ref{item:c1} to \ref{item:c3} hold. 
        If $s'\substr{i}{j}$ is not a surface in $s'$, then we have to consider the following two cases. 
        \begin{itemize}
            \item There is an index $1 \leq i' < i$ such that $s'\substr{i'}{j}$ is a palindrome, which violates Condition \ref{item:c3}. 
            \item There is an index $j < j' < \abs{s}$ such that $s'\substr{i}{j'}$ is a palindrome of length $j' - i + 1 > j - i + 1 = \abs{t'}$, which violates the definition of $t'$, i.e., $t' = s\substr{i}{j}$ (by Condition \ref{item:c2}) is the longest proper palindromic prefix of $t = s\substr{i}{\abs{s}}$ (by Condition \ref{item:c1}).
        \end{itemize}
        Therefore, $s'\substr{i}{j}$ is a surface in $s'$. 
    \end{proof}

    By Observation \ref{lemma:cond1p} and Lemma \ref{lemma:cond2p}, we can fully characterize the change of surfaces after deleting a character from a string. 

    \begin{lemma} \label{lemma:surface-pop-back}
        Suppose that $s$ is a string of length $\abs{s} \geq 2$. 
        Let $t = \sufpal\rbra{s, \abs{s}}$ and $t' = \str\rbra{\link\rbra{\node\rbra{t}}}$.
        \begin{enumerate}
        \item If $\abs{t} = 1$, then \label{item:1p}
        \[
        \mathrm{surface}\rbra{s\substr{1}{\abs{s}-1}} = \mathrm{surface}\rbra{s} \setminus \cbra{\rbra{\abs{s}, \abs{s}}}.
        \]
        \item If $\abs{t} \geq 2$ and $\sufpal\rbra{s, \abs{s}-\abs{t}+\abs{t'}} = t'$, then \label{item:2p}
        \[
        \mathrm{surface}\rbra{s\substr{1}{\abs{s}-1}} = \mathrm{surface}\rbra{s} \setminus \cbra{\rbra{\abs{s}-\abs{t}+1, \abs{s}}} \cup \cbra{\rbra{\abs{s}-\abs{t}+1, \abs{s} - \abs{t} + \abs{t'}}},
        \]
        \item If $\abs{t} \geq 2$ and $\sufpal\rbra{s, \abs{s}-\abs{t}+\abs{t'}} \neq t'$, then \label{item:3p}
        \[
        \mathrm{surface}\rbra{s\substr{1}{\abs{s}-1}} = \mathrm{surface}\rbra{s} \setminus \cbra{\rbra{\abs{s}-\abs{t}+1, \abs{s}}}.
        \]
        \end{enumerate}
    \end{lemma}
    \begin{proof}
        By Observation \ref{lemma:cond1p}, most surfaces do not change. 
        Item \ref{item:1p} is trivial, and Items \ref{item:2p} and \ref{item:3p} are based on whether Condition \ref{item:c3} in Lemma \ref{lemma:cond2p} holds. 
    \end{proof}
    
    The following lemma shows how to efficiently maintain the surfaces by revealing (i) how $\presurf\rbra*{s, i}$ and $\sufsurf\rbra*{s, i}$ relate to $\presurf\rbra*{s\substr{1}{\abs*{s}-1}, i}$ and $\sufsurf\rbra*{s\substr{1}{\abs*{s}-1}, i}$ as a character is deleted from the back of string $s$ and (ii) how $\presurf\rbra*{s, i}$ and $\sufsurf\rbra*{s, i}$ relate to $\presurf\rbra*{s\substr{2}{\abs*{s}}, i}$ and $\sufsurf\rbra*{s\substr{2}{\abs*{s}}, i}$ as a character is deleted from the front of string $s$. 
    
    \begin{lemma}[Surface recording for \texttt{pop\_back} and \texttt{pop\_front}] \label{lemma:surface-recording-pop-back}
        Let $s$ be a non-empty string. 
        
        1. For \textup{\texttt{pop\_back}} operations, let $t = \sufpal\rbra*{s, \abs*{s}}$ and $t' = \str\rbra*{\link\rbra*{\node\rbra*{t}}}$. Then
        \[
            \presurf\rbra*{s\substr{1}{\abs*{s}-1}, i} = \begin{cases}
                t', & i = \abs*{s} - \abs*{t} + 1 \text{ and }  \sufsurf\rbra*{s, \abs*{s} - \abs*{t} + \abs*{t'}} = \epsilon, \\
                \epsilon, & i = \abs*{s} - \abs*{t} + 1 \text{ and } \abs*{\sufsurf\rbra*{s, \abs*{s} - \abs*{t} + \abs*{t'}}} \geq \abs{t'}, \\
                \presurf\rbra*{s, i}, & \text{otherwise},
            \end{cases}
        \]
        and
        \[
            \sufsurf\rbra*{s\substr{1}{\abs*{s}-1}, i} = \begin{cases}
                t', & i = \abs*{s} - \abs*{t} + \abs*{t'} \text{ and }  \sufsurf\rbra*{s, \abs*{s} - \abs*{t} + \abs*{t'}} = \epsilon, \\
                \sufsurf\rbra*{s, i}, & \text{otherwise}.
            \end{cases}
        \]

        2. For \textup{\texttt{pop\_front}} operations, let $t = \prepal\rbra*{s, 1}$ and $t' = \str\rbra*{\link\rbra*{\node\rbra*{t}}}$. Then
        \[
            \presurf\rbra*{s\substr{2}{\abs*{s}}, i} = \begin{cases}
                t', & i = \abs*{t} - \abs*{t'} \text{ and } \abs*{\presurf\rbra*{s, \abs*{t} - \abs*{t'} + 1}} < \abs{t'}, \\
                \presurf\rbra*{s, i + 1}, & \text{otherwise},
            \end{cases}
        \]
        and
        \[
            \sufsurf\rbra*{s\substr{2}{\abs*{s}}, i} = \begin{cases}
                t', & i = \abs*{t} - 1 \text{ and } \abs*{\presurf\rbra*{s, \abs*{s} - \abs*{t} + 1}} < \abs{t'}, \\
                \epsilon, & i = \abs*{t} - 1 \text{ and } \abs*{\presurf\rbra*{s, \abs*{s} - \abs*{t} + 1}} \geq \abs{t'}, \\
                \sufsurf\rbra*{s, i + 1}, & \text{otherwise}.
            \end{cases}
        \]
    \end{lemma}
    \begin{proof}
        For \textup{\texttt{pop\_back}} operations, it is straightforward due to Lemma \ref{lemma:surface-pop-back}.
        For \textup{\texttt{pop\_front}} operations, the proof is similar because of symmetry. 
    \end{proof}

    The algorithms for \texttt{pop\_back} and \texttt{pop\_front} are given as follows.
    
    \begin{itemize}
        \item \texttt{pop\_back}: 
        \begin{enumerate}
            \item Let $v$ be the node of the longest palindromic suffix $\sufpal\rbra*{s, \abs*{s}}$ of $s$. 
            \item If $\str\rbra*{v}$ is unique in $s$ (checked by Lemma \ref{lemma:unique-by-cnt}), then delete $v$ from the eertree. This will modify $\eertree\rbra*{s}$ to $\eertree\rbra*{s\substr{1}{\abs*{s}-1}}$.
            \item Maintain $\texttt{presurf}$ and $\texttt{sufsurf}$ according to Lemma \ref{lemma:surface-recording-pop-back}.
            \item Maintain $\texttt{cnt}\sbra*{v}$ according to Lemma \ref{lemma:cnt-update}.
        \end{enumerate}
        \item \texttt{pop\_front}: 
        \begin{enumerate}
            \item Let $v$ be the node of the longest palindromic prefix $\prepal\rbra*{s, 1}$ of $s$. 
            \item If $\str\rbra*{v}$ is unique in $s$ (checked by Lemma \ref{lemma:unique-by-cnt}), then delete $v$ from the eertree. This will modify $\eertree\rbra*{s}$ to $\eertree\rbra*{s\substr{2}{\abs*{s}}}$.
            \item Maintain $\texttt{presurf}$ and $\texttt{sufsurf}$ according to Lemma \ref{lemma:surface-recording-pop-back}.
            \item Maintain $\texttt{cnt}\sbra*{v}$ according to Lemma \ref{lemma:cnt-update}.
        \end{enumerate}
    \end{itemize}
    
    For completeness, we provide formal and detailed descriptions of the algorithm for \texttt{push\_back} in Algorithm \ref{algo:surface:pop_back} and Algorithm \ref{algo:surface:pop_front}.
    
    \begin{algorithm}[!htp]
        \caption{$\texttt{pop\_back}\rbra*{}$ via surface recording method}
        \label{algo:surface:pop_back}
        \begin{algorithmic}[1]

        \State $s \gets \data\substr{\stpos}{\edpos-1}$.
        \State $v \gets \texttt{sufsurf}\sbra*{\edpos - 1}$. \Comment{This ensures that $\str\rbra*{v} = \sufpal\rbra*{s, \abs*{s}}$.}
        \If {$\linkcnt\rbra*{s,v} = 0$ and $\texttt{cnt}\sbra*{v} = 1$} \Comment{By Lemma \ref{lemma:unique-by-cnt}, it means that $\str\rbra*{v}$ is unique in $s$.}
            \State Delete $v$ from the eertree.
        \EndIf
        \State $\texttt{ cnt}\sbra*{v} \gets \texttt{cnt}\sbra*{v} - 1$.
        \If {$\len\rbra*{\texttt{sufsurf}\sbra*{\edpos - \len\rbra*{v} + \len\rbra*{\link\rbra*{v}} - 1}} < \len\rbra*{\link\rbra*{v}}$}
            \State $\texttt{sufsurf}\sbra*{\edpos - \len\rbra*{v} + \len\rbra*{\link\rbra*{v}} - 1} \gets \link\rbra*{v}$.
            \State $\texttt{presurf}\sbra*{\edpos - \len\rbra*{v}} \gets \link\rbra*{v}$.
        \Else
            \State $\texttt{presurf}\sbra*{\edpos - \len\rbra*{v}} \gets \even$.
        \EndIf
        \State $\edpos \gets \edpos - 1$.
        \end{algorithmic}
    \end{algorithm}

    \begin{algorithm}[!htp]
        \caption{$\texttt{pop\_front}\rbra*{}$ via surface recording method}
        \label{algo:surface:pop_front}
        \begin{algorithmic}[1]

        \State $s \gets \data\substr{\stpos}{\edpos-1}$.
        \State $v \gets \texttt{presurf}\sbra*{\stpos}$. \Comment{This ensures that $\str\rbra*{v} = \prepal\rbra*{s, 1}$.}
        \If {$\linkcnt\rbra*{s,v} = 0$ and $\texttt{cnt}\sbra*{v} = 1$} \Comment{By Lemma \ref{lemma:unique-by-cnt}, it means that $\str\rbra*{v}$ is unique in $s$.}
            \State Delete $v$ from the eertree.
        \EndIf
        \State  $\texttt{cnt}\sbra*{v} \gets \texttt{cnt}\sbra*{v} - 1$.
        \If {$\len\rbra*{\texttt{presurf}\sbra*{\stpos + \len\rbra*{v} - \len\rbra*{\link\rbra*{v}}}} < \len\rbra*{\link\rbra*{v}}$}
            \State $\texttt{presurf}\sbra*{\stpos + \len\rbra*{v} - \len\rbra*{\link\rbra*{v}}} \gets \link\rbra*{v}$.
            \State $\texttt{sufsurf}\sbra*{\stpos + \len\rbra*{v} - 1} \gets \link\rbra*{v}$.
        \Else
            \State $\texttt{sufsurf}\sbra*{\stpos + \len\rbra*{v} - 1} \gets \even$.
        \EndIf
        \State $\stpos \gets \stpos + 1$.
        \end{algorithmic}
    \end{algorithm}
    
    \subsubsection{Complexity analysis}

    \begin{theorem}
        [Double-ended eertree by surface recording] \label{thm:eertree-surface-recording}
        Double-ended eertrees can be implemented with worst-case time and space complexity per operation $O\rbra*{\log\rbra*{\sigma}}$, where $\sigma$ is the size of the alphabet. More precisely,
        \begin{itemize}
            \item A \textup{\texttt{push\_back}} or \textup{\texttt{push\_front}} operation requires worst-case time and space complexity $O\rbra*{\log\rbra*{\sigma}}$.
            \item A \textup{\texttt{pop\_back}} or \textup{\texttt{pop\_front}} operation requires worst-case time and space complexity $O\rbra*{1}$.
        \end{itemize}
    \end{theorem}
    \begin{proof}
        The correctness has already been proved right after providing the algorithms in Section \ref{sec:surface-recording-push} and Section \ref{sec:surface-recording-pop}. We only analyze the time and space complexity of each deque operation. It is clear that no loops exist in Algorithms \ref{algo:surface:push_back}, \ref{algo:surface:push_front},  \ref{algo:surface:pop_back} and \ref{algo:surface:pop_front}. The only space consumption is the online construction of eertrees in Algorithms \ref{algo:surface:push_back} and \ref{algo:surface:push_front}, which require $O\rbra*{\log\rbra*{\sigma}}$ time and space per operation.
    \end{proof}

    \begin{remark}
        The complexity stated in Theorem \ref{thm:eertree-surface-recording} assumes unlimited storage. 
        After $q$ push operations, the space complexity will be $O\rbra{q \log\rbra{\sigma}}$. 
        In practice, one may wish to keep the space complexity comparable to the length $n$ of the current string.
        To achieve this, we can release the space for the node to be deleted in each pop operation, which requires $O\rbra{\log\rbra{\sigma}}$ time. 
        This approach results in worst-case time complexity $O\rbra{\log\rbra{\sigma}}$ for both push and pop operations. 

        Another way to achieve linear space in practice is to rebuild the whole eertree once the current space usage is over $O\rbra{n\log\rbra{\sigma}}$, while this will turn the worst-case complexity into amortized complexity.
    \end{remark}

\section*{Acknowledgment}

    The authors would like to thank Takuya Mieno for communication regarding the related works \cite{MMSH23,FMN+22}, Oleksandr Kulkov for noting a different implementation of double-ended palindromic tree \cite{Kul24}, and Kohei Morita for communication regarding the Library Checker \cite{Mor20}.
    Qisheng Wang would also like to thank Fran\c{c}ois Le Gall and Zhicheng Zhang for helpful discussions. 

    The work of Qisheng Wang was supported in part by the Engineering and Physical Sciences Research Council under Grant \mbox{EP/X026167/1} and in part by the MEXT Quantum Leap Flagship Program (MEXT Q-LEAP) under Grant \mbox{JPMXS0120319794}. 
    
\addcontentsline{toc}{section}{References}
\bibliographystyle{plainurl}
\bibliography{main}

\appendix

\section{Incremental Construction of Eertrees} \label{app:incremental-eertree}

In Section \ref{sec:framework}, when dealing with \texttt{push\_back} and \texttt{push\_front} operations, we need to obtain $\eertree\rbra*{sc}$ (resp. $\eertree\rbra*{cs}$) from $\eertree\rbra*{s}$ with the help of $\sufpal\rbra*{s, \abs*{s}}$ (resp. $\prepal\rbra*{s, 1}$). 
However, the construction is not explicitly given there. For completeness, we refine the construction of eertrees based on direct links given in \cite{RS18}, and then provide a formal description of these processes. 

In this section, we discuss how to maintain three basic attributes of eertrees: $\next$, $\link$, and $\directlink$. Other basic attributes such as $\len$ and $\linkcnt$ can be maintained correspondingly, and we omit them here. To highlight those attributes under maintenance, we write them in typewriter font: $\texttt{next}$, $\texttt{link}$, and $\texttt{dlink}$; and use squared brackets to indicate indices, e.g., $\texttt{next}\sbra*{u, c}$ and $\texttt{link}\sbra*{v}$. 

In Algorithm \ref{algo:sc-from-s}, we first obtain a node $u$ such that $c \str\rbra*{u} c = \suflen\rbra*{sc, \abs*{sc}}$ efficiently by Proposition \ref{prop:lsp-by-dlink}, and then see if $\node\rbra*{c \str\rbra*{u} c}$ exists or not. If not, we add this new node to $\texttt{next}\sbra*{u, c}$ in the eertree, and maintain necessary information after that. Specifically, we maintain $\link\sbra*{\next\rbra*{u, c}}$ by the definition of suffix link, and $\texttt{dlink}\sbra*{\texttt{next}\sbra*{u, c}, c'}$ for every character $c'$ by Lemma \ref{lemma:dlink}. 

\begin{algorithm}[!htp]
    \caption{Incremental construction of $\eertree\rbra*{sc}$ from $\eertree\rbra*{s}$}
    \label{algo:sc-from-s}
    \begin{algorithmic}[1]
    \Require String $s$, its eertree $\mathcal{T} = \eertree\rbra*{s}$, character $c$, and $t = \sufpal\rbra*{s, \abs*{s}}$.
    \Ensure $\mathcal{T} = \eertree\rbra*{sc}$.

    \State $v \gets \node\rbra*{t}$.
    \State $u \gets \begin{cases}
        v, & s\sbra*{\abs*{s}-\abs*{t}} = c, \\
        \texttt{dlink}\sbra*{v, c}, & \text{otherwise}.
    \end{cases}$
    \If {$\texttt{next}\sbra*{u, c} = \nullptr$}
        \State $\texttt{next}\sbra*{u, c} \gets \node\rbra*{c\str\rbra*{u}c}$. 
        \State $w \gets \begin{cases}
            \texttt{link}\sbra*{u}, & s\sbra*{\abs*{s}-\len\rbra*{\texttt{link}\sbra*{u}}} = c, \\
            \texttt{dlink}\sbra*{\texttt{link}\sbra*{u}, c}, & \text{otherwise}. 
        \end{cases}$
        \State $\texttt{link}\sbra*{\texttt{next}\sbra*{u, c}} \gets \texttt{next}\sbra*{w, c}$. 
        \For {every character $c'$}
        \State $\texttt{dlink}\sbra*{\texttt{next}\sbra*{u, c}, c'} \gets \begin{cases}
            \texttt{next}\sbra*{w, c}, & c' = s\sbra*{\abs*{s}-\len\rbra*{\texttt{next}\sbra*{w, c}}}, \\
            \texttt{dlink}\sbra*{\texttt{next}\sbra*{w, c}, c'}, & \text{otherwise}.
        \end{cases}$
        \EndFor
    \EndIf
    
    \end{algorithmic}
\end{algorithm}

Symmetrically, we also provide how to obtain $\eertree\rbra*{cs}$ from $\eertree\rbra*{s}$ in Algorithm \ref{algo:cs-from-s}.

\begin{algorithm}[!htp]
    \caption{Incremental construction of $\eertree\rbra*{cs}$ from $\eertree\rbra*{s}$}
    \label{algo:cs-from-s}
    \begin{algorithmic}[1]
    \Require String $s$, its eertree $\mathcal{T} = \eertree\rbra*{s}$, character $c$, and $t = \prepal\rbra*{s, 1}$.
    \Ensure $\mathcal{T} = \eertree\rbra*{cs}$.

    \State $v \gets \node\rbra*{t}$.
    \State $u \gets \begin{cases}
        v, & s\sbra*{\abs*{t} + 1} = c, \\
        \texttt{dlink}\sbra*{v, c}, & \text{otherwise}.
    \end{cases}$
    \If {$\texttt{next}\sbra*{u, c} = \nullptr$}
        \State $\texttt{next}\sbra*{u, c} \gets \node\rbra*{c\str\rbra*{u}c}$. 
        \State $w \gets \begin{cases}
            \texttt{link}\sbra*{u}, & s\sbra*{\len\rbra*{\texttt{link}\sbra*{u}} + 1} = c, \\
            \texttt{dlink}\sbra*{\texttt{link}\sbra*{u}, c}, & \text{otherwise}. 
        \end{cases}$
        \State $\texttt{link}\sbra*{\texttt{next}\sbra*{u, c}} \gets \texttt{next}\sbra*{w, c}$. 
        \For {every character $c'$}
        \State $\texttt{dlink}\sbra*{\texttt{next}\sbra*{u, c}, c'} \gets \begin{cases}
            \texttt{next}\sbra*{w, c}, & c' = s\sbra*{\len\rbra*{\texttt{next}\sbra*{w, c}}+1}, \\
            \texttt{dlink}\sbra*{\texttt{next}\sbra*{w, c}, c'}, & \text{otherwise}.
        \end{cases}$
        \EndFor
    \EndIf
    
    \end{algorithmic}
\end{algorithm}

It can be seen that Algorithm \ref{algo:sc-from-s} and Algorithm \ref{algo:cs-from-s} have time complexity $O\rbra*{\sigma}$ per operation due to the enumeration of every character $c'$. This can be improved by noting that there is at most one character $c'$ such that $\texttt{dlink}\sbra*{\texttt{next}\sbra*{u, c}, c'} \neq \texttt{dlink}\sbra*{\texttt{next}\sbra*{w, c}, c'}$. Then, we can store $\texttt{dlink}\sbra*{v, \cdot}$ by a persistent binary search tree for each node $v$. Therefore, the algorithms can be improved to time complexity $O\rbra*{\log\rbra*{\sigma}}$ per operation. 

\section{Occurrence Recording Method} \label{sec:occurrence-recording-app}

In this section, we will provide an algorithm based on occurrence recording that maintains certain valid $S\rbra*{s,v}$ satisfying Eq.~(\ref{eq:sv}) for every $v \in V$ in $O\rbra*{\log\rbra*{\abs*{s}}}$ time for every double-ended queue operation on $s$. In particular, for each operation, there are totally $O\rbra*{1}$ elements modified in all $S\rbra*{s,v}$.
    
    \subsection{\texttt{push\_back} and \texttt{push\_front}} \label{sec:push-occurrence-recording}
    
    For \texttt{push\_back} and \texttt{push\_front} operations, we only focus on how to maintain $S\rbra*{s, v}$, since other auxiliary data are defined by $S\rbra*{s, v}$. Here, we use the variable $\texttt{S}\sbra*{v}$ in our algorithm, which ought to be $S\rbra*{s, v}$ for the current string $s$. The algorithms for \texttt{push\_back} and \texttt{push\_front} are given as follows.
    \begin{itemize}
        \item $\texttt{push\_back}\rbra*{c}$: 
        \begin{enumerate}
            \item Find the node $v$ of the longest palindromic suffix $\sufpal\rbra*{s, \abs*{s}}$ of $s$. \item Modify $\eertree\rbra*{s}$ to $\eertree\rbra*{sc}$ according to $v$ and $c$ by online construction of eertrees \cite{RS18}, with $v'$ being the node of the longest palindromic suffix $\sufpal\rbra*{sc, \abs*{sc}}$ of $sc$. 
            \item Add the start position $\abs*{sc} - \len\rbra*{v'} + 1$ of $\sufpal\rbra*{sc, \abs*{sc}}$ into $\texttt{S}\sbra*{v'}$.
        \end{enumerate}
        \item $\texttt{push\_front}\rbra*{c}$:  \begin{enumerate}
            \item Find the node $v$ of the longest palindromic prefix $\prepal\rbra*{s, 1}$ of $s$. \item Modify $\eertree\rbra*{s}$ to $\eertree\rbra*{cs}$ according to $v$ and $c$ by online construction of eertrees \cite{RS18}, with $v'$ being the node of the longest palindromic prefix $\prepal\rbra*{cs, 1}$ of $cs$. 
            \item Shift right all elements in $\texttt{S}\sbra*{u}$ by $1$ for all nodes $u$.
            \item Add the start position $1$ of $\prepal\rbra*{cs, 1}$ into $\texttt{S}\sbra*{v'}$.
        \end{enumerate}
    \end{itemize}
    
    It can be seen that the algorithms for \texttt{push\_back} and \texttt{push\_front} are almost symmetric, with the only exception that all elements in $\texttt{S}\sbra*{u}$ are shifted right by $1$ for all nodes $u$. This is because all characters in $s$ should move right in order to insert the new character $c$ at the front of $s$. 
    A straightforward implementation of shift-right operations needs to modify all (up to $O\rbra*{\abs*{s}}$) elements in $\texttt{S}\sbra*{u}$ for all node $u$. 
    To make it efficient, we store global indices (see Section \ref{sec:global-index}) rather than relative ones. In this way, a string $s\substr{1}{\abs*{s}}$ is stored and represented by a range $[\stpos, \edpos)$ with global indices $\stpos$ and $\edpos$.
    With the global indices, $\texttt{S}\sbra*{u}$ can be maintained as if no shift-rights were needed. The algorithms for \texttt{push\_back} and \texttt{push\_front} are now restated as follows.
    \begin{itemize}
        \item $\texttt{push\_back}\rbra*{c}$: 
        \begin{enumerate}
            \item Modify $\eertree\rbra*{s}$ to $\eertree\rbra*{sc}$, with $v'$ being the node of $\sufpal\rbra*{sc, \abs*{sc}}$. 
            \item Increment $\edpos$ by $1$. Then, $[\stpos, \edpos)$ indicates the range of string $sc$. 
            \item Add the start global position $\edpos - \len\rbra*{v'}$ of $\sufpal\rbra*{sc, \abs*{sc}}$ into $\texttt{S}\sbra*{v'}$.
        \end{enumerate}
        \item $\texttt{push\_front}\rbra*{c}$: 
        \begin{enumerate}
            \item Modify $\eertree\rbra*{s}$ to $\eertree\rbra*{cs}$, with $v'$ being the node of $\prepal\rbra*{cs, 1}$. 
            \item Decrement $\stpos$ by $1$. Then, $[\stpos, \edpos)$ indicates the range of string $cs$. 
            \item Add the start global position $\stpos$ of $\prepal\rbra*{cs, 1}$ into $\texttt{S}\sbra*{v'}$.
        \end{enumerate}
    \end{itemize}
    
    For completeness, we provide formal and detailed descriptions of the algorithms for \texttt{push\_back} and \texttt{push\_front} in Algorithm \ref{algo:occur:push_back} and Algorithm \ref{algo:occur:push_front}, respectively. 
    
    \begin{algorithm}[!htp]
        \caption{$\texttt{push\_back}\rbra*{c}$ via occurrence recording method}
        \label{algo:occur:push_back}
        \begin{algorithmic}[1]

        \State $s \gets \data\substr{\stpos}{\edpos-1}$.
        \State $v \gets \argmax_{u \in \texttt{sufnode}\sbra*{\edpos - 1}} \len\rbra*{u}$. \Comment{This ensures that $\str\rbra*{v} = \sufpal\rbra*{s, \abs*{s}}$.}
        \State Obtain $\eertree\rbra*{sc}$ from $\eertree\rbra*{s}$ according to $v$ and $c$, and let $v' \gets \node\rbra*{\sufpal\rbra*{sc, \abs*{sc}}}$.
        \State $\data\sbra*{\edpos} \gets c$.
        \State $\edpos \gets \edpos + 1$.
        \State Add $\edpos - \len\rbra*{v'}$ into $\texttt{S}\sbra*{v'}$.
        \State Add $v'$ into $\texttt{prenode}\sbra*{ \edpos - \len\rbra*{v'} }$ and $\texttt{sufnode}\sbra*{ \edpos - 1 }$.
        \end{algorithmic}
    \end{algorithm}
    
    \begin{algorithm}[!htp]
        \caption{$\texttt{push\_front}\rbra*{c}$ via occurrence recording method}
        \label{algo:occur:push_front}
        \begin{algorithmic}[1]

        \State $s \gets \data\substr{\stpos}{\edpos-1}$.
        \State $v \gets \argmax_{u \in \texttt{prenode}\sbra*{\stpos}} \len\rbra*{u}$. \Comment{This ensures that $\str\rbra*{v} = \prepal\rbra*{s, 1}$.}
        \State Obtain $\eertree\rbra*{cs}$ from $\eertree\rbra*{s}$ according to $v$ and $c$, and let $v' \gets \node\rbra*{\prepal\rbra*{cs, 1}}$.
        \State $\stpos \gets \stpos - 1$.
        \State $\data\sbra*{\stpos} \gets c$.
        \State Add $\stpos$ into $\texttt{S}\sbra*{v'}$.
        \State Add $v'$ into $\texttt{prenode}\sbra*{ \stpos }$ and $\texttt{sufnode}\sbra*{ \stpos + \len\rbra*{v'} - 1 }$.
        \end{algorithmic}
    \end{algorithm}
    
    The correctness of the algorithms for \texttt{push\_back} and \texttt{push\_front} is based on the following facts, Lemma \ref{lemma:indirect-recording-push-back} and \ref{lemma:indirect-recording-pop-back}. 
    
    \begin{lemma} [Correctness of \texttt{push\_back} by occurrence recording] \label{lemma:indirect-recording-push-back}
        Let $s$ be a string and $c$ be a character. 
        Suppose $S\rbra*{s,v}$ defined for $\eertree\rbra*{s}$ satisfies Eq.~(\ref{eq:sv}). 
        Let $v' = \node\rbra*{\sufpal\rbra*{sc, \abs*{sc}}}$, and
        \[
            S\rbra*{sc, v} = \begin{cases}
                S\rbra*{s,v} \cup \cbra*{\abs*{sc} - \len\rbra*{v'} + 1}, & v = v', \\
                S\rbra*{s,v}, & \text{otherwise}, 
            \end{cases}
        \]
        where $S\rbra*{s,v} = \emptyset$ if $v \notin \eertree\rbra*{s}$. Then $S\rbra*{sc,v}$ satisfies Eq.~(\ref{eq:sv}) defined for $\eertree\rbra*{sc}$.
    \end{lemma}
    \begin{proof}
        Let $\occur\rbra*{s,v}$ and $\occur\rbra*{s',v}$ be the set of occurrences of $\str\rbra*{v}$ in $s$ and $s' = sc$, respectively. According to Eq.~(\ref{eq:def-occur}), it is straightforward that
        \begin{equation} \label{eq:occur'-by-occur}
            \occur\rbra*{s',v} = \begin{cases}
                \occur\rbra*{s,v} \cup \cbra*{\abs*{sc} - \len\rbra*{v} + 1}, & v \in \link^*\rbra*{v'}, \\
                \occur\rbra*{s,v}, & \text{otherwise}.
            \end{cases}
        \end{equation}
        Since $S\rbra*{s',v} = S\rbra*{s,v}$ and $\occur\rbra*{s',v} = \occur\rbra*{s,v}$ for every $v \notin \link^*\rbra*{v'}$, we only need to show that $S\rbra*{s',v}$ satisfies Eq.~(\ref{eq:sv}) defined for $\eertree\rbra*{sc}$ for every $v \in \link^*\rbra*{v'} \setminus \cbra*{\even, \odd}$. 
        Now we will show that $S\rbra*{s',\link^k\rbra*{v'}}$ satisfies Eq.~(\ref{eq:sv}) defined for $\eertree\rbra*{sc}$ for every $k \geq 0$. 
        That is, for every $k \geq 0$ with $\link^k\rbra*{v'} \notin \cbra*{\even, \odd}$, it holds that
        \begin{equation} \label{eq:sv-for-push-back}
            \occur\rbra*{s',\link^k\rbra*{v'}} = S\rbra*{s',\link^k\rbra*{v'}} \cup \bigcup_{\link\rbra*{u} = \link^k\rbra*{v'}} \rbra*{\occur\rbra*{s',u} \cup \overline{\occur}\rbra*{s',u} }.
        \end{equation}
        The proof is split into two cases.
        
        \textbf{Case 1}. For $k = 0$, $\link^0\rbra*{v'} = v'$. By Eq.~(\ref{eq:occur'-by-occur}), the right hand side of Eq.~(\ref{eq:sv-for-push-back}) becomes
        \begin{align*}
            & ~~~~ S\rbra*{s',v'} \cup 
            \bigcup_{\link\rbra*{u} = v'} \rbra*{\occur\rbra*{s',u} \cup \overline{\occur}\rbra*{s',u}} \\
            & = S\rbra*{s,v'} \cup \cbra*{\abs*{sc} - \len\rbra*{v'} + 1} \cup 
            \bigcup_{\link\rbra*{u} = v'} \rbra*{\occur\rbra*{s,u} \cup 
            \overline{\occur}\rbra*{s,u}
            } \\
            & = \occur\rbra*{s,v'} \cup \cbra*{\abs*{sc} - \len\rbra*{v'} + 1} \\
            & = \occur\rbra*{s',v'}.
        \end{align*}
        
        \textbf{Case 2}. For $k \geq 1$ such that $\link^k\rbra*{v'} \notin \cbra{\even, \odd, v'}$, by Eq.~(\ref{eq:occur'-by-occur}), we have
        \[
            \occur\rbra*{s',\link^{k-1}\rbra*{v'}} = \occur\rbra*{s,\link^{k-1}\rbra*{v'}} \cup \cbra*{\abs*{sc} - \len\rbra*{\link^{k-1}\rbra*{v'}} + 1}.
        \]
        With this, we further have
        \begin{align*}
        \overline{\occur}\rbra*{s', \link^{k-1}\rbra*{v'}}
        & = \set{i + \len\rbra*{\link^{k-1}\rbra*{v'}} - \len\rbra*{\link^k\rbra*{v'}}}{i \in \occur\rbra*{s',\link^{k-1}\rbra*{v'}}} \\
        & = 
        \overline{\occur}\rbra*{s, \link^{k-1}\rbra*{v'}}
        \cup \cbra*{\abs*{sc} - \len\rbra*{\link^{k}\rbra*{v'}} + 1}.
        \end{align*}
        The right hand side of Eq.~(\ref{eq:sv-for-push-back}) becomes
        \begin{align*}
            & ~~~~ S\rbra*{s',\link^k\rbra*{v'}} \cup \bigcup_{\link\rbra*{u} = \link^k\rbra*{v'}} \rbra*{\occur\rbra*{s',u} \cup 
            \overline{\occur}\rbra*{s', u}
            } \\
            & = S\rbra*{s,\link^k\rbra*{v'}} \cup \bigcup_{\substack{\link\rbra*{u} = \link^k\rbra*{v'} \\ u \neq \link^{k-1}\rbra*{v'}}} \rbra*{\occur\rbra*{s,u} \cup 
            \overline{\occur}\rbra*{s, u}
            }
            \\
            & ~~~~
            \cup \occur\rbra*{s',\link^{k-1}\rbra*{v'}}
            \cup  \overline{\occur}\rbra*{s',\link^{k-1}\rbra*{v'}}
            \\
            & = S\rbra*{s,\link^k\rbra*{v'}} \cup \bigcup_{\link\rbra*{u} = \link^k\rbra*{v'}} 
            \rbra*{\occur\rbra*{s,u} \cup \overline{\occur}\rbra*{s, u} }
            \\
            & ~~~~ \cup \cbra*{\abs*{sc} - \len\rbra*{\link^{k-1}\rbra*{v'}} + 1} \cup \cbra*{\abs*{sc} - \len\rbra*{\link^{k}\rbra*{v'}} + 1} \\
            & = \occur\rbra*{s,\link^k\rbra*{v'}} \cup \cbra*{\abs*{sc} - \len\rbra*{\link^{k-1}\rbra*{v'}} + 1} \cup \cbra*{\abs*{sc} - \len\rbra*{\link^{k}\rbra*{v'}} + 1} \\
            & = \occur\rbra*{s',\link^k\rbra*{v'}} \cup \cbra*{\abs*{sc} - \len\rbra*{\link^{k-1}\rbra*{v'}} + 1}.
        \end{align*}
        It remains to show that $\abs*{sc} - \len\rbra*{\link^{k-1}\rbra*{v'}} + 1 \in \occur\rbra*{s',\link^k\rbra*{v'}}$ to complete the proof. This can be seen by Eq.~(\ref{eq:occur'-by-occur}) that $\abs*{sc} - \len\rbra*{\link^{k-1}\rbra*{v'}} + 1 \in \occur\rbra*{s',\link^{k-1}\rbra*{v'}}$, which means that $s'\substr{\abs*{sc} - \len\rbra*{\link^{k-1}\rbra*{v'}} + 1}{\abs{sc}} = \str\rbra*{\link^{k-1}\rbra*{v'}}$. On the other hand, $\str\rbra*{\link^{k}\rbra*{v'}}$ is a proper prefix of $\str\rbra*{\link^{k-1}\rbra*{v'}}$. Therefore, 
        \[
            \str\rbra*{\link^{k}\rbra*{v'}} = s'\substr{\abs*{sc} - \len\rbra*{\link^{k-1}\rbra*{v'}} + 1}{\abs*{sc} - \len\rbra*{\link^{k-1}\rbra*{v'}} + \len\rbra*{\link^{k}\rbra*{v'}}},
        \]
        which means that $\abs*{sc} - \len\rbra*{\link^{k-1}\rbra*{v'}} + 1 \in \occur\rbra*{s',\link^{k}\rbra*{v'}}$. 
        
        As a result, Eq.~(\ref{eq:sv-for-push-back}) is verified, which means that $S\rbra*{s', v}$ satisfies Eq.~(\ref{eq:sv}) defined for $\eertree\rbra*{sc}$.
    \end{proof}
    
    \begin{lemma} [Correctness of \texttt{push\_front} by occurrence recording] \label{lemma:indirect-recording-pop-back}
        Let $s$ be a string and $c$ be a character. 
        Suppose $S\rbra*{s,v}$ defined for $\eertree\rbra*{s}$ satisfies Eq.~(\ref{eq:sv}). 
        Let $v' = \node\rbra*{\prepal\rbra*{cs, 1}}$, and
        \[
            S\rbra*{cs,v} = \begin{cases}
                \rbra*{S\rbra*{s,v} + 1} \cup \cbra*{1}, & v = v', \\
                S\rbra*{s,v} + 1, & \text{otherwise}, 
            \end{cases}
        \]
        where $S\rbra*{s,v} = \emptyset$ if $v \notin \eertree\rbra*{s}$, and $A + a = \set{x + a}{x \in A}$ for a set $A$ and an element $a$. 
        Then $S\rbra*{cs, v}$ satisfies Eq.~(\ref{eq:sv}) defined for $\eertree\rbra*{cs}$.
    \end{lemma}
    
    \begin{proof}
        Because of symmetry, the proof is similar to that of Lemma \ref{lemma:indirect-recording-push-back}.
    \end{proof}
    
    \subsection{\texttt{pop\_back} and \texttt{pop\_front}} \label{sec:pop-occurrence-recording}
    
    Now we consider how to maintain $S\rbra*{s, v}$ for \texttt{pop\_back} and \texttt{pop\_front} operations. Here, we use the variable $\texttt{S}\sbra*{v}$ in our algorithm, which ought to be $S\rbra*{s, v}$ for the current string $s$. The algorithms for \texttt{pop\_back} and \texttt{pop\_front} are given as follows.
    
    \begin{itemize}
        \item \texttt{pop\_back}: 
        \begin{enumerate}
            \item Let $v$ be the node of the longest palindromic suffix $\sufpal\rbra*{s, \abs*{s}}$ of $s$. 
            \item If $\str\rbra*{v}$ is unique in $s$, then delete $v$ from the eertree. This will modify $\eertree\rbra*{s}$ to $\eertree\rbra*{s\substr{1}{\abs*{s}-1}}$.
            \item For every node $u$ in $\sufnode\rbra*{s, \abs*{s}}$, delete the start position $\abs*{s} - \len\rbra*{u} + 1$ of the suffix $\str\rbra*{u}$ of $s$ from $\texttt{S}\sbra*{u}$.
            \item If $\str\rbra*{\link\rbra*{v}}$ is not the empty string, add the start position $\abs*{s} - \len\rbra*{v} + 1$ of $\str\rbra*{\link\rbra*{v}}$ into $S\sbra*{\link\rbra*{v}}$.
        \end{enumerate}
        \item \texttt{pop\_front}: 
        \begin{enumerate}
            \item Let $v$ be the node of the longest palindromic prefix $\prepal\rbra*{s, 1}$ of $s$. 
            \item If $\str\rbra*{v}$ is unique in $s$, then delete $v$ from the eertree. This will modify $\eertree\rbra*{s}$ to $\eertree\rbra*{s\substr{2}{\abs*{s}}}$.
            \item For every node $u$ in $\prenode\rbra*{s, 1}$, delete the start position $1$ of the prefix $\str\rbra*{u}$ of $s$ from $\texttt{S}\sbra*{u}$.
            \item If $\str\rbra*{\link\rbra*{v}}$ is not the empty string, add the start position $\len\rbra*{v} - \len\rbra*{\link\rbra*{v}} + 1$ of $\str\rbra*{\link\rbra*{v}}$ into $S\sbra*{\link\rbra*{v}}$.
            \item Shift left all elements in $\texttt{S}\sbra*{u}$ by $1$ for all nodes $u$.
        \end{enumerate}
    \end{itemize}
    Similar to the algorithms for \texttt{push\_back} and \texttt{push\_front} in Section \ref{sec:push-occurrence-recording}, the algorithms for \texttt{pop\_back} and \texttt{pop\_front} given here are also almost symmetric, with the only exception that all elements in $\texttt{S}\sbra*{u}$ are shifted left by $1$ for all nodes $u$. To avoid modifying every element stored in $\texttt{S}\sbra*{u}$, we use the global indices as in Section \ref{sec:push-occurrence-recording}, and the algorithms for \texttt{pop\_back} and \texttt{pop\_front} are restated as follows.
    \begin{itemize}
        \item \texttt{pop\_back}: 
        \begin{enumerate}
            \item Let $v$ be the node of the longest palindromic suffix $\sufpal\rbra*{s, \abs*{s}}$ of $s$. 
            \item Modify $\eertree\rbra*{s}$ to $\eertree\rbra*{s\substr{1}{\abs*{s}-1}}$.
            \item For every node $u$ in $\sufnode\rbra*{s, \abs*{s}}$, delete the start global position $\edpos - \len\rbra*{u}$ of the suffix $\str\rbra*{u}$ of $s$ from $\texttt{S}\sbra*{u}$.
            \item If $\str\rbra*{\link\rbra*{v}}$ is not the empty string, add the start global position $\edpos - \len\rbra*{v}$ of $\str\rbra*{\link\rbra*{v}}$ into $S\sbra*{\link\rbra*{v}}$.
            \item Decrement $\edpos$ by $1$. Then, $[\stpos, \edpos)$ indicates the range of string $s\substr{1}{\abs*{s}-1}$. 
        \end{enumerate}
        \item \texttt{pop\_front}: 
        \begin{enumerate}
            \item Let $v$ be the node of the longest palindromic prefix $\prepal\rbra*{s, 1}$ of $s$. 
            \item Modify $\eertree\rbra*{s}$ to $\eertree\rbra*{s\substr{2}{\abs*{s}}}$.
            \item For every node $u$ in $\prenode\rbra*{s, 1}$, delete the start global position $\stpos$ of the prefix $\str\rbra*{u}$ of $s$ from $\texttt{S}\sbra*{u}$.
            \item If $\str\rbra*{\link\rbra*{v}}$ is not the empty string, add the start global position $\stpos + \len\rbra*{v} - \len\rbra*{\link\rbra*{v}}$ of $\str\rbra*{\link\rbra*{v}}$ into $S\sbra*{\link\rbra*{v}}$.
            \item Increment $\stpos$ by $1$. Then, $[\stpos, \edpos)$ indicates the range of string $s\substr{2}{\abs*{s}}$. 
        \end{enumerate}
    \end{itemize}
    
    For completeness, we provide formal and detailed descriptions of the algorithms for \texttt{pop\_back} and \texttt{pop\_front} in Algorithm \ref{algo:occur:pop_back} and Algorithm \ref{algo:occur:pop_front}, respectively. 
    
    \begin{algorithm}[!htp]
        \caption{$\texttt{pop\_back}\rbra*{}$ via occurrence recording method}
        \label{algo:occur:pop_back}
        \begin{algorithmic}[1]

        \State $s \gets \data\substr{\stpos}{\edpos-1}$.
        \State $v \gets \argmax_{u \in \texttt{sufnode}\sbra*{\edpos - 1}} \len\rbra*{u}$. \Comment{This ensures that $\str\rbra*{v} = \sufpal\rbra*{s, \abs*{s}}$.}
        \If {$\linkcnt\rbra*{s,v} = 0$ and $\abs*{\texttt{S}\sbra*{v}} = 1$} \Comment{By Lemma \ref{lemma:unique-by-sv}, it means that $\str\rbra*{v}$ is unique in $s$.}
            \State Delete $v$ from the eertree.
        \EndIf
        \For {every $u \in \texttt{sufnode}\sbra*{\edpos - 1}$}
            \State Delete $\edpos - \len\rbra*{u}$ from $\texttt{S}\sbra*{u}$.
            \State Delete $u$ from $\texttt{prenode}\sbra*{\edpos - \len\rbra*{u}}$ and $\texttt{sufnode}\sbra*{\edpos - 1}$.
        \EndFor
        \If {$\len\rbra*{\link\rbra*{v}} \geq 1$}
            \State Add $\edpos - \len\rbra*{v}$ into $\texttt{S}\sbra*{\link\rbra*{v}}$.
            \State Add $\link\rbra*{v}$ into $\texttt{prenode}\sbra*{\edpos - \len\rbra*{v}}$.
            \State Add $\link\rbra*{v}$ into $\texttt{sufnode}\sbra*{\edpos - \len\rbra*{v} + \len\rbra*{\link\rbra*{v}} - 1}$.
        \EndIf
        \State $\edpos \gets \edpos - 1$.
        
        \end{algorithmic}
    \end{algorithm}
    
    \begin{algorithm}[!htp]
        \caption{$\texttt{pop\_front}\rbra*{}$ via occurrence recording method}
        \label{algo:occur:pop_front}
        \begin{algorithmic}[1]

        \State $s \gets \data\substr{\stpos}{\edpos-1}$.
        \State $v \gets \argmax_{u \in \texttt{prenode}\sbra*{\stpos}} \len\rbra*{u}$. \Comment{This ensures that $\str\rbra*{v} = \prepal\rbra*{s, 1}$.}
        \If {$\linkcnt\rbra*{s,v} = 0$ and $\abs*{\texttt{S}\sbra*{v}} = 1$} \Comment{By Lemma \ref{lemma:unique-by-sv}, it means that $\str\rbra*{v}$ is unique in $s$.}
            \State Delete $v$ from the eertree.
        \EndIf
        \For {every $u \in \texttt{prenode}\sbra*{\stpos}$}
            \State Delete $\stpos$ from $\texttt{S}\sbra*{u}$.
            \State Delete $u$ from $\texttt{prenode}\sbra*{\stpos}$ and $\texttt{sufnode}\sbra*{\stpos + \len\rbra*{u} - 1}$.
        \EndFor
        \If {$\len\rbra*{\link\rbra*{v}} \geq 1$}
            \State Add $\stpos + \len\rbra*{v} - \len\rbra*{\link\rbra*{v}}$ into $\texttt{S}\sbra*{\link\rbra*{v}}$.
            \State Add $\link\rbra*{v}$ into $\texttt{prenode}\sbra*{\stpos + \len\rbra*{v} - \len\rbra*{\link\rbra*{v}}}$.
            \State Add $\link\rbra*{v}$ into  $\texttt{sufnode}\sbra*{\stpos + \len\rbra*{v} - 1}$.
        \EndIf
        \State $\stpos \gets \stpos + 1$.
        \end{algorithmic}
    \end{algorithm}
    
     The correctness of the algorithms for \texttt{push\_back} and \texttt{push\_front} is based on the following facts, Lemma \ref{lemma:occurrence-recording-pop-back} and \ref{lemma:occurrence-recording-pop-front}. 
    
    \begin{lemma}[Correctness of \texttt{pop\_back} by occurrence recording] \label{lemma:occurrence-recording-pop-back}
        Let $s$ be a non-empty string. 
        Suppose $S\rbra*{s,v}$ defined for $\eertree\rbra*{s}$ satisfies Eq.~(\ref{eq:sv}). 
        Let $v' = \node\rbra*{\sufpal\rbra*{s, \abs*{s}}}$, and
        \[
            S\rbra*{s',v} = \begin{cases}
                \rbra*{ S\rbra*{s,v} \setminus \cbra*{\abs*{s} - \len\rbra*{v} + 1} } \cup \cbra*{\abs*{s} - \len\rbra*{v'} + 1}, & v = \link\rbra*{v'} \text{ and } \len\rbra*{\link\rbra*{v'}} \geq 1, \\
                S\rbra*{s,v} \setminus \cbra*{\abs*{s} - \len\rbra*{v} + 1}, & \text{otherwise},
            \end{cases}
        \]
        where $s' = s\substr{1}{\abs*{s}-1}$. 
        Then $S\rbra*{s\substr{1}{\abs*{s}-1}, v}$ satisfies Eq.~(\ref{eq:sv}) defined for $\eertree\rbra*{s\substr{1}{\abs*{s}-1}}$.
    \end{lemma}

    \begin{proof}
        Let $\occur\rbra*{s,v}$ and $\occur\rbra*{s',v}$ be the set of occurrences of $\str\rbra*{v}$ in $s$ and $s' = s\substr{1}{\abs*{s}-1}$, respectively. According to Eq.~(\ref{eq:def-occur}), it is straightforward that
        \begin{equation} \label{eq:occur'-by-occur-for-pop-back}
            \occur\rbra*{s',v} = \occur\rbra*{s,v} \setminus \cbra*{\abs*{s} - \len\rbra*{v} + 1} = \begin{cases}
                \occur\rbra*{s,v} \setminus \cbra*{\abs*{s} - \len\rbra*{v} + 1}, & v \in \link^*\rbra*{v'}, \\
                \occur\rbra*{s,v}, & \text{otherwise}. 
            \end{cases}
        \end{equation}
        Since $S\rbra*{s',v} = S\rbra*{s,v}$ and $\occur\rbra*{s',v} = \occur\rbra*{s,v}$ for every $v \notin \link^*\rbra*{v'}$, we only need to show that $S\rbra*{s',v}$ satisfies Eq.~(\ref{eq:sv}) defined for $\eertree\rbra*{s'}$ for every $v \in \link^*\rbra*{v'} \setminus \cbra*{\even, \odd}$. Now we will show that $S\rbra*{s',\link^k\rbra*{v'}}$ satisfies Eq.~(\ref{eq:sv}) defined for $\eertree\rbra*{s'}$ for every $k \geq 0$. That is, for every $k \geq 0$ such that $\link^k\rbra*{v'} \notin \cbra*{\even, \odd}$, it holds that
        \begin{equation} \label{eq:sv-for-pop-back}
            \occur\rbra*{s',\link^k\rbra*{v'}} = S\rbra*{s',\link^k\rbra*{v'}} \cup \bigcup_{\link\rbra*{u} = \link^k\rbra*{v'}} \rbra*{\occur\rbra*{s',u} \cup \overline{\occur}\rbra*{s',u}}.
        \end{equation}
        The proof is split into three cases. 
        
        \textbf{Case 1}. $k = 0$. In this case, $\link^k\rbra*{v'} = \link^0\rbra*{v'} = v'$. Let $u$ be a node such that $\link\rbra*{u} = v'$. By Eq.~(\ref{eq:occur'-by-occur-for-pop-back}), we know that $\abs*{s} - \len\rbra*{u} + 1 \notin \occur\rbra*{s,u} = \occur\rbra*{s',u}$, thus $\abs{s} - \len\rbra*{v'} + 1 \notin 
        \overline{\occur}\rbra*{s',u}
        $. Therefore, the right hand side of Eq.~(\ref{eq:sv-for-pop-back}) becomes
        \begin{align*}
            & ~~~~ S\rbra*{s',v'} \cup \bigcup_{\link\rbra*{u} = v'} \rbra*{\occur\rbra*{s',u} \cup \overline{\occur}\rbra*{s',u}} \\
            & = \rbra*{ S\rbra*{s,v'} \setminus \cbra*{\abs*{s} - \len\rbra*{v'} + 1} } \cup \bigcup_{\link\rbra*{u} = v'} 
            \rbra*{\occur\rbra*{s,u} \cup \overline{\occur}\rbra*{s,u}}
            \\
            & = \rbra*{ S\rbra*{s,v'} \cup \bigcup_{\link\rbra*{u} = v'} \rbra*{\occur\rbra*{s,u} \cup \overline{\occur}\rbra*{s,u}} } \setminus \cbra*{\abs*{s} - \len\rbra*{v'} + 1} \\
            & = \occur\rbra*{s,v'} \setminus \cbra*{\abs*{s} - \len\rbra*{v'} + 1} \\
            & = \occur\rbra*{s',v'}.
        \end{align*}
        
        \textbf{Case 2}. $k = 1$. We only need to consider the case that $\len\rbra*{\link\rbra*{v'}} \geq 1$. We note that
        \[
            S\rbra*{s',\link\rbra*{v'}} = S\rbra*{s,\link\rbra*{v'}} \setminus \cbra*{\abs*{s} - \len\rbra*{\link\rbra*{v'}} + 1} \cup \cbra*{\abs*{s} - \len\rbra*{v'} + 1}.
        \]
        Let $u$ be a node such that $\link\rbra*{u} = \link\rbra*{v'}$. If $u \neq v'$, by Eq.~(\ref{eq:occur'-by-occur-for-pop-back}), we have $\abs*{s} - \len\rbra*{u} + 1 \notin \occur\rbra*{s,u}$, thus $\abs*{s} - \len\rbra*{\link\rbra*{v'}} + 1 \notin \set{i + \len\rbra*{u} - \len\rbra*{\link\rbra*{v'}}}{i \in \occur\rbra*{s,u}} = \overline{\occur}\rbra*{s, u}$. 
        By Eq.~(\ref{eq:def-occur}), $\abs*{s} - \len\rbra*{\link\rbra*{v'}} + 1 \notin \occur\rbra*{s,u}$ because $\abs*{s} - \len\rbra*{\link\rbra*{v'}} + 1 = \abs*{s} - \len\rbra*{\link\rbra*{u}} + 1 > \abs*{s} - \len\rbra*{u} + 1$. Note that these also hold for $u = v'$.
        
        The right hand side of Eq.~(\ref{eq:sv-for-pop-back}) becomes
        \begin{align*}
            & ~~~~ S\rbra*{s',\link\rbra*{v'}} \cup \bigcup_{\link\rbra*{u} = \link\rbra*{v'}} \rbra*{\occur\rbra*{s',u} \cup
            \overline{\occur}\rbra*{s',u}
            } \\
            & = \rbra*{ S\rbra*{s,\link\rbra*{v'}} \setminus \cbra*{\abs*{s} - \len\rbra*{\link\rbra*{v'}} + 1} \cup \cbra*{\abs*{s} - \len\rbra*{v'} + 1} }  \\
            & ~~~~ \cup \bigcup_{\substack{\link\rbra*{u} = \link\rbra*{v'} \\ u \neq v'}} \rbra*{\occur\rbra*{s,u} \cup \overline{\occur}\rbra*{s,u}} \\
            & ~~~~ \cup \rbra*{\occur\rbra*{s,v'} \setminus \cbra*{\abs*{s} - \len\rbra*{v'} + 1}} \cup \rbra*{
            \overline{\occur}\rbra*{s,v'}
            \setminus \cbra*{\abs*{s} - \len\rbra*{\link\rbra*{v'}} + 1}} \\
            & = \rbra*{ S\rbra*{s,\link\rbra*{v'}} \cup \bigcup_{\link\rbra*{u} = \link\rbra{v'}} \rbra*{\occur\rbra*{s,u} \cup \overline{\occur}\rbra*{s,u}} } \setminus \cbra*{\abs*{s} - \len\rbra*{\link\rbra*{v'}} + 1} \\
            & = \occur\rbra*{s,\link\rbra*{v'}} \setminus \cbra*{\abs*{s} - \len\rbra*{\link\rbra*{v'}} + 1} \\
            & = \occur\rbra*{s',\link\rbra*{v'}}.
        \end{align*}
        
        \textbf{Case 3}. $k \geq 2$. We only need to consider the case that $\len\rbra*{\link^k\rbra*{v'}} \geq 1$. Let $u$ be a node such that $\link\rbra*{u} = \link^k\rbra*{v'}$.  The right hand side of Eq.~(\ref{eq:sv-for-pop-back}) becomes
        \[
            S\rbra*{s',\link^k\rbra*{v'}} \cup \bigcup_{\link\rbra*{u} = \link^{k}\rbra*{v'}} \rbra*{\occur\rbra*{s',u} \cup 
            \overline{\occur}\rbra*{s',u}
            }
            = A_1 \cup A_2 \cup A_3 \cup A_4,
        \]
        where
        \[
            A_1 =  S\rbra*{s,\link^k\rbra*{v'}} \setminus  \cbra*{ \abs*{s} - \len\rbra*{\link^k\rbra*{v'}} + 1 } ,
        \]
        \[
            A_2 = \bigcup_{\substack{\link\rbra*{u} = \link^k\rbra*{v'} \\ u \neq \link^{k-1}\rbra*{v'}}} \rbra*{ 
            \occur\rbra*{s,u} \cup 
            \overline{\occur}\rbra*{s,u}
            },
        \]
        \[
            A_3 = \occur\rbra*{s,\link^{k-1}\rbra*{v'}} \setminus \cbra*{\abs*{s} - \len\rbra*{\link^{k-1}\rbra*{v'}} + 1},
        \]
        \[
            A_4 =  
            \overline{\occur}\rbra*{s,\link^{k-1}\rbra*{v'}}
            \setminus \cbra*{\abs*{s} - \len\rbra*{\link^k\rbra*{v'}} + 1} .
        \]
        In the sets $A_1$ and $A_4$, the element $\abs*{s} - \len\rbra*{\link^k\rbra*{v'}} + 1$ is removed. Next, we will show that this special element is not in either $A_2$ or $A_3$.
        
        By Lemma \ref{lemma:tail-notin-A2cupA3} (which will be shown later), we have
        \begin{equation} \label{eq:A1cupA2cupA3cupA3}
            A_1 \cup A_2 \cup A_3 \cup A_4 = \rbra*{ S\rbra*{s,\link^{k}\rbra*{v'}} \cup A_2 \cup A_3 \cup A_4' } \setminus \cbra*{\abs*{s} - \len\rbra*{\link^k\rbra*{v'}} + 1},
        \end{equation}
        where $A_4' =
            \overline{\occur}\rbra*{s,\link^{k-1}\rbra*{v'}}
            $.
        
        There is another special element $\abs*{s} - \len\rbra*{\link^{k-1}\rbra*{v'}} + 1$ being removed in $A_3$. Next, we will show that this special element is in either $A_2$ or $A_4'$. 
        
        By Lemma \ref{lemma:sp-in-A2cupA4} (which will be shown later) and Eq.~(\ref{eq:A1cupA2cupA3cupA3}), the right hand side of Eq.~(\ref{eq:sv-for-pop-back}) becomes
        \begin{align*}
            & ~~~~ A_1 \cup A_2 \cup A_3 \cup A_4 \\
            & = \rbra*{ S\rbra*{s,\link^{k}\rbra*{v'}} \cup A_2 \cup \occur\rbra*{s,\link^{k-1}\rbra*{v'}} \cup A_4' } \setminus \cbra*{\abs*{s} - \len\rbra*{\link^k\rbra*{v'}} + 1} \\
            & = \rbra*{ S\rbra*{s,\link^k\rbra*{v'}} \cup \bigcup_{\link\rbra*{u} = \link^k\rbra*{v'}} \rbra*{\occur\rbra*{s,u} \cup 
            \overline{\occur}\rbra*{s,u}
            } }
            \setminus \cbra*{\abs*{s} - \len\rbra*{\link^k\rbra*{v'}} + 1} \\
            & = \occur\rbra*{s,\link^k\rbra*{v'}} \setminus \cbra*{\abs*{s} - \len\rbra*{\link^k\rbra*{v'}} + 1} \\
            & = \occur\rbra*{s',\link^k\rbra*{v'}}.
        \end{align*}
    \end{proof}
    
    \begin{lemma}[Correctness of \texttt{pop\_front} by occurrence recording] \label{lemma:occurrence-recording-pop-front}
        Let $s$ be a non-empty string. 
        Suppose $S\rbra*{s,v}$ defined for $\eertree\rbra*{s}$ satisfies Eq.~(\ref{eq:sv}). 
        Let $v' = \node\rbra*{\prepal\rbra*{s, 1}}$, and
        \[
            S\rbra*{s',v} = \begin{cases}
                \rbra*{ \rbra*{ S\rbra*{s,v} \setminus \cbra*{1} } \cup \cbra*{\len\rbra*{v'} - \len\rbra*{\link\rbra*{v'}} + 1} } - 1, & v = \link\rbra*{v'} \text{ and } \len\rbra*{\link\rbra*{v'}} \geq 1, \\
                \rbra*{S\rbra*{s,v} \setminus \cbra*{1}} - 1, & \text{otherwise}, 
            \end{cases}
        \]
        where $s' = s\substr{2}{\abs*{s}}$, and $A - a = \set{x - a}{x \in A}$ for a set $A$ and an element $a$.
        Then $S\rbra*{s\substr{2}{\abs*{s}},v}$ satisfies Eq.~(\ref{eq:sv}) defined for $\eertree\rbra*{s\substr{2}{\abs*{s}}}$.
    \end{lemma}
    \begin{proof}
        Because of symmetry, the proof is similar to that of Lemma \ref{lemma:occurrence-recording-pop-back}.
    \end{proof}

    \subsection{Complexity analysis}
    
    \begin{theorem} [Double-ended eertree by occurrence recording]
        Double-ended eertrees can be implemented with amortized time complexity per operation $O\rbra*{\log\rbra*{n}+\log\rbra*{\sigma}}$ and worst-case space complexity per operation $O\rbra*{\log\rbra*{\sigma}}$, where $\sigma$ is the size of the alphabet and $n$ is the length of the current string. More precisely,
        \begin{itemize}
            \item A \textup{\texttt{push\_back}} or \textup{\texttt{push\_front}} operation requires worst-case time complexity $O\rbra*{\log\rbra*{n}+\log\rbra*{\sigma}}$ and worst-case space complexity $O\rbra*{\log\rbra*{\sigma}}$.
            \item A \textup{\texttt{pop\_back}} or \textup{\texttt{pop\_front}} operation requires amortized time complexity $O\rbra*{\log\rbra*{n}}$ and worst-case space complexity $O\rbra*{1}$.
        \end{itemize}
    \end{theorem}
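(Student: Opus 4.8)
The statement is a pure complexity claim: correctness of the four update procedures (Algorithms~\ref{algo:occur:push_back}--\ref{algo:occur:pop_front}) is already furnished by Lemma~\ref{lemma:indirect-recording-push-back}, Lemma~\ref{lemma:indirect-recording-pop-back}, Lemma~\ref{lemma:occurrence-recording-pop-back} and Lemma~\ref{lemma:occurrence-recording-pop-front}, so the plan is only to account for time and space of each procedure under a concrete representation. I would store every reduced set $\texttt{S}\sbra*{v}$ and every family member $\texttt{prenode}\sbra*{i}$, $\texttt{sufnode}\sbra*{i}$ in a balanced binary search tree (e.g.\ a red-black tree), keyed so that insertions, deletions, membership tests and the maximum-by-$\len$ query used to recover $\sufpal\rbra*{s,\abs*{s}}$ and $\prepal\rbra*{s,1}$ each cost $O\rbra*{\log N}$ on a tree of size $N$; the eertree itself is kept together with its direct links in persistent binary search trees as in Lemma~\ref{lemma:dlink} and Proposition~\ref{prop:lsp-by-dlink}. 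The only size estimate needed is crude: since $\texttt{S}\sbra*{v}=S\rbra*{s,v}\subseteq\occur\rbra*{s,v}$ by Eq.~(\ref{eq:sv}), the total number of stored keys is at most $\sum_v\abs*{\occur\rbra*{s,v}}=O\rbra*{n^2}$, so every tree holds $O\rbra*{n^2}$ keys and one tree operation costs $O\rbra*{\log\rbra*{n^2}}=O\rbra*{\log\rbra*{n}}$. I would also note that each key of $\texttt{S}\sbra*{v}$ contributes exactly one entry to some $\texttt{prenode}\sbra*{\cdot}$ and one to some $\texttt{sufnode}\sbra*{\cdot}$, so the three families share the same total size.

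For \texttt{push\_back}, and symmetrically \texttt{push\_front}, the bound is worst case and immediate. Identifying $v$ is a single maximum query, costing $O\rbra*{\log n}$; obtaining $\eertree\rbra*{sc}$ from $\eertree\rbra*{s}$ by online construction creates at most one new node and costs $O\rbra*{\log\rbra*{\sigma}}$ time and space via the persistent direct links of Lemma~\ref{lemma:dlink}; the remaining work is a constant number of tree insertions into $\texttt{S}$, $\texttt{prenode}$ and $\texttt{sufnode}$, each $O\rbra*{\log n}$ time and $O\rbra*{1}$ space. The nominal ``shift'' of all keys in \texttt{push\_front} is absorbed for free by the global index system of Section~\ref{sec:global-index} (a single decrement of $\stpos$). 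Summing, a push runs in worst-case time $O\rbra*{\log\rbra*{n}+\log\rbra*{\sigma}}$ and worst-case space $O\rbra*{\log\rbra*{\sigma}}$.

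The interesting case is \texttt{pop\_back}, and symmetrically \texttt{pop\_front}, whose loop ranges over all of $\texttt{sufnode}\sbra*{\edpos-1}$ and may touch many keys in one call, so only an amortized bound is attainable. I would argue by the potential $\Phi=\sum_v\abs*{\texttt{S}\sbra*{v}}$: each push raises $\Phi$ by exactly $1$, while each pop inserts at most one new key (into $\texttt{S}\sbra*{\link\rbra*{v}}$) and deletes exactly $\abs*{\texttt{sufnode}\sbra*{\edpos-1}}$ keys in its loop. Hence over any sequence of $m$ operations the total number of key-deletions is bounded by the total number of key-insertions, which is at most $m$; charging each deletion its $O\rbra*{\log n}$ tree cost, the loop work summed over all pops is $O\rbra*{m\log n}$, i.e.\ amortized $O\rbra*{\log n}$ per pop. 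The fixed part of a pop (the maximum query, the uniqueness test via Lemma~\ref{lemma:unique-by-sv} using $\linkcnt\rbra*{s,v}$ and $\abs*{\texttt{S}\sbra*{v}}$, the eertree deletion, and the single post-loop insertion) is $O\rbra*{\log n}$ as well. For space, a pop creates at most one new key and otherwise only deletes, so its worst-case space is $O\rbra*{1}$. Equivalently, scaling $\Phi$ by $c\log n$ for a large enough constant $c$ makes the amortized cost of each individual pop $O\rbra*{\log n}$ directly, the deletions being paid by the drop in potential.

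The one genuine obstacle is exactly this amortization of the pop loop: it is sound only because a pop cannot regenerate keys cheaply, i.e.\ it adds at most a constant number of keys so that its many deletions are truly financed by earlier insertions. This is where I would invoke the explicit update rules of Lemma~\ref{lemma:occurrence-recording-pop-back} and Lemma~\ref{lemma:occurrence-recording-pop-front}, which show that, apart from the loop, a pop alters $S\rbra*{s,\link\rbra*{v'}}$ by a single element. The only mild subtlety, that $n$ changes along the sequence, is handled by bounding every $\log n$ by $\log N$ for the maximum length $N$ ever attained, which leaves the stated per-operation orders unchanged.
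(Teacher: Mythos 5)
Your proposal is correct and follows essentially the same route as the paper's proof: balanced binary search trees for $\texttt{S}$, $\texttt{prenode}$ and $\texttt{sufnode}$, worst-case accounting for the push operations via the $O\rbra*{\log\rbra*{\sigma}}$ online eertree construction plus $O\rbra*{1}$ tree insertions, and amortization of the pop loop by charging each deletion to the unique earlier insertion of that key (at most one per operation). Your potential function $\Phi=\sum_v\abs*{\texttt{S}\sbra*{v}}$ is just a formalization of the paper's charging argument, and your explicit remarks on tree sizes and on $n$ varying over the sequence only make the same analysis slightly more careful.
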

    
    \begin{proof}
        The correctness has already been proved right after providing the algorithms in Section \ref{sec:push-occurrence-recording} and Section \ref{sec:pop-occurrence-recording}. Next, we analyze the time and space complexity of each of the deque operations. For convenience, we use balanced binary search trees to store \texttt{S}, \texttt{prenode} and \texttt{sufnode}. 
    
        For one query of \texttt{push\_back} and \texttt{push\_front} operations, it is clear that no loop exists in their implementations (see Algorithm \ref{algo:occur:push_back} and Algorithm \ref{algo:occur:push_front}). By the online construction of eertree in \cite{RS18}, we need $O\rbra*{\log\rbra*{\sigma}}$ time and space in the worst case to convert $\eertree\rbra*{s}$ to $\eertree\rbra*{sc}$ or $\eertree\rbra*{cs}$. Also, there is only one insertion operation required on each of \texttt{S}, \texttt{prenode} and \texttt{sufnode}, which can be done in $O\rbra*{\log\rbra*{n}}$ time and $O\rbra*{1}$ space by binary tree insertions in the worst case. Therefore, a \textup{\texttt{push\_back}} or \textup{\texttt{push\_front}} operation requires worst-case time complexity $O\rbra*{\log\rbra*{n}+\log\rbra*{\sigma}}$ and worst-case space complexity $O\rbra*{\log\rbra*{\sigma}}$.
        
        For one query of \texttt{pop\_back} (resp. \texttt{pop\_front}) operations, the only loop is to delete the start position of $\str\rbra*{u}$ for every node $u$ in $\texttt{sufnode}\sbra*{\edpos-1}$ (resp. $\texttt{prenode}\sbra*{\stpos}$). According to the correctness of the algorithm, we conclude that $\texttt{S}\sbra*{u} = S\rbra*{s, u}$, $\texttt{sufnode}\sbra*{\edpos-1} = \sufnode\rbra*{s, \abs*{s}}$ and $\texttt{prenode}\sbra*{\stpos} = \prenode\rbra*{s, 1}$ at this moment. Next, we only consider the case of \texttt{pop\_back} because of symmetry. Note that for every node $u$ in $\sufnode\rbra*{s, \abs*{s}}$, the start position of the prefix $\str\rbra*{u}$ of $s$ must be in $S\rbra*{s, u}$. This follows that for every node $u$ in $\texttt{sufnode}\sbra*{\edpos-1}$, its start position $\edpos - \len\rbra*{u}$ is in $\texttt{S}\sbra*{u}$ before deleting it. Now we consider the total number of elements to be deleted. Note that every element to be deleted must have been added before. For all the four deque operations, there will be at most one element added per operation. Therefore, there will be amortized $O\rbra*{1}$ elements to be deleted per operation. So the time complexity per operation is amortized $O\rbra*{\log\rbra*{n}}$ and the space complexity is $O\rbra*{1}$ per operation in the worst case.
    \end{proof}

    \subsection{Technical lemmas}

    \begin{lemma} \label{lemma:tail-notin-A2cupA3}
        In the proof of Lemma \ref{lemma:occurrence-recording-pop-back}, if $\len\rbra*{\link^k\rbra*{v'}} \geq 1$, then $\abs*{s} - \len\rbra*{\link^k\rbra*{v'}} + 1 \notin A_2 \cup A_3$. 
        \end{lemma}
        \begin{proof}
            Let $u$ be a node such that $\link\rbra*{u} = \link^k\rbra*{v'}$, and then $\len\rbra*{u} > \len\rbra*{\link^k\rbra*{v'}}$. Then for every $i \in \occur\rbra*{u}$, we have $1 \leq i \leq \abs*{s} - \len\rbra*{u} + 1 < \abs*{s} - \len\rbra*{\link^k\rbra*{v'}} + 1$, which means $\abs*{s} - \len\rbra*{\link^k\rbra*{v'}} + 1 \notin \occur\rbra*{u}$ and therefore $\abs*{s} - \len\rbra*{\link^k\rbra*{v'}} + 1 \notin A_3$. 
            
            If $u \neq \link^{k-1}\rbra*{v'}$, by Eq.~(\ref{eq:occur'-by-occur-for-pop-back}), we know that $\abs*{s} - \len\rbra*{u} + 1 \notin \occur\rbra*{s,u}$. This immediately leads to
            \[
                \abs*{s} - \len\rbra*{\link^k\rbra*{v'}} + 1 \notin \set{i + \len\rbra*{u} - \len\rbra*{\link^k\rbra*{v'}}}{i \in \occur \rbra*{s,u}} = \overline{\occur}\rbra*{s, u}.
            \]
            This implies that $\abs*{s} - \len\rbra*{\link^k\rbra*{v'}} + 1 \notin A_2$.
        \end{proof}

        \begin{lemma} \label{lemma:sp-in-A2cupA4}
            In the proof of Lemma \ref{lemma:occurrence-recording-pop-back}, if $\len\rbra*{\link^k\rbra*{v'}} \geq 1$, then $\abs*{s} - \len\rbra*{\link^{k-1}\rbra*{v'}} + 1 \in A_2 \cup A_4'$.
        \end{lemma}

        To prove Lemma \ref{lemma:sp-in-A2cupA4}, we need some properties of eertrees.
        In the following, we show that the lengths of any node and its ancestors in the link tree are convex.
    \begin{lemma} \label{lemma:sub-arithmetic}
        Suppose $s$ is a string and $v$ is a node in $\eertree\rbra*{s}$. Let $v_1 = \link\rbra*{v}$ and $v_2 = \link\rbra*{v_1}$. If $\len\rbra*{v_2} \geq 1$, then
        $\len\rbra*{v} + \len\rbra*{v_2} \geq 2\len\rbra*{v_1}$.
    \end{lemma}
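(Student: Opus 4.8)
The plan is to recast the claimed inequality as a comparison of smallest periods. Write $a = \len\rbra*{v}$, $b = \len\rbra*{v_1}$, $c = \len\rbra*{v_2}$ and $P = \str\rbra*{v}$, $Q = \str\rbra*{v_1}$. Since $v_1 = \link\rbra*{v}$ and $v_2 = \link\rbra*{v_1}$ are the nodes of the longest proper palindromic suffixes, we have $a > b > c$, and by hypothesis $c = \len\rbra*{v_2} \geq 1$; moreover the target bound $\len\rbra*{v} + \len\rbra*{v_2} \geq 2\len\rbra*{v_1}$ is exactly $a - b \geq b - c$. So the first thing I would establish is that $a - b$ is the smallest period of $P$ and $b - c$ is the smallest period of $Q$. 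For this, by the palindrome symmetry $P\sbra*{i} = P\sbra*{a+1-i}$ the borders of $P$ are exactly its palindromic suffixes (a length-$\ell$ suffix is the reverse of the length-$\ell$ prefix, and a palindrome equals its own reverse). Hence $Q = P\substr{1}{b}$ is simultaneously a prefix and a suffix of $P$, its length $b$ is the longest proper border length, and therefore $a - b$ is the smallest period of $P$; the same reasoning applied to $Q$ gives that $b - c$ is the smallest period of $Q$. Equivalently, this is Lemma \ref{lemma:period-of-pal} read as: the prefix $P\substr{1}{a-p}$ is a palindrome iff $p$ is a period of $P$.

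Setting $p_1 := a - b$ and $p_2 := b - c$, the goal becomes $p_1 \geq p_2$, and I would prove it by a case split on the size of $p_1$ relative to $b$. If $p_1 \leq b$, then since $p_1$ is a period of $P$ and $Q = P\substr{1}{b}$ is a prefix of $P$ of length $b \geq p_1$, the defining equalities of the period $p_1$ of $P$ restrict to $Q$, so $p_1$ is a period of $Q$; because $p_2$ is the \emph{smallest} period of $Q$, we get $p_2 \leq p_1$, which is the required inequality. If instead $p_1 > b$, then using $c \geq 1$ we have $p_2 = b - c \leq b - 1 < b < p_1$, again giving $p_1 > p_2$. Combining the two cases yields $p_1 \geq p_2$, i.e., $a + c \geq 2b$, as wanted.

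I expect the main obstacle to be the first case, where a period of $P$ has to be transferred to its prefix $Q$: this transfer is legitimate only when $p_1 \leq b$ (otherwise the period is longer than $Q$ and imposes no constraint on it), which is precisely what forces the case analysis rather than a one-line argument. The complementary case $p_1 > b$ is exactly where the hypothesis $\len\rbra*{v_2} \geq 1$ earns its keep, keeping $p_2$ strictly below $b$ and hence below $p_1$. The remaining ingredients — that the borders of a palindrome coincide with its palindromic suffixes, and that the minimal period equals length minus the longest proper border length — are standard and follow directly from the palindrome symmetry together with Lemma \ref{lemma:period-of-pal}, so I would keep their verification brief.
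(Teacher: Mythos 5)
Your proposal is correct and follows essentially the same route as the paper's proof: both convert the statement into the inequality $\len\rbra*{v} - \len\rbra*{v_1} \geq \len\rbra*{v_1} - \len\rbra*{v_2}$ between the minimal periods of $\str\rbra*{v}$ and $\str\rbra*{v_1}$ via Lemma \ref{lemma:period-of-pal}, transfer the period of $\str\rbra*{v}$ to its prefix $\str\rbra*{v_1}$, and dispose of the regime where this transfer is unavailable by a separate trivial case (the paper dispatches $2\len\rbra*{v_1} \leq \len\rbra*{v}$ upfront via $2\len\rbra*{v_1} - \len\rbra*{v} \leq 0 \leq \len\rbra*{v_2}$, while you handle the equivalent case $p_1 > b$ using $\len\rbra*{v_2} \geq 1$). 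The only difference is this cosmetic reorganization of the case split, so the two arguments coincide in substance.
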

    \begin{proof}
        
        Note that $\str\rbra*{v}\substr{1}{\len\rbra*{v_1}} = \str\rbra*{v_1}$ is the longest palindromic proper prefix of $\str\rbra*{v}$.
        By Lemma \ref{lemma:period-of-pal}, $p = \len\rbra*{v} - \len\rbra*{v_1}$ is the minimal period of $\str\rbra*{v}$. Similarly, $p_1 = \len\rbra*{v_1} - \len\rbra*{v_2}$ is the minimal period of $\str\rbra*{v_1}$. Since every period of $\str\rbra*{v}$ is a period of $\str\rbra*{v_1}$, we have $\len\rbra*{v} - \len\rbra*{v_1} = p \geq p_1 = \len\rbra*{v_1} - \len\rbra*{v_2}$, i.e., $\len\rbra*{v} + \len\rbra*{v_2} \geq 2\len\rbra*{v_1}$.
    \end{proof}
    
    By Lemma \ref{lemma:sub-arithmetic}, we furthermore derive properties of complements of occurrences that will be useful in analyzing our algorithms.
    
    \begin{lemma} \label{lemma:overline-occur}
        Suppose $s$ is a string and $v$ is a node in $\eertree\rbra*{s}$. Let $i \in \occur\rbra*{s,v}$, $v_1 = \link\rbra*{v}$ and $v_2 = \link\rbra*{v_1}$. If $\len\rbra*{v_2} \geq 1$, then \[
        i + \len\rbra*{v} - \len\rbra*{v_1} \in \bigcup_{\link\rbra*{u} = v_2} \overline{\occur}\rbra*{s,u}.
        \]
    \end{lemma}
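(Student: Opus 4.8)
The plan is to reduce the membership claim to a local statement about the palindromic suffixes ending at a single position, and then exhibit a palindrome that witnesses it. Set $p = \len\rbra*{v} - \len\rbra*{v_1}$ and $p_1 = \len\rbra*{v_1} - \len\rbra*{v_2}$; these are the minimal periods of $\str\rbra*{v}$ and $\str\rbra*{v_1}$ (Lemma \ref{lemma:period-of-pal}), and Lemma \ref{lemma:sub-arithmetic}, which asserts $\len\rbra*{v} + \len\rbra*{v_2} \geq 2\len\rbra*{v_1}$, gives $p \geq p_1$. Write $j = i + \len\rbra*{v} - \len\rbra*{v_1} = i + p$ for the position we must locate. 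Applying Lemma \ref{lemma:occur-link} to $v$ with its ancestor $v_1$ yields $j \in \occur\rbra*{s, v_1}$; and since $\str\rbra*{v_2}$ is the longest proper palindromic suffix of the palindrome $\str\rbra*{v_1}$, it is also a prefix of $\str\rbra*{v_1}$, so $j \in \occur\rbra*{s, v_2}$ as well, and the occurrence $s\substr{j}{e}$ of $\str\rbra*{v_2}$ ends at $e := j + \len\rbra*{v_2} - 1$.

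First I would carry out the reduction. It suffices to find a node $u$ with $\link\rbra*{u} = v_2$ possessing an occurrence that ends at $e$: for such $u$ the start $k := e - \len\rbra*{u} + 1$ lies in $\occur\rbra*{s, u}$, and then $k + \len\rbra*{u} - \len\rbra*{\link\rbra*{u}} = e - \len\rbra*{v_2} + 1 = j$, so $j \in \overline{\occur}\rbra*{s, u}$, one of the sets in the union. To obtain such a $u$, I would show that $\str\rbra*{v_2}$ is \emph{not} the longest palindromic suffix of $s\substr{1}{e}$. Granting this, let $u$ be the node of the shortest palindromic suffix of $s\substr{1}{e}$ that is strictly longer than $\str\rbra*{v_2}$. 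As palindromic suffixes sharing the right end $e$ are nested, $\str\rbra*{v_2}$ is a proper palindromic suffix of $\str\rbra*{u}$, and by the minimality of $u$ no palindromic suffix of $\str\rbra*{u}$ has length strictly between $\len\rbra*{v_2}$ and $\len\rbra*{u}$; hence $\str\rbra*{v_2}$ is the \emph{longest} proper palindromic suffix of $\str\rbra*{u}$, i.e. $\link\rbra*{u} = v_2$, and $\str\rbra*{u}$ has an occurrence ending at $e$ as wanted.

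The crux is thus to exhibit a palindrome strictly longer than $\str\rbra*{v_2}$ ending at $e$. I would pass to coordinates relative to the occurrence $s\substr{i}{i + \len\rbra*{v} - 1}$ of $\str\rbra*{v}$, in which $e$ occupies relative position $m_e = p + \len\rbra*{v_2}$, and split on $p = p_1$ versus $p > p_1$. If $p = p_1$, then $m_e = p_1 + \len\rbra*{v_2} = \len\rbra*{v_1}$, so $e$ is precisely the right end of the prefix occurrence $s\substr{i}{i + \len\rbra*{v_1} - 1}$ of $\str\rbra*{v_1}$, a palindrome of length $\len\rbra*{v_1} > \len\rbra*{v_2}$ ending at $e$; equivalently one may take $u = v_1$ outright, since $i \in \occur\rbra*{s, v_1}$ gives $i + \len\rbra*{v_1} - \len\rbra*{v_2} = i + p_1 = j \in \overline{\occur}\rbra*{s, v_1}$ with $\link\rbra*{v_1} = v_2$. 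If $p > p_1$, I would invoke the symmetry of the palindrome $\str\rbra*{v}$: the relative interval $\sbra*{p_1 + 1, m_e}$ has endpoint sum $\rbra*{p_1 + 1} + \rbra*{p + \len\rbra*{v_2}} = \len\rbra*{v} + 1$ (using $\len\rbra*{v} = p + p_1 + \len\rbra*{v_2}$), hence is centrally symmetric in $\str\rbra*{v}$; the substring it delimits, namely $s\substr{i + p_1}{e}$, is therefore a palindrome, ends at $e$, and has length $m_e - p_1 = p - p_1 + \len\rbra*{v_2} > \len\rbra*{v_2}$. In either case $\str\rbra*{v_2}$ fails to be the longest palindromic suffix of $s\substr{1}{e}$, which closes the reduction.

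I expect the principal difficulty to be the reduction itself: recognizing that the correct witness for the position $j$ is not $v_1$ but the parent of $\str\rbra*{v_2}$ in the chain of palindromic suffixes ending at $e$, and that the inequality $p \geq p_1$ from Lemma \ref{lemma:sub-arithmetic} is exactly what forces this chain to rise strictly above $\str\rbra*{v_2}$ through the central-symmetry observation. The boundary case $p = p_1$ needs separate treatment, since there the symmetric interval degenerates to $\str\rbra*{v_2}$ itself and one must fall back on the prefix occurrence of $\str\rbra*{v_1}$.
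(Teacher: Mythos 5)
Your proof is correct and takes essentially the same route as the paper's: your witness palindrome $s\substr{i+p_1}{e}$ is exactly the paper's centrally symmetric substring $s'$, your degenerate case $p = p_1$ is the paper's Case~2.1 (where $u = v_1$ serves as the witness), and your node $u$ --- the shortest palindromic suffix of $s\substr{1}{e}$ exceeding $\str\rbra*{v_2}$ --- coincides with the paper's $u' = \link^{k-1}\rbra*{\node\rbra*{s'}}$. The only differences are organizational: you split uniformly on $p = p_1$ versus $p > p_1$ (subsuming the paper's separate cases $\len\rbra*{v_1} \leq \len\rbra*{v}/2$ and $\len\rbra*{v_1} > \len\rbra*{v}/2$), and you identify the witness node via nestedness of palindromic suffixes rather than by walking the link chain and transferring the occurrence through Lemma \ref{lemma:occur-link}.
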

    \begin{proof}
        We first show that 
        \[
            s' = s\substr{i+\len\rbra*{v_1}-\len\rbra*{v_2}}{i+\len\rbra*{v}-\len\rbra*{v_1}+\len\rbra*{v_2}-1}
        \]
        is a palindrome. To see this, we note that $s'$ has the same center as $s\substr{i}{i+\len\rbra*{v}-1} = \str\rbra*{v}$. It remains to show that $s' \neq \epsilon$. By Lemma \ref{lemma:sub-arithmetic}, we have $\len\rbra*{v_2} \geq 2\len\rbra*{v_1}-\len\rbra*{v}$. Then,
        \[
            \abs*{s'} = \len\rbra*{v} - 2\len\rbra*{v_1} + 2 \len\rbra*{v_2} \geq \len\rbra*{v_2} \geq 1.
        \]

        Then, we have the following two cases.

        \textbf{Case 1}. $\abs{s'} = \len\rbra{v_2}$. In this case, $\len\rbra*{v_2} = 2\len\rbra*{v_1} - \len\rbra*{v} \geq 1$, and we know that $s' = \str\rbra*{v_2}$ because $s'$ is a proper prefix of $s\substr{i+\len\rbra*{v}-\len\rbra*{v_1}}{i+\len\rbra*{v}-1} = \str\rbra*{v_1}$. Since $i \in \occur\rbra*{s,v}$, by Lemma \ref{lemma:occur-link}, we have $i \in \occur\rbra*{s,v_1}$, and thus $i + \len\rbra*{v} - \len\rbra*{v_1} = i + \len\rbra*{v_1} - \len\rbra*{v_2} \in \overline{\occur}\rbra*{s,v_1}$.

        \textbf{Case 2}. $\abs{s'} > \len\rbra{v_2}$. Let $u = \node\rbra*{s'}$. Note that $s\substr{i+\len\rbra*{v_1}-\len\rbra*{v_2}}{i+\len\rbra*{v_1}-1} = \str\rbra*{v_2}$ is a proper suffix of $s\substr{i}{i+\len\rbra*{v_1}-1} = \str\rbra*{v_1}$, and also a proper prefix of $s'$. From this, we know that there exists an integer $k \geq 1$ such that $\link^k\rbra*{u} = v_2$. Let $u' = \link^{k-1}\rbra*{u}$, then $\link\rbra*{u'} = v_2$. Since $i + \len\rbra*{v_1} - \len\rbra*{v_2} \in \occur\rbra*{s,u}$, by Lemma \ref{lemma:occur-link}, we have
        \[
            i + \len\rbra*{v_1} - \len\rbra*{v_2} + \len\rbra*{u} - \len\rbra*{u'} \in \occur\rbra*{s,u'}.
        \]
        Thus $i + \len\rbra*{v_1} - \len\rbra*{v_2} + \len\rbra*{u} - \len\rbra*{u'} + \len\rbra*{u'} - \len\rbra*{v_2} = i + \len\rbra*{v} - \len\rbra*{v_1} \in \overline{\occur}\rbra*{s,u'}$.
    \end{proof}

        Now we are ready to prove Lemma \ref{lemma:sp-in-A2cupA4}.
        
        \begin{proof} [Proof of Lemma \ref{lemma:sp-in-A2cupA4}]
            Let $v = \link^{k-2}\rbra*{v'}$, and $i = \abs*{s} - \len\rbra*{v} + 1 \in \occur\rbra*{s,v}$. 
            By Lemma \ref{lemma:overline-occur}, we have
            \[
                i + \len\rbra*{v} - \len\rbra*{\link\rbra*{v}} \in \bigcup_{\link\rbra*{u} = \link^2\rbra*{v}} \overline{\occur}\rbra*{s,u},
            \]
            which follows that
            \[
                \abs*{s} - \len\rbra*{\link^{k-1}\rbra*{v'}} + 1 \in \bigcup_{\link\rbra*{u} = \link^k\rbra*{v'}} \overline{\occur}\rbra*{s,u} \subseteq A_2 \cup A_4'.
            \]
        \end{proof}

\section{Applications} \label{sec:app-app}
    
    As an application, we apply our double-ended eertree in several computational tasks (see Table \ref{tab:app} for an overview). 
    
    \begin{table}[htp]
    \centering
    \begin{threeparttable}
    \caption{Applications of double-ended eertrees.}
    \label{tab:app}
    \begin{tabular}{ccc}
    \toprule
    Computational Task  & Our Method & Known Methods                                        \\ \midrule
    \begin{tabular}[c]{@{}c@{}} \textsc{Counting Distinct} \\ \textsc{Palindromic Substrings} \end{tabular} & $O\rbra*{n\sqrt{q}}$ ${}^*$ & \begin{tabular}[c]{@{}c@{}} $O\rbra*{nq}$ ${}^\dag$ \\ $O\rbra*{\rbra*{n+q}\log\rbra*{n}}$ \cite{RS17} \end{tabular} \\ \midrule
    \begin{tabular}[c]{@{}c@{}} \textsc{Longest} \\ \textsc{Palindromic Substring} \end{tabular}  & $O\rbra*{n\sqrt{q}}$ ${}^*$  & \begin{tabular}[c]{@{}c@{}} $O\rbra*{nq}$ ${}^\dag$ \\ $O\rbra*{n \log^2\rbra*{n} + q \log\rbra*{n}}$ \cite{ACPR20} \end{tabular} \\ \midrule
    \begin{tabular}[c]{@{}c@{}} \textsc{Shortest Unique} \\ \textsc{Palindromic Substring} \end{tabular}     & $O\rbra*{n\sqrt{q}\log\rbra*{n}}$ ${}^*$  & $O\rbra*{nq}$ ${}^\dag$                                           \\ \midrule
    \textsc{Shortest Absent Palindrome}      & $O\rbra*{n\sqrt{q} + q\log\rbra*{n}}$ ${}^*$  & $O\rbra*{nq+q\log\rbra*{n}}$ ${}^\dag$                                          \\ \midrule
    \begin{tabular}[c]{@{}c@{}} \textsc{Counting Rich Strings} \\ \textsc{with Given Word} \end{tabular} & $O\rbra*{n\sigma + k\sigma^k}$   & $O\rbra*{k\sigma^k\rbra*{n+k}}$ ${}^\dag$                                             \\ \bottomrule
    \end{tabular}
    \begin{tablenotes}
      \small
      \item ${}^*$ Complexity for offline queries.
      \item ${}^\dag$ These are straightforward algorithms equipped with eertrees \cite{RS18} or other algorithms concerning palindromes (e.g., \cite{Man75,KMP77,GPR10}) as a subroutine.
    \end{tablenotes}
    \end{threeparttable}
    \end{table}
    
    \paragraph{Range queries concerning palindromes.} We studied online and offline range queries concerning palindromes on a string $s\substr{1}{n}$ of length $n$. Each query is of the form $\rbra*{l, r}$ and asks problems of different types on substring $s\substr{l}{r}$:
    \begin{itemize}
        \item \textsc{Counting Distinct Palindromic Substrings}: Find the number of distinct palindromic substring of $s\substr{l}{r}$. 
        \item \textsc{Longest Palindromic Substring}: Find the longest palindromic substring of $s\substr{l}{r}$. 
        \item \textsc{Shortest Unique Palindromic Substring}: Find the shortest unique palindromic substring of $s\substr{l}{r}$. 
        \item \textsc{Shortest Absent Palindrome}: Find the shortest absent palindrome of $s\substr{l}{r}$. 
    \end{itemize}
    
    \begin{corollary} [Corollary \ref{corollary:count-pal}, \ref{corollary:longest-pal}, \ref{corollary:sups}, \ref{corollary:sap} and \ref{corollary:online} combined] \label{corollary:range-queries}
        Online and offline range queries of type \textsc{Counting Distinct Palindromic Substrings}, \textsc{Longest Palindromic Substring}, \textsc{Shortest Unique Palindromic Substring} and \textsc{Shortest Absent Palindrome} can be answered with total time and space complexity $\tilde O\rbra*{ n \sqrt{q} }$\footnote{$\tilde O\rbra*{\cdot}$ suppresses polylogarithmic factors of $n$, $q$ and $\sigma$.}, where $n$ is the length of the string and $q \leq n^2$ is the number of queries.
        
        Moreover, the space complexity of offline queries can be reduced to $O\rbra*{n}$ when $\sigma = O\rbra*{1}$. 
    \end{corollary}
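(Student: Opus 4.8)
The plan is to reduce all four query types to maintaining a double-ended eertree over a sliding window $s\substr{l}{r}$ and reading off a single statistic from it. For the offline case I would invoke Mo's algorithm: partition $\sbra*{1,n}$ into blocks of width $B$, sort the $q$ queries by the block containing $l$ and then by $r$, and process them while maintaining the current window with the four deque operations guaranteed by Theorem \ref{thm:main}. A standard amortization shows the right endpoint moves $O\rbra*{n^2/B}$ times in total (monotonically within each block) while the left endpoint moves $O\rbra*{qB}$ times; balancing at $B = n/\sqrt{q}$ (legitimate since the hypothesis $q \le n^2$ forces $B \ge 1$) gives $O\rbra*{n\sqrt{q}}$ deque operations overall. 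By Theorem \ref{thm:main} each operation costs $O\rbra*{\log\rbra*{\sigma}}$ time, so the eertree itself is maintained in $O\rbra*{n\sqrt{q}\log\rbra*{\sigma}}$ time.

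Next I would specify what each query reads from the window. The structural fact I would lean on is that a single deque operation inserts or deletes at most one palindromic node, so every statistic reduces to a sequence of single-element updates. Counting distinct palindromic substrings is then immediate, since the answer equals the current size of the eertree and is maintained in $O\rbra*{1}$ per operation. For the longest palindromic substring I would keep the lengths of the live nodes in an auxiliary extremum structure and return its maximum; for the shortest unique palindromic substring I would additionally track, for each node, its occurrence count in the window and query the shortest node of count exactly one; for the shortest absent palindrome I would exploit the characterization of absent palindromes through the link and surface structure of the eertree. Each auxiliary query contributes at most a polylogarithmic factor, which is precisely what $\tilde O\rbra*{\cdot}$ absorbs, yielding total time $\tilde O\rbra*{n\sqrt{q}}$.

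For the online case, where queries may not be reordered, I would replace Mo's algorithm by the fully persistent double-ended eertree of Theorem \ref{thm:persistent-double-ended-eertree}. Using the same $\sqrt{q}$ block boundaries $l_0$, I would precompute for each boundary a persistent ``ray'' obtained by repeatedly calling \texttt{push\_back} on $s\substr{l_0}{l_0}, s\substr{l_0}{l_0+1}, \dots$, saving the version after each push; this uses $\sqrt{q}\cdot n$ persistent operations. To answer a query $\rbra*{l,r}$ I would start from the stored version of $s\substr{l_0}{r}$ for the largest boundary $l_0 \le l$ and apply at most $B = n/\sqrt{q}$ \texttt{pop\_front} operations to reach $s\substr{l}{r}$, then read off the desired statistic exactly as in the offline case. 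The total number of persistent operations is again $O\rbra*{n\sqrt{q}}$, each costing $O\rbra*{\log\rbra*{n}+\log\rbra*{\sigma}}$ by Theorem \ref{thm:persistent-double-ended-eertree}, for total time and space $\tilde O\rbra*{n\sqrt{q}}$, the space arising from the stored versions.

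Finally, for the sharpened space bound: offline Mo's algorithm stores only the single current window together with the $O\rbra*{q}$ queries and answers, so its space is dominated by one eertree plus its auxiliary structures; when $\sigma = O\rbra*{1}$ the eertree occupies $O\rbra*{n}$ space and the auxiliary structures can likewise be kept linear, giving the claimed $O\rbra*{n}$ bound. I expect the main obstacle to be the per-type extremum maintenance rather than the Mo's decomposition, which is routine once the deque operations are available: specifically, handling the maximum length and the ``count exactly one'' condition under deletions, and verifying that the shortest-absent-palindrome query can be served from the surface and link structure without exceeding the $\tilde O\rbra*{n\sqrt{q}}$ budget.
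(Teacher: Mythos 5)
Your skeleton is the same as the paper's: Mo's algorithm driven by the deque operations of Theorem \ref{thm:main} for offline queries (with the identical $B = \floor*{n/\sqrt{q}}$ balancing), and block-aligned persistent ``rays'' built from Theorem \ref{thm:persistent-double-ended-eertree} for online queries, your \texttt{pop\_front} fix-up from the preceding block boundary being an equivalent variant of the paper's fix-up; the counting and longest-palindrome statistics are also handled essentially as in Corollaries \ref{corollary:count-pal} and \ref{corollary:longest-pal}. The gaps are in the remaining pieces. For \textsc{Shortest Unique Palindromic Substring} you propose to ``track, for each node, its occurrence count in the window.'' Taken literally this destroys the time bound: a single \texttt{push\_back} adds an occurrence to \emph{every} palindromic suffix of the new string, and there can be $\Theta\rbra*{n}$ of these (take $s = a^n$), so exact per-node occurrence counts cannot be maintained in polylogarithmic time per operation. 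The paper exists to avoid exactly this: it maintains the indirect count $\cnt$, which changes for only one node per deque operation (Lemma \ref{lemma:cnt-update}), recovers true occurrence counts as link-subtree sums (Lemma \ref{lemma:occur-by-cnt}), checks uniqueness via $\linkcnt\rbra*{s,v} = 0$ and $\cnt\rbra*{s,v} = 1$ (Lemma \ref{lemma:unique-by-cnt}), and then maintains the set of MUPSs, using the fact that only the node of the entering/leaving longest palindromic prefix or suffix and its trie-parent $\prev\rbra*{\cdot}$ can change status, so that set changes by $O\rbra*{1}$ per operation and a search tree over it answers min-length queries. Some such argument is indispensable in your plan and is missing.

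Two further gaps. For \textsc{Shortest Absent Palindrome}, ``exploit the characterization of absent palindromes through the link and surface structure'' is not an algorithm, and surfaces are in fact the wrong object here: the paper's solution rests on (i) the pigeonhole bound of Lemma \ref{lemma:length-sap} that the shortest absent palindrome has length at most $\ceil*{2\log_{\sigma}\rbra*{n}}+1$, and (ii) the observation that minimal absent palindromes are witnessed by missing \emph{trie} edges $\next\rbra*{v,c} = \nullptr$, so it suffices to keep, for each length up to that bound, a list of nodes with a missing outgoing edge and scan lengths upward at query time; this is also where the extra $q\log\rbra*{n}/\log\rbra*{\sigma}$ term of Corollary \ref{corollary:sap} comes from. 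Finally, your $O\rbra*{n}$-space claim for offline queries with $\sigma = O\rbra*{1}$ assumes Mo's algorithm only ever holds ``one eertree plus auxiliaries,'' but that is false for the implementation behind Theorem \ref{thm:main}: its direct links are stored in \emph{persistent} binary search trees (Lemma \ref{lemma:dlink}), so every push operation permanently accrues $O\rbra*{\log\rbra*{\sigma}}$ space that is not reclaimed when nodes are deleted, for $O\rbra*{n\sqrt{q}\log\rbra*{\sigma}}$ space over the whole run. The paper's remark following Corollary \ref{corollary:sap} repairs this by switching to a copy-based representation of $\directlink\rbra*{v,\cdot}$ ($O\rbra*{\sigma}$ words stored inside each node and freed with it), giving $O\rbra*{n\sigma}$ space and hence $O\rbra*{n}$ for constant $\sigma$; this modification does not come for free from the black-box data structure and must be stated.
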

    
    The reduced space complexity for offline queries in Corollary \ref{corollary:range-queries} is significant, since the size $\sigma$ of the alphabet is usually a constant in practice as already mentioned above.
    
    As shown in Table \ref{tab:app}, double-ended eertrees bring speedups to all mentioned computational tasks with the only exceptions\footnote{The concurrent work of Mitani, Mieno, Seto, and Horiyama \cite{MMSH23} appeared on arXiv on the same day of this paper. They showed that range queries of \textsc{Longest Palindromic Substring} can be solved in linear time $O\rbra*{n+q}$ assuming $\sigma = O\rbra*{1}$.} that
    \begin{enumerate}
        \item It was shown in \cite{RS17} that range queries of \textsc{Counting Distinct Palindromic Substrings} can be answered in $O\rbra*{\rbra*{n+q}\log\rbra*{n}}$ time assuming $\sigma = O\rbra*{1}$. Our method can be still faster than their method when $q = o\rbra*{\log^2\rbra*{n}}$ or $q = \omega\rbra*{n^2/\log^2\rbra*{n}}$.
        \item It was shown in \cite{ACPR20} that range queries of \textsc{Longest Palindromic Substring} can be answered in $O\rbra*{n \log^2\rbra*{n} + q \log\rbra*{n}}$ time assuming $\sigma = O\rbra*{1}$. Our method can be still faster than their method when $q = o\rbra*{\log^4\rbra*{n}}$ or $q = \omega\rbra*{n^2/\log^2\rbra*{n}}$.
    \end{enumerate}
    
    \paragraph{Enumerating rich strings with given word.} 
    
    Palindromic rich strings have been extensively studied \cite{GJWZ09,RR09,BDLGZ09,Ves14}. Recently, the number of rich strings of length $n$ was studied \cite{RS18,GSS16,Ruk17}. 
    Using double-ended eertree, we give an algorithm for \textsc{Counting Rich Strings with Given Word} (see Problem \ref{prob:enum-rich} for the formal statement).
    
    \begin{corollary} [Corollary \ref{corollary:count-rich-string-with-given-word} restated]
        There is an algorithm for \textsc{Counting Rich Strings with Given Word}, which computes the number of palindromic rich strings of length $n+k$ with a given word of length $n$ with time complexity $O\rbra*{n\sigma + k\sigma^k}$, where $\sigma$ is the size of the alphabet.
    \end{corollary}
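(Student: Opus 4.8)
The plan is to reduce \textsc{Counting Rich Strings with Given Word} to a pruned, bounded-depth depth-first search over two-ended extensions of the given word $w$, run on a single double-ended eertree. First I would build the double-ended eertree of $w$ (Theorem \ref{thm:main}); specializing the transition tables to direct-address arrays of size $\sigma$, which is standard when $\sigma$ already appears explicitly in the target bound, makes every \texttt{push\_front}, \texttt{push\_back}, \texttt{pop\_front} and \texttt{pop\_back} run in amortized $O\rbra*{1}$ time, and building the base tree over the $O\rbra*{n}$ palindromic nodes of $w$ costs $O\rbra*{n\sigma}$. Alongside the tree I keep the number $P$ of distinct nonempty palindromic substrings of the current string (this is just the node count, updated in $O\rbra*{1}$ whenever a push creates a node or a pop destroys one) together with the current length $\ell$. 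Since every string obtained by pushing characters onto $w$ contains $w$ as a factor by construction, this search space is exactly the set of extensions we must examine.

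The engine of the search is that richness cannot be regained: a single push raises $P$ by at most $1$ and $\ell$ by exactly $1$, so any string with $P<\ell$ (equivalently, any non-rich string) has no rich extension, while conversely every factor of a rich string is rich. I therefore maintain the invariant $P=\ell$ and abort a branch the instant a push fails to create a new palindromic node. The search first appends characters at the back, one per level down to total back-depth $k$; from each rich back-extended state $wv'$ of length $n+q'$ it launches a nested subsearch that prepends characters at the front down to the remaining depth $k-q'$, fully backtracking before the back phase continues; and at each rich leaf of length $n+k$ it updates the answer. Because each pop inverts its matching push (an undo record restores the structure exactly), the backtracking is correct and the total time is proportional to the size of the search tree. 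Counting that tree, the back phase has at most $\sum_{q'=0}^{k}\sigma^{q'}$ states and the front subsearch launched from a depth-$q'$ state has at most $\sum_{p'=0}^{k-q'}\sigma^{p'}$ states, so the work is at most $\sum_{q'=0}^{k}\sigma^{q'}\cdot O\rbra*{\sigma^{k-q'}}=O\rbra*{k\sigma^{k}}$; together with the $O\rbra*{n\sigma}$ preprocessing this yields the claimed $O\rbra*{n\sigma+k\sigma^{k}}$.

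The step I expect to be the main obstacle is counting each rich string exactly once. The enumeration above is indexed by a chosen position of $w$ together with its surrounding characters, so a rich string in which $w$ occurs several times is generated once per occurrence. To repair this I would pin $w$ to its leftmost occurrence: a completed string $uwv$ is counted only when $w$ occurs at no start position $\le\abs*{u}$. The delicate part is testing this predicate online while prepending, i.e.\ while the text grows to the left, which I would handle by precomputing the border/failure information of $w$ in $O\rbra*{n}$ time and carrying a matching state so that each prepend updates an ``earlier occurrence exists'' flag in $O\rbra*{1}$, leaving the asymptotics unchanged. The remaining points are routine: $w$ must itself be rich for the count to be nonzero (verified during preprocessing), and the $O\rbra*{\sigma}$-sized transition arrays of nodes freed on \texttt{pop} are recycled from a pool of size $O\rbra*{n+k}$ so that node creation during the search stays $O\rbra*{1}$ rather than $O\rbra*{\sigma}$.
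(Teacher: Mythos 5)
Your overall architecture is the same as the paper's: build the double-ended eertree of the given word, run a two-phase DFS (append characters at the back, then prepend at the front), test richness through the node count of the eertree, and make the count duplicate-free by pinning the given word to an extremal occurrence enforced by an online pattern matcher. You pin the \emph{leftmost} occurrence and match while prepending; the paper pins the \emph{rightmost} occurrence (it only enumerates $xty$ such that $t$ does not occur in $t\substr{2}{\abs*{t}}y$) and matches while appending, using the Aho--Corasick automaton of $t$ --- a mirror-image choice. Your richness pruning (abort as soon as a push fails to create a node, using that every factor of a rich string is rich) is a sound optimization that the paper does not need, since it checks richness only at the leaves of the recursion.

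The genuine gap is in your deduplication matcher. You claim that, after an $O\rbra*{n}$ precomputation of the border/failure table, each prepend updates the ``earlier occurrence exists'' flag in $O\rbra*{1}$ time. That $O\rbra*{1}$ is the standard KMP amortization, and it is valid only along a single path of text: each character raises the match-length potential by at most $1$ and each failure-link traversal consumes at least $1$. In your search the text \emph{branches}: a DFS node with match state $q$ has up to $\sigma$ children, and each child independently re-spends the same potential $q$, so the total number of failure traversals is bounded only by $\sum_{x} \deg\rbra*{x}\, q\rbra*{x}$, not by the number of search edges. This is not hypothetical, and neither your richness pruning nor flag pruning rescues it: take $\sigma \geq 3$ and $w = a^m b a^m$ (which is rich), and prepend $a$'s. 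Every node on this spine is rich, sets no flag, and has match state $m$; for each character $c \notin \cbra*{a, b}$ the failure cascade from state $m$ walks all the way down to $0$, costing $\Theta\rbra*{n}$, and the resulting child is itself rich, so the transition is genuinely performed. With spine length up to $k$, the matcher alone costs $\Theta\rbra*{n k \sigma}$; choosing $n \approx k\sigma^{k-1}$ makes this $\Theta\rbra*{k\rbra*{n\sigma + k\sigma^k}}$, exceeding the claimed bound by an unbounded factor. The repair is exactly what the paper does: spend $O\rbra*{n\sigma}$ to precompute the full automaton transition function (the paper builds the Aho--Corasick automaton of $t$ precisely ``for its efficient state transitions''; in your mirrored setup you would build it for $w^R$), so that every matcher step is worst-case $O\rbra*{1}$ regardless of branching. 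With that single change your argument goes through and coincides with the paper's proof.
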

    
    By contrast, a na\"{i}ve algorithm that enumerates all (roughly $k\sigma^k$ in total) possible candidates and then checks each of them by \cite{GPR10} in $O\rbra*{n+k}$ time would have time complexity $O\rbra*{k\sigma^k\rbra*{n+k}}$. The strength of our algorithm is that the parameters $n$ and $\sigma^k$ in the complexity are additive, while they are multiplicative in the na\"{i}ve algorithm.

    \subsection{Range queries concerning palindromes on a string}
    
    In this subsection, we aim to design a framework for range queries on a string concerning problems about palindromes. 
    
    Suppose a string $s\substr{1}{n}$ of length $n = \abs*{s}$ is given. We consider range queries on any substrings $s\substr{l}{r}$ of $s$, where $1 \leq l \leq r \leq n$. A query $\rbra*{l, r}$ is to find
    \begin{itemize}
        \item the number of distinct palindromic substrings,
        \item the longest palindromic substring,
        \item the shortest unique palindromic substring,
        \item the shortest absent palindrome
    \end{itemize}
    of substring $s\substr{l}{r}$ of $s$. We formally state the four types of queries as follows. 
    
    \begin{problem} [\textsc{Counting Distinct Palindromic Substrings}]
        Given a string $s$ of length $n$, for each query of the form $\rbra*{l, r}$, count the number of distinct palindromes over all $s\substr{i}{j}$ for $l \leq i \leq j \leq r$. 
    \end{problem}
    
    \begin{problem} [\textsc{Longest Palindromic Substring}]
        Given a string $s$ of length $n$, for each query of the form $\rbra*{l, r}$, find the longest palindromic substring $s\substr{i}{j}$ over $l \leq i \leq j \leq r$. If there are multiple solutions, find any of them. 
    \end{problem}
    
    \begin{problem} [\textsc{Shortest Unique Palindromic Substring}]
        Given a string $s$ of length $n$, for each query of the form $\rbra*{l, r}$, find the shortest palindromic substring $s\substr{i}{j}$ over $l \leq i \leq j \leq r$ that occurs exactly once in $s\substr{l}{r}$. If there are multiple solutions, find any of them. 
    \end{problem}
    
    \begin{problem} [\textsc{Shortest Absent Palindrome}]
        Given a string $s$ of length $n$, for each query of the form $\rbra*{l, r}$, find the shortest palindrome $t$ that is not a substring of $s\substr{l}{r}$. If there are multiple solutions, find any of them. 
    \end{problem}
    
    Now suppose we have $q \leq n^2$ queries $\rbra*{l_i, r_i}$ with $1 \leq l_i \leq r_i \leq n$ for $1 \leq i \leq q$. 
    In the following, we will consider to answer these queries in the offline and online cases, respectively.
    
    \subsubsection{Offline queries}
    
    To answer offline queries efficiently, we adopt the trick in Mo's algorithm (cf. \cite{DKPW20}). The basic idea is to maintain a double-ended eertree to iterate over the eertree of each $s\substr{l_i}{r_i}$ for all $1 \leq i \leq q$. Let $B$ be a parameter to be determined. 
    We sort all queries by $\floor*{\rbra*{l_i-1}/B}$ and in case of a tie by $r_i$ (both in increasing order). Let $\mathcal{T}$ be a double-ended eertree of $s\substr{l_1}{r_1}$ which can be constructed in $O\rbra*{\abs*{l_1-r_1+1} \log\rbra*{\sigma}}$ time. Now we will iterate all eertrees needed as follows.
    \begin{itemize}
        \item For $2 \leq i \leq q$ in increasing order,
        \begin{enumerate}
            \item Let $l \gets l_{i-1}$ and $r \gets r_{i-1}$ indicate that $\mathcal{T}$ is the eertree of $s\substr{l}{r}$ currently.
            \item Repeat the following as long as $r < r_i$:
            \begin{itemize}
                \item Set $r \gets r + 1$, and then perform $\texttt{push\_back}\rbra*{s\sbra*{r}}$ on $\mathcal{T}$.
            \end{itemize}
            \item Repeat the following as long as $l > l_i$:
            \begin{itemize}
                \item Set $l \gets l - 1$, and then perform $\texttt{push\_front}\rbra*{s\sbra*{l}}$ on $\mathcal{T}$.
            \end{itemize}
            \item Repeat the following as long as $r > r_i$:
            \begin{itemize}
                \item Set $r \gets r - 1$, and then perform $\texttt{pop\_back}\rbra*{}$ on $\mathcal{T}$.
            \end{itemize}
            \item Repeat the following as long as $l < l_i$:
            \begin{itemize}
                \item Set $l \gets l - 1$, and then perform $\texttt{pop\_front}\rbra*{}$ on $\mathcal{T}$.
            \end{itemize}
            \item Now $\mathcal{T}$ stores the eertree of $s\substr{l_i}{r_i}$. 
        \end{enumerate}
    \end{itemize}
    It is clear that the time complexity of the above process is
    \[
    O\rbra*{\sum_{i=2}^q \rbra*{ \abs*{l_i-l_{i-1}}+\abs*{r_i-r_{i-1}} } \log\rbra*{\sigma} } = O\rbra*{\rbra*{Bq + \frac{n^2}{B}} \log\rbra*{\sigma}} = O\rbra*{ n \sqrt{q} \log\rbra*{\sigma}}
    \]
    by setting $B = \floor*{n/\sqrt{q}}$. Now we have access to the eertree of substring $s\substr{l_i}{r_i}$ for each $1 \leq i \leq q$. In the following, we will consider different types of queries separately. 
    
    \paragraph{Counting distinct palindromic substrings.}
    
    The number of distinct palindromic substrings, also known as the palindromic complexity, of a string, has been studied in the literature (e.g., \cite{ABCD03,AAK10,GPR10,RS17}).
    It was noted in \cite{RS18} that the number of distinct palindromic substrings of string $s$ equals to the number of nodes in the eertree of $s$ (minus $1$). Immediately, we have the following result on counting distinct palindromic substrings. 
    
    \begin{corollary} \label{corollary:count-pal}
        Offline range queries of type \textsc{Counting Distinct Palindromic Substrings} can be solved with time
        complexity $O\rbra*{n\sqrt{q}\log\rbra*{\sigma}}$.
    \end{corollary}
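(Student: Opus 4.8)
The plan is to piggyback on the Mo's-algorithm traversal already set up above, which hands us a double-ended eertree $\mathcal{T}$ holding $\eertree\rbra*{s\substr{l_i}{r_i}}$ for each query in turn, at total cost $O\rbra*{n\sqrt{q}\log\rbra*{\sigma}}$ spread over the $O\rbra*{n\sqrt{q}}$ deque operations. First I would recall the observation from \cite{RS18} that the number of distinct palindromic substrings of a string equals the number of nodes of its eertree minus a fixed constant accounting for the sentinel roots $\even$ and $\odd$ (one or two depending on whether the empty palindrome is counted). Hence the answer to a query is read off directly from $\abs*{V}$, the size of the node set of the current $\mathcal{T}$.

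It therefore suffices to maintain, alongside $\mathcal{T}$, a single integer $\texttt{nodecnt}$ tracking $\abs*{V}$. The key point is that each deque operation changes $\abs*{V}$ by at most one: a \texttt{push\_back} or \texttt{push\_front} creates at most one new palindromic substring, and hence at most one new eertree node, by the online construction of \cite{RS18}; a \texttt{pop\_back} or \texttt{pop\_front} deletes a node only when the corresponding longest palindromic suffix or prefix is unique, as checked by Lemma \ref{lemma:unique-by-cnt}. I would accordingly increment $\texttt{nodecnt}$ whenever Algorithm \ref{algo:surface:push_back} or \ref{algo:surface:push_front} inserts a node and decrement it whenever Algorithm \ref{algo:surface:pop_back} or \ref{algo:surface:pop_front} executes its deletion step, which adds only $O\rbra*{1}$ overhead per operation and thus does not affect the asymptotic cost.

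With the counter in place, answering each query reduces to reading $\texttt{nodecnt}$ (minus the appropriate constant) once $\mathcal{T}$ has been adjusted to $\eertree\rbra*{s\substr{l_i}{r_i}}$. Summing over all queries, the work is dominated by the deque operations, giving time complexity $O\rbra*{n\sqrt{q}\log\rbra*{\sigma}}$; the preliminary sort of the queries is done in $O\rbra*{q}$ by bucketing on the block index $\floor*{\rbra*{l_i-1}/B}$ and radix-sorting on $r_i$, which is absorbed into this bound. I do not expect a genuine obstacle: the entire difficulty of the result lives in the double-ended eertree machinery (Theorem \ref{thm:eertree-surface-recording}) and in the amortized analysis of the Mo's-algorithm traversal, both already established. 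The only item needing care is the ``at most one node changes per operation'' invariant that makes $\texttt{nodecnt}$ maintainable in constant extra time, and this follows immediately from the structure of the push and pop algorithms of Section \ref{sec:surface-recording}.
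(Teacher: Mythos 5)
Your proposal is correct and matches the paper's own argument: the paper likewise runs the Mo's-algorithm traversal with the double-ended eertree and reads the answer off as the number of eertree nodes (minus the constant for the roots), exactly the counter you maintain. Your added remarks --- that each deque operation changes the node count by at most one, and that the query sort can be done by bucketing/radix sort so it is absorbed into $O\rbra*{n\sqrt{q}\log\rbra*{\sigma}}$ --- are implementation details the paper leaves implicit, not a different route.
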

    
    It was shown in \cite{RS17} that range queries of type \textsc{Counting Distinct Palindromic Substrings} can be solved in $O\rbra*{\rbra*{n+q}\log\rbra*{n}}$ time assuming $\sigma = O\rbra*{1}$. Our algorithm in Corollary \ref{corollary:count-pal} can be faster than the one given in \cite{RS17} when $q = o\rbra*{\log^2\rbra*{n}}$ or $q = \omega\rbra*{n^2/\log^2\rbra*{n}}$.
    
    \paragraph{Longest palindromic substring.}
    
    Finding the longest palindromic substring has been extensively studied in the literature (e.g., \cite{Man75,ABG95,Jeu94,Gus97,BEMTSA14,AB19,GMSU19,FNI+21,CPR22,LGS22}). To answer range queries of type \textsc{Longest Palindromic Substring}, we use $n$ linked lists to store all palindromic substrings of the current string. Specifically, let $\texttt{list}\sbra*{i}$ be the double-linked list to store palindromic substrings of length $i$ for every $1 \leq i \leq n$, and let $\texttt{maxlen}$ indicate the maximum length over all these palindromic substrings. Initially, all lists are empty and $\texttt{maxlen} = 0$. We can maintain these data as follows.
    \begin{itemize}
        \item When a node $u$ is added to the double-ended eertree $\mathcal{T}$, 
        \begin{enumerate}
            \item Add $u$ to $\texttt{list}\sbra*{\len\rbra*{u}}$.
            \item Set $\texttt{maxlen} \gets \max\cbra*{\texttt{maxlen}, \len\rbra*{u}}$.
        \end{enumerate}
        \item When a node $u$ is deleted from the double-ended eertree $\mathcal{T}$, 
        \begin{enumerate}
            \item Delete $u$ from $\texttt{list}\sbra*{\len\rbra*{u}}$.
            \item Repeat the following until $\texttt{list}\sbra*{\texttt{maxlen}}$ is not empty:
            \begin{itemize}
                \item Set $\texttt{maxlen} \gets \texttt{maxlen} - 1$.
            \end{itemize}
        \end{enumerate}
    \end{itemize}
    
    To find the longest palindromic substring of the current string in the eertree $\mathcal{T}$, just return any element in $\texttt{list}\sbra*{\texttt{maxlen}}$. The correctness is trivial. 
    In the above process, there is only one loop that decrements $\texttt{maxlen}$ until $\texttt{list}\sbra*{\texttt{maxlen}}$ is not empty. 
    It can be seen that the loop repeats no more than twice by the following observation. 
    
    \begin{proposition}
        If a node $u$ with $\len\rbra*{u} > 2$ is being deleted from a double-ended eertree due to \textup{\texttt{pop\_back}} or \textup{\texttt{pop\_front}} operations, then after node $u$ is deleted, there is a node $v$ in the eertree such that $\len\rbra*{v} = \len\rbra*{u} - 2$. 
    \end{proposition}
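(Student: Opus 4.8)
The plan is to exhibit the required node $v$ explicitly as the trie-parent $\prev\rbra*{u}$ of the deleted node $u$. Write $w = \str\rbra*{u}$ and $\ell = \len\rbra*{u} > 2$. Since $u$ is deleted by a \texttt{pop\_back} or \texttt{pop\_front} operation, $w$ equals $\sufpal\rbra*{s, \abs*{s}}$ or $\prepal\rbra*{s, 1}$ of the current string $s$, respectively. As $w$ is a palindrome of length at least three, it factors as $w = c\, w'\, c$ for some character $c$, where $w' = w\substr{2}{\ell-1}$ is again a palindrome, now of length $\ell - 2 \geq 1$. By the trie structure of the eertree we have $w' = \str\rbra*{\prev\rbra*{u}}$, so I would take $v = \prev\rbra*{u}$, which immediately gives $\len\rbra*{v} = \ell - 2$.

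Next I would verify that $v$ still lies in the eertree once $u$ has been removed, i.e., that $w'$ remains non-redundant with respect to the shortened string $s'$. Each pop operation deletes at most the single node $u$, and $\len\rbra*{v} = \ell - 2 \neq \ell$ forces $v \neq u$, so $v$ is not itself removed. The crucial point is that the inner palindrome $w'$ is shielded from the boundary character that disappears: for \texttt{pop\_back} the suffix occurrence $w = s\substr{\abs*{s}-\ell+1}{\abs*{s}}$ yields the occurrence $w' = s\substr{\abs*{s}-\ell+2}{\abs*{s}-1}$, which lies wholly inside $s' = s\substr{1}{\abs*{s}-1}$; symmetrically, for \texttt{pop\_front} the prefix occurrence $w = s\substr{1}{\ell}$ yields $w' = s\substr{2}{\ell-1}$, which lies inside $s' = s\substr{2}{\abs*{s}}$. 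In both cases $\occur\rbra*{s', v} \neq \emptyset$, so $v$ survives and is a node of length $\ell - 2$, as claimed.

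The construction is direct, so I do not anticipate a serious obstacle; the only delicate point is the non-redundancy check in the previous paragraph, namely ensuring that the surviving occurrence of $w'$ does not depend on the very occurrence of $w$ that the pop destroys. This is exactly what the shielding observation guarantees: the relevant occurrence of $w'$ is the one strictly interior to the deleted palindrome, hence untouched by removing its boundary character. Finally, the proposition yields the intended bound on the \texttt{maxlen} loop: when a node of the maximal length $\texttt{maxlen}$ is deleted, a node of length $\texttt{maxlen} - 2$ remains in the eertree, so $\texttt{list}\sbra*{\texttt{maxlen} - 2}$ is nonempty and the loop decrements $\texttt{maxlen}$ at most twice.
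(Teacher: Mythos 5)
Your proof is correct and follows essentially the same route as the paper: the paper also takes $v$ to be the node of the inner palindrome obtained by stripping one character from each end of $\str\rbra*{u}$ (i.e., $\prev\rbra*{u}$) and notes that this palindrome still occurs in the shortened string, a step the paper dismisses as trivial and you spell out via the interior occurrence. Your version is in fact slightly cleaner, since the paper's stated substring $\str\rbra*{u}\substr{1}{\len\rbra*{u}-1}$ appears to be a typo for the inner substring of length $\len\rbra*{u}-2$ that both arguments actually use.
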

    \begin{proof}
        Choose node $v$ such that $\str\rbra*{v} = \str\rbra*{u}\substr{1}{\len\rbra*{u}-1}$. It is trivial that $\len\rbra*{v} = \len\rbra*{u} - 2$ and $\str\rbra*{v}$ occurs at least once in the string after the \textup{\texttt{pop\_back}} or \textup{\texttt{pop\_front}} operation.
    \end{proof}
    
    Therefore, we have the following result on finding the longest palindromic substring. 
    
    \begin{corollary} \label{corollary:longest-pal}
        Offline range queries of type \textsc{Longest Palindromic Substring} can be solved with time 
        complexity $O\rbra*{n\sqrt{q}\log\rbra*{\sigma}}$.
    \end{corollary}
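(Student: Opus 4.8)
The plan is to bundle together the cost of the Mo's-algorithm iteration with the overhead of maintaining the auxiliary length-indexed lists, and then charge the per-query answering cost against the iteration cost. First I would recall that, with the block size set to $B = \floor*{n/\sqrt{q}}$ as above, the total number of \texttt{push\_back}, \texttt{push\_front}, \texttt{pop\_back} and \texttt{pop\_front} operations performed while sweeping $\mathcal{T}$ through all the ranges $s\substr{l_i}{r_i}$ is $O\rbra*{n\sqrt{q}}$; since each deque operation costs $O\rbra*{\log\rbra*{\sigma}}$ by Theorem \ref{thm:eertree-surface-recording}, maintaining the eertree alone costs $O\rbra*{n\sqrt{q}\log\rbra*{\sigma}}$.

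Next I would account for the bookkeeping of $\texttt{list}\sbra*{\cdot}$ and $\texttt{maxlen}$. Each deque operation inserts or deletes at most one node of $\mathcal{T}$, so the number of node insertions and deletions over the whole sweep is also $O\rbra*{n\sqrt{q}}$. An insertion performs a single list append together with one comparison against $\texttt{maxlen}$, costing $O\rbra*{1}$. A deletion performs a single list removal followed by the loop that decrements $\texttt{maxlen}$; the only nontrivial point is bounding this loop. Here I would invoke the proposition stated just above, which guarantees that after deleting a node of length $\len\rbra*{u} > 2$ there still remains a node of length $\len\rbra*{u} - 2$. Combined with the fact that palindrome lengths change in steps of $2$ (so $\texttt{list}\sbra*{\texttt{maxlen}}$ and $\texttt{list}\sbra*{\texttt{maxlen}-1}$ have opposite parity), this forces the loop to execute at most twice per deletion, hence $O\rbra*{1}$. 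Thus the total bookkeeping overhead is $O\rbra*{n\sqrt{q}}$.

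Finally, answering a single query amounts to returning any element of $\texttt{list}\sbra*{\texttt{maxlen}}$, which is $O\rbra*{1}$, so answering all $q$ queries costs $O\rbra*{q}$. Since $q \leq n^2$ gives $q = O\rbra*{n\sqrt{q}}$, this term is absorbed, and summing the three contributions yields the claimed $O\rbra*{n\sqrt{q}\log\rbra*{\sigma}}$ bound. I expect the main obstacle to be precisely the loop-termination argument: a na\"{i}ve decrement of $\texttt{maxlen}$ could scan through $\Omega\rbra*{n}$ empty buckets and destroy the complexity, and it is exactly the parity-plus-proposition observation that keeps each deletion $O\rbra*{1}$. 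Everything else is a direct summation of per-operation costs that are already established earlier in the excerpt.
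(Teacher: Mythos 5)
Your proposal is correct and follows essentially the same route as the paper: the Mo's-algorithm sweep with block size $\lfloor n/\sqrt{q}\rfloor$ costing $O(n\sqrt{q}\log(\sigma))$, the length-indexed lists with a \texttt{maxlen} pointer maintained in $O(1)$ per node insertion/deletion, and the same proposition (a surviving node of length $\len(u)-2$ after deleting $u$) to bound the decrement loop by two steps. The only cosmetic difference is that you spell out the absorption of the $O(q)$ answering cost via $q \leq n^2$, which the paper leaves implicit.
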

    
    It was shown in \cite{ACPR20} that range queries of type \textsc{Longest Palindromic Substring} can be solved in $O\rbra*{n \log^2 \rbra*{n} + q \log\rbra*{n}}$ time assuming $\sigma = O \rbra*{1}$. Our algorithm in Corollary \ref{corollary:longest-pal} can be faster than the one given in \cite{ACPR20} when $q = o\rbra*{\log^4\rbra*{n}}$ or $q = \omega\rbra*{n^2/\log^2\rbra*{n}}$.
    
    \paragraph{Shortest unique palindromic substring.}
    
    Motivated by molecular biology \cite{KYK+92,YYK+92}, algorithms about the shortest unique palindromic substring was investigated in a series of works \cite{INM+18,WNI+20,FM21,MF22}. To find the shortest unique palindromic substring with respect to an interval of a string, they introduced the notion of minimal unique palindromic substrings (MUPSs). Here, a palindromic substring $s\substr{i}{j}$ of string $s$ is called a MUPS of $s$, if $s\substr{i}{j}$ occurs exactly once in $s$ and $s\substr{i+1}{j-i}$ either is empty or occurs at least twice. The set of MUPSs can be maintained after single-character substitution \cite{FM21}, and in a sliding window \cite{MWN+22}.
    
    In our case, we are to find the shortest unique palindromic substring of a string $s$, which is actually the shortest MUPSs of $s$. To this end, we are going to maintain the set of all MUPSs of the current string. The following lemma shows that whether a palindrome is a MUPS can be reduced to uniqueness checking. 
    
    \begin{lemma} [MUPS checking via uniqueness \cite{MWN+22}]
        A palindrome $t$ is a MUPS of string $s$, if and only if $t$ is unique in $s$ and $t\substr{1}{\abs*{t}-1}$ is not unique in $s$, where the empty string $\epsilon$ is considered to be not unique in any string. 
    \end{lemma}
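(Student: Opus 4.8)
The plan is to reduce the MUPS condition entirely to statements about occurrence counts and then exploit the palindromic structure of $t$. Write $t = s\substr{i}{j}$ with $\abs*{t} = j-i+1$, let $c = s\substr{i+1}{j-1}$ be its \emph{core} (the palindrome one size smaller at the same center), and let $P = t\substr{1}{\abs*{t}-1} = s\substr{i}{j-1}$ be the prefix obtained by dropping the last character. By definition $t$ is a MUPS exactly when $\abs*{\occur\rbra*{s, \node\rbra*{t}}} = 1$ and $c$ is empty or $\abs*{\occur\rbra*{s, \node\rbra*{c}}} \geq 2$; likewise $t$ is unique (resp. $P$ is non-unique) exactly when $t$ occurs once (resp. $P$ occurs at least twice, with $P = \epsilon$ treated as non-unique). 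Hence it suffices to prove, under the standing assumption that $t$ is unique, that $c$ is empty-or-non-unique if and only if $P$ is non-unique.

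First I would dispose of the short cases $\abs*{t} \leq 2$, where $c = \epsilon$ is empty and the MUPS condition collapses to ``$t$ is unique'', so I only need to verify that $P$ is automatically non-unique whenever $t$ is unique. For $\abs*{t} = 1$ this is immediate since $P = \epsilon$. For $\abs*{t} = 2$ we have $t = s\sbra*{i}s\sbra*{i}$, and a single occurrence of $t$ already forces the character $s\sbra*{i}$ to occur at least twice, so $P$ is non-unique.

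For $\abs*{t} \geq 3$ both $c$ and $P$ are nonempty. The easy inclusion follows from the fact that $c$ is a suffix of $P$: each occurrence of $P$ at position $p$ induces an occurrence of $c$ at $p+1$, and distinct occurrences of $P$ yield distinct occurrences of $c$, so $\abs*{\occur\rbra*{s, \node\rbra*{c}}}$ is at least the number of occurrences of $P$; thus $P$ non-unique implies $c$ non-unique. The converse is the hard direction: assuming $t$ is unique, I must show that $c$ non-unique forces $P$ non-unique. Writing $a = s\sbra*{i} = s\sbra*{j}$ for the equal end characters of the palindrome, the occurrences of $P = ac$ are precisely the occurrences of $c$ immediately preceded by $a$, the occurrences of $s\substr{i+1}{j} = ca$ are those immediately followed by $a$, and the occurrences of $t = aca$ are those extendable by $a$ on both sides; uniqueness of $t$ says exactly one occurrence of $c$ is two-sided. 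The plan is to start from a second occurrence of $c$ and, using that $c$ is a palindrome together with the period characterization of palindromes (Lemma~\ref{lemma:period-of-pal}), propagate it to a further occurrence of $P$, with uniqueness of $t$ preventing the two occurrences from coinciding.

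The main obstacle is exactly this last step: controlling how the occurrences of the palindrome $c$ redistribute among the left-extendable, right-extendable, and two-sided occurrences once $t$ is known to be unique, and in particular ruling out ``stray'' occurrences of $c$ that extend on neither side by $a$. I expect to combine the period lemma with the convexity of link-lengths (Lemma~\ref{lemma:sub-arithmetic}) to force the required second $a$-extension, thereby converting a second occurrence of the core into a second occurrence of $P$; this is the delicate point where the palindromic hypothesis on $t$ is genuinely used. The remaining bookkeeping, including the symmetric formulation in terms of the suffix $s\substr{i+1}{j}$ obtained from the reflection symmetry of $t$, together with the option of checking uniqueness through the counting criterion of Lemma~\ref{lemma:unique-by-cnt}, is routine.
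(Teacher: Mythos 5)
Your setup is correct as far as it goes: the reduction to ``given $t$ unique, the core $c = t\substr{2}{\abs*{t}-1}$ is empty-or-non-unique iff $P = t\substr{1}{\abs*{t}-1}$ is non-unique,'' the two short cases $\abs*{t}\leq 2$, and the easy implication ($P$ non-unique $\Rightarrow$ $c$ non-unique) are all fine. But the step you defer to the end --- converting a second occurrence of the core into a second occurrence of $P$ under the assumption that $t$ is unique --- is not merely the delicate point; it is false, so no combination of the period lemma and link-length convexity can close it. Take $s = baba$ and $t = s\substr{2}{4} = aba$. Then $t$ occurs exactly once and its core $c = b$ occurs twice (positions $1$ and $3$), so $t$ is a MUPS of $s$ by the definition; yet $P = ab$ occurs exactly once in $s$, so $P$ is unique. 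The second occurrence of $c$ sits at the very beginning of $s$ and cannot be extended on the left by $a$ --- exactly the ``stray'' occurrence your plan needs to rule out, and cannot.

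The resolution is that the printed statement contains an index typo: the intended string is not the prefix $t\substr{1}{\abs*{t}-1}$ but the core $t\substr{2}{\abs*{t}-1}$, i.e., $\str\rbra*{\prev\rbra*{\node\rbra*{t}}}$. This is confirmed by the paper's own definition of MUPS ($t$ unique, and the substring obtained by deleting the first \emph{and} last characters is empty or occurs at least twice) and by the maintenance algorithm that follows the lemma, which inspects $\prev\rbra*{u}$ rather than any non-palindromic prefix. Under that reading the lemma is a one-line restatement of the definition: since the core of an occurring palindrome always occurs, ``empty or occurs at least twice'' is precisely ``not unique'' under the stated convention for $\epsilon$; and the reason for phrasing it this way is that the core is itself a palindrome, so its uniqueness is testable in $O\rbra*{1}$ time through $\cnt$ via Lemma \ref{lemma:unique-by-cnt}. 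Note also that the paper supplies no proof of this lemma (it is quoted from \cite{MWN+22}), so there is no in-paper argument to compare against; as a blind proof of the literal statement, however, your approach founders on the counterexample above, and the correct fix is to repair the statement rather than to look for a stronger argument.
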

    
    By Lemma \ref{lemma:unique-by-cnt}, whether a palindromic substring $t$ of string $s$ is a MUPS of $s$ can be checked in $O\rbra*{1}$ time, given access to the node of $t$ in the eertree. Now we will maintain the set $\texttt{MUPS}$ to store all MUPSs of the current string. Initially, the set $\texttt{MUPS}$ is empty. For each deque operation on the double-ended eertree $\mathcal{T}$, do the following.
    
    \begin{itemize}
        \item For each $\texttt{push\_back}\rbra*{c}$ operation on double-ended eertree $\mathcal{T}$, 
        \begin{enumerate}
            \item Perform $\texttt{push\_back}\rbra*{c}$ on $\mathcal{T}$.
            \item Let $u = \node\rbra*{\sufpal\rbra*{s, \abs*{s}}}$, where $s$ is the current string. 
            \item Maintain $\texttt{MUPS}$ according to whether $u$ as well as $\prev\rbra*{u}$ is a MUPS of $s$.
        \end{enumerate}
        
        \item For each $\texttt{push\_front}\rbra*{c}$ operation on double-ended eertree $\mathcal{T}$, 
        \begin{enumerate}
            \item Perform $\texttt{push\_front}\rbra*{c}$ on $\mathcal{T}$.
            \item Let $u = \node\rbra*{\prepal\rbra*{s, 1}}$, where $s$ is the current string. 
            \item Maintain $\texttt{MUPS}$ according to whether $u$ as well as $\prev\rbra*{u}$ is a MUPS of $s$.
        \end{enumerate}
        
        \item For each \texttt{pop\_back} operation on double-ended eertree $\mathcal{T}$, 
        \begin{enumerate}
            \item Let $u = \node\rbra*{\sufpal\rbra*{s, \abs*{s}}}$, where $s$ is the current string. 
            \item Perform $\texttt{pop\_back}\rbra*{}$ on $\mathcal{T}$.
            \item Maintain $\texttt{MUPS}$ according to whether $u$ as well as $\prev\rbra*{u}$ is a MUPS of $s$.
        \end{enumerate}
        
        \item Before each \texttt{pop\_front} operation on double-ended eertree $\mathcal{T}$, 
        \begin{enumerate}
            \item Let $u = \node\rbra*{\prepal\rbra*{s, 1}}$, where $s$ is the current string. 
            \item Perform $\texttt{pop\_front}\rbra*{}$ on $\mathcal{T}$.
            \item Maintain $\texttt{MUPS}$ according to whether $u$ as well as $\prev\rbra*{u}$ is a MUPS of $s$.
        \end{enumerate}
    \end{itemize}
    
    Here, step 3 of each case can be maintained as follows. 
    \begin{enumerate}
        \item If $u$ a MUPS of $s$, add $u$ to $\texttt{MUPS}$; otherwise, remove $u$ from $\texttt{MUPS}$. 
        \item If $\len\rbra*{u} > 2$, do the same for $\prev\rbra*{u}$. That is, if $\prev\rbra*{u}$ a MUPS of $s$, add $\prev\rbra*{u}$ to $\texttt{MUPS}$; otherwise, remove $\prev\rbra*{u}$ from $\texttt{MUPS}$. 
    \end{enumerate}
    
    To answer each query of type \textsc{Shortest Unique Palindromic Substring}, just return any element in $\texttt{MUPS}$ with the minimum length. To achieve this, we can use the binary search tree to maintain $\texttt{MUPS}$, which introduces an extra $O\rbra*{\log\rbra*{n}}$ in the complexity. Then, we have the following result.
    
    \begin{corollary} \label{corollary:sups}
        Offline range queries of type \textsc{Shortest Unique Palindromic Substring} can be solved with time 
        complexity $O\rbra*{n\sqrt{q}\rbra*{\log\rbra*{n}+\log\rbra*{\sigma}}}$.
    \end{corollary}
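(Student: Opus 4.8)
The plan is to reuse the Mo's-algorithm iteration established above verbatim: after sorting the $q$ queries and setting $B = \floor*{n/\sqrt{q}}$, a single double-ended eertree $\mathcal{T}$ is driven through the substrings $s\substr{l_i}{r_i}$ using $O\rbra*{n\sqrt{q}}$ calls to \texttt{push\_back}, \texttt{push\_front}, \texttt{pop\_back} and \texttt{pop\_front}, each costing $O\rbra*{\log\rbra*{\sigma}}$ by Theorem \ref{thm:eertree-surface-recording}; the initial eertree of $s\substr{l_1}{r_1}$ costs $O\rbra*{n\log\rbra*{\sigma}}$ and is absorbed. On top of this I would reduce the query to the set of minimal unique palindromic substrings (MUPSs). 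First I would argue that a \emph{shortest} unique palindromic substring of $s\substr{l}{r}$ is always a MUPS: if its inner palindrome were also unique, that inner palindrome would be a strictly shorter unique palindrome, contradicting minimality; hence answering each query amounts to returning any shortest element of the MUPS set of the current substring.

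Next I would maintain the set $\texttt{MUPS}$ incrementally alongside $\mathcal{T}$. By the characterization of MUPSs via uniqueness \cite{MWN+22}, a palindrome is a MUPS exactly when it is unique while its inner palindrome (the eertree parent accessed through $\prev\rbra*{\cdot}$) is not, and by Lemma \ref{lemma:unique-by-cnt} the uniqueness of any node $v$ is decided in $O\rbra*{1}$ from $\linkcnt\rbra*{s,v}$ and $\cnt\rbra*{s,v}$. The structural fact I would prove is that a single deque operation flips the MUPS status of only $O\rbra*{1}$ nodes, all locatable in $O\rbra*{1}$ time from $u = \node\rbra*{\sufpal\rbra*{s,\abs*{s}}}$ (respectively $\node\rbra*{\prepal\rbra*{s,1}}$) together with its neighbours under $\prev$ and $\link$. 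This rests on Lemma \ref{lemma:cnt-update}, by which exactly one value $\cnt\rbra*{\cdot}$ changes per operation, and on the observation that at most one node changes its $\linkcnt$ (link-tree leaf) status when a node is inserted into or deleted from the eertree; since uniqueness of $v$ depends only on $\cnt\rbra*{s,v}$ and on $v$ being a link-tree leaf, only $O\rbra*{1}$ nodes change uniqueness, and a MUPS status toggles only when the uniqueness of the node or of its inner palindrome toggles. I would store $\texttt{MUPS}$ in a balanced binary search tree keyed by palindrome length, so that each insertion, deletion, and shortest-length extraction costs $O\rbra*{\log\rbra*{n}}$.

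For the complexity I would simply sum the three contributions. Driving $\mathcal{T}$ costs $O\rbra*{n\sqrt{q}\log\rbra*{\sigma}}$. Each of the $O\rbra*{n\sqrt{q}}$ operations triggers $O\rbra*{1}$ uniqueness checks, each $O\rbra*{1}$, and $O\rbra*{1}$ updates to the $\texttt{MUPS}$ tree, each $O\rbra*{\log\rbra*{n}}$, contributing $O\rbra*{n\sqrt{q}\log\rbra*{n}}$. Finally, answering the $q$ queries by extracting a shortest MUPS costs $O\rbra*{q\log\rbra*{n}}$, which is $O\rbra*{n\sqrt{q}\log\rbra*{n}}$ because $q \leq n^2$ gives $q = \sqrt{q}\cdot\sqrt{q} \leq n\sqrt{q}$. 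Adding these yields the claimed bound $O\rbra*{n\sqrt{q}\rbra*{\log\rbra*{n}+\log\rbra*{\sigma}}}$.

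The main obstacle is the structural fact that each deque operation perturbs only a constant number of MUPS statuses. A naive reading is worrying, because inserting or deleting an end character shifts the occurrence counts $\abs*{\occur\rbra*{s,\cdot}}$ of \emph{all} suffix-link ancestors of the affected palindrome at once (Lemma \ref{lemma:occur-link}). The resolution I would pursue is that these occurrence counts are expressed through the single changing quantity $\cnt\rbra*{\cdot}$ via Lemma \ref{lemma:occur-by-cnt}, so that, although many occurrence counts move, the binary predicate ``unique'' (equivalently $\abs*{\occur} = 1$) crosses its threshold for only $O\rbra*{1}$ nodes; one then shows, using that a node unique in $s$ forces that node to be childless (again by Lemma \ref{lemma:occur-link}), that no unbounded family of inner-palindrome relations can flip simultaneously. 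Pinning down exactly which constantly many nodes must be re-examined, and verifying this symmetrically for all four operations, is where I expect to spend most of the effort.
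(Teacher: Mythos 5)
Your proposal is correct and follows essentially the same route as the paper: Mo's algorithm driving a single double-ended eertree with $B = \lfloor n/\sqrt{q}\rfloor$, reduction of each query to the shortest MUPS, constant-time uniqueness tests via Lemma \ref{lemma:unique-by-cnt} and Lemma \ref{lemma:cnt-update}, and a length-keyed balanced binary search tree holding the MUPS set, with identical complexity bookkeeping. If anything you are more careful than the paper at the one delicate step: the paper's stated update rule re-examines only $u$ and $\prev(u)$, whereas the nodes you insist on re-examining --- $\link(u)$ and the nodes whose inner palindrome toggled uniqueness --- are genuinely needed (e.g., after \texttt{push\_back}$(a)$ on $cabacab$ the palindrome $aba = \str(\link(u))$ stops being a MUPS and $cabac$ becomes one, and neither equals $u$ or $\prev(u)$), and your flagged obstacle resolves exactly as you anticipate, since a node whose uniqueness flips has at most two occurrences and hence at most two children under $\next$, so only $O(1)$ MUPS statuses change and all are locatable in $O(1)$ time.
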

    
    It was shown in \cite{MWN+22} that the set of MUPSs can be maintained in the sliding window model, which is actually a special case of ours that $l_i \leq l_{i+1}$ and $r_i \leq r_{i+1}$ for every $1 \leq i < q$. In this special case, range queries of type \textsc{Shortest Unique Palindromic Substring} can be solved in time $O\rbra*{n\rbra*{\log\rbra*{n} + \log\rbra*{\sigma}} + q}$ by the sliding window technique in \cite{MWN+22}. However, their technique seems not applicable in our more general case. 
    
    \paragraph{Shortest absent palindrome.}
    
    Minimal absent palindromes (MAPs) is a palindromic version of the notion of minimal absent words, which was extensively studied in the literature \cite{CMRS00,MRS02,CC12}. The set of MAPs can be maintained in the sliding window model \cite{MWN+22}. Here, a palindrome $t$ is called a MAP of a string $s$, if $t$ does not occur in $s$ but $t\substr{1}{\abs*{t}-1}$ does, where the empty string $\epsilon$ is considered to occur in any string. 
    
    In our case, we are to find the shortest absent palindrome of a string, which is actually the shortest MAP of the string. The following lemma shows an upper bound of the length of the shortest absent palindrome of a string.
    
    \begin{lemma} \label{lemma:length-sap}
        Suppose $s$ is a string of length $n$. Then, the length of the shortest absent palindrome of $s$ is $\leq \ceil*{2 \log_{\sigma} \rbra*{n}} + 1$, where $\sigma$ is the size of the alphabet.
    \end{lemma}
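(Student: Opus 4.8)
The plan is to bound the length of the shortest absent palindrome by a counting argument over palindromes of small length. I will first observe that a string $s$ of length $n$ has at most $n$ distinct palindromic substrings of length $\geq 1$ (indeed, by the result of Droubay–Justin–Pirillo cited in the introduction, at most $n+1$ distinct palindromic substrings including $\epsilon$). Thus if I count how many distinct palindromes of length at most some threshold $L$ \emph{could} exist over an alphabet of size $\sigma$, and find that this count strictly exceeds $n$, then not all of them can occur in $s$, so there must be an absent palindrome of length $\leq L$. The shortest absent palindrome is then automatically of length $\leq L$.

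The key step is to count palindromes of a fixed length $\ell$ over $\Sigma$. A palindrome of length $\ell$ is determined by its first $\lceil \ell/2 \rceil$ characters, so there are exactly $\sigma^{\lceil \ell/2 \rceil}$ palindromes of length $\ell$. First I would sum this over lengths: the number of palindromes of length exactly $\ell$ grows roughly like $\sigma^{\ell/2}$, and summing a geometric-type series over $\ell$ from $1$ up to $L = \lceil 2\log_\sigma n\rceil + 1$ I would show the total number of candidate palindromes of length $\leq L$ strictly exceeds $n$. Concretely, I would isolate the palindromes of length exactly $L$ (or of the two largest lengths, to handle the parity of $L$ cleanly): the count $\sigma^{\lceil L/2\rceil}$ already exceeds $n$ once $\lceil L/2 \rceil > \log_\sigma n$, which is guaranteed by the choice $L = \lceil 2\log_\sigma n \rceil + 1$ since then $\lceil L/2\rceil \geq \log_\sigma n + 1 > \log_\sigma n$, giving $\sigma^{\lceil L/2\rceil} > \sigma^{\log_\sigma n} = n$. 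Hence there are more palindromes of length exactly $L$ than there are distinct palindromic substrings of $s$ in total, so at least one palindrome of length $L$ is absent from $s$, and the shortest absent palindrome has length $\leq L$.

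The main obstacle I anticipate is handling the ceiling and parity bookkeeping cleanly so that the chosen bound $\lceil 2\log_\sigma n\rceil + 1$ matches exactly, rather than being off by one. In particular I must be careful whether to use ``length exactly $L$'' or ``length $\leq L$,'' and to verify the inequality $\sigma^{\lceil L/2\rceil} > n$ holds for the stated $L$ across all integer values of $n$ and $\sigma \geq 2$ (the edge cases of very small $n$ deserve a direct check). Once the counting inequality is pinned down, the conclusion is immediate: if every palindrome of length $L$ occurred in $s$, then $s$ would contain more than $n$ distinct palindromic substrings, contradicting the Droubay–Justin–Pirillo bound. Therefore some palindrome of length at most $L$ is absent, proving the claimed upper bound.
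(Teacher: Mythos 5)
Your proposal is correct and takes essentially the same approach as the paper: both proofs count the palindromes of length exactly $L = \lceil 2\log_\sigma(n)\rceil + 1$, of which there are $\sigma^{\lceil L/2\rceil} > n$, and compare this against the Droubay--Justin--Pirillo bound that $s$ has at most $n$ distinct non-empty palindromic substrings, forcing some palindrome of length $L$ to be absent. One minor arithmetic caveat: your intermediate claim $\lceil L/2\rceil \geq \log_\sigma(n) + 1$ can fail when $2\log_\sigma(n)$ is not an integer (e.g.\ $\log_\sigma(n) = 2.5$ gives $\lceil L/2\rceil = 3 < 3.5$), but the weaker inequality $\lceil L/2\rceil > \log_\sigma(n)$, which you also state and which is all the argument needs, does hold.
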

    \begin{proof}
        Let $k = \ceil*{2 \log_{\sigma} \rbra*{n}} + 1$. The number of palindromes of length $k$ is $\sigma^{\ceil*{k/2}} > n$. On the other hand, the number of distinct non-empty palindromic substrings of string $s$ is at most $n$ \cite{DJP01}. We conclude that there must be a palindrome of length $k$ that does not occur in $s$, and these yield the proof.
    \end{proof}
    
    Our main idea to find the shortest absent palindrome is to maintain the set of MAPs. In the implementation, we use linked-lists to store MAPs of each length indirectly. Specifically, let $\texttt{list}\sbra*{i}$ be the double-linked list to store palindromic substrings $t$ of length $i$ for each $i$ with $\next\rbra*{\node\rbra*{t}, c} = \nullptr$ for at least one character $c$. By Lemma \ref{lemma:length-sap}, we can choose the range of $i$ as $1 \leq i \leq \ceil*{2 \log_{\sigma} \rbra*{n}} + 1$, and palindromic substrings of length beyond this range are ignored. We maintain these data as follows.
    
    \begin{itemize}
        \item When a node $u$ is added to the double-ended eertree $\mathcal{T}$, 
        \begin{enumerate}
            \item Add $u$ to $\texttt{list}\sbra*{\len\rbra*{u}}$.
            \item If $\next\rbra*{\prev\rbra*{u}, c} \neq \nullptr$ for every character $c$, delete $\prev\rbra*{u}$ from $\texttt{list}\sbra*{\len\rbra*{\prev\rbra*{u}}}$.
        \end{enumerate}
        \item When a node $u$ is deleted from the double-ended eertree $\mathcal{T}$, 
        \begin{enumerate}
            \item Delete $u$ from $\texttt{list}\sbra*{\len\rbra*{u}}$.
            \item If $\prev\rbra*{u}$ is not in $\texttt{list}\sbra*{\len\rbra*{\prev\rbra*{u}}}$, add $\prev\rbra*{u}$ to $\texttt{list}\sbra*{\len\rbra*{\prev\rbra*{u}}}$.
        \end{enumerate}
    \end{itemize}
    The above procedure can be maintained in $O\rbra*{1}$ time per operation. 
    
    We can answer each query of type \textsc{Shortest Absent Palindrome} as follows with the current string $s$.
    
    \begin{enumerate}
        \item If there is a character $c$ that does not occur in $s$, i.e., $\next\rbra*{\odd, c} = \nullptr$, return $c$.
        \item If there is a character $c$ such that $cc$ does not occur in $s$, i.e., $\next\rbra*{\even, c} = \nullptr$, return $cc$.
        \item For each $1 \leq i \leq \ceil*{2 \log_{\sigma} \rbra*{n}} + 1$ in this order, if $\texttt{list}\sbra*{i}$ is not empty, do the following:
        \begin{itemize}
            \item Let $u$ be any node stored in $\texttt{list}\sbra*{i}$, and find any character $c$ such that $\next\rbra*{u, c} = \nullptr$. Return $c \str\rbra*{u} c$.
        \end{itemize}
    \end{enumerate}
    
    Finally, we have the following result.
    
    \begin{corollary} \label{corollary:sap}
         Offline range queries of type \textsc{Shortest Absent Palindrome} can be solved with time 
         complexity $O\rbra*{n\sqrt{q}\log\rbra*{\sigma} + q \log\rbra*{n} / \log\rbra*{\sigma} }$.
    \end{corollary}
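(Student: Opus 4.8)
The plan is to build directly on the offline Mo's-algorithm sweep already set up above: after sorting the $q$ queries and moving a single double-ended eertree $\mathcal{T}$ through the substrings $s\substr{l_i}{r_i}$, the sweep performs $O\rbra*{n\sqrt{q}}$ deque operations at total cost $O\rbra*{n\sqrt{q}\log\rbra*{\sigma}}$ by Theorem \ref{thm:eertree-surface-recording}. It then remains to (i) maintain the auxiliary lists $\texttt{list}\sbra*{i}$ in step with every deque operation inside this budget, and (ii) answer each query in $O\rbra*{\log\rbra*{n}/\log\rbra*{\sigma}}$ time once $\mathcal{T}$ holds the eertree of $s\substr{l_i}{r_i}$; summing the two contributions gives the claim.

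For correctness I would first pin down the structure of a shortest absent palindrome $t$. If $\abs*{t}=1$ it is a character missing from $s$, i.e. $\next\rbra*{\odd, c}=\nullptr$; if $\abs*{t}=2$ then, since no length-$1$ palindrome is absent, it is $cc$ for a present character $c$, i.e. $\next\rbra*{\even, c}=\nullptr$. Otherwise $\abs*{t}\geq 3$ and $t=c\,t'\,c$ with $t'=t\substr{2}{\abs*{t}-1}$ a palindrome; minimality of $\abs*{t}$ forces every shorter palindrome, in particular $t'$, to occur in the current string, so $\node\rbra*{t'}$ is a genuine node of $\mathcal{T}$, while absence of $t$ gives $\next\rbra*{\node\rbra*{t'}, c}=\nullptr$. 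Hence absent palindromes of length $\geq 3$ correspond exactly to pairs $\rbra*{u,c}$ with $\next\rbra*{u,c}=\nullptr$, of length $\len\rbra*{u}+2$, and the shortest arises at the smallest $\len\rbra*{u}$. Since $\texttt{list}\sbra*{i}$ stores precisely the length-$i$ nodes carrying a missing outgoing edge, scanning the three steps in order of increasing length returns the shortest absent palindrome; Lemma \ref{lemma:length-sap} bounds its length by $\ceil*{2\log_{\sigma}\rbra*{n}}+1$, so the core node $u$ has $\len\rbra*{u}\leq\ceil*{2\log_{\sigma}\rbra*{n}}-1$ and truncating the lists to indices $i\leq\ceil*{2\log_{\sigma}\rbra*{n}}+1$ loses nothing.

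For the complexity I would attach to each node an out-degree counter, so that the maintenance test ``$\next\rbra*{\prev\rbra*{u},c}\neq\nullptr$ for every $c$'' reduces to comparing this counter with $\sigma$; then each insertion or deletion of a node touches $O\rbra*{1}$ list entries, costing $O\rbra*{1}$ per deque operation and $O\rbra*{n\sqrt{q}}$ overall, absorbed into $O\rbra*{n\sqrt{q}\log\rbra*{\sigma}}$. A single query runs the loop over $i$ for $O\rbra*{\log_{\sigma}\rbra*{n}}=O\rbra*{\log\rbra*{n}/\log\rbra*{\sigma}}$ iterations of $O\rbra*{1}$ work, and producing the witness character $c$ (missing from $\odd$, from $\even$, or from the chosen $u$) is a single gap search in the balanced/persistent BST of outgoing edges, costing $O\rbra*{\log\rbra*{\sigma}}$ via the representation of Lemma \ref{lemma:dlink}. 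Thus each query costs $O\rbra*{\log\rbra*{n}/\log\rbra*{\sigma}+\log\rbra*{\sigma}}$, and over $q$ queries $O\rbra*{q\log\rbra*{n}/\log\rbra*{\sigma}+q\log\rbra*{\sigma}}$; since $q\leq n^2$ we have $q\log\rbra*{\sigma}\leq n\sqrt{q}\log\rbra*{\sigma}$, absorbing the second term. Adding the sweep cost yields $O\rbra*{n\sqrt{q}\log\rbra*{\sigma}+q\log\rbra*{n}/\log\rbra*{\sigma}}$.

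The main obstacle I anticipate is not the arithmetic but the correctness bookkeeping at the short end: verifying that the length-$1$ and length-$2$ answers are exactly the missing children of $\odd$ and $\even$ (so they are properly excluded from the length-$\geq 1$ lists and handled by the separate steps $1$ and $2$), and confirming that the bijection between absent palindromes of length $\geq 3$ and nodes with a missing edge is monotone in length, so that ``smallest nonempty $\texttt{list}\sbra*{i}$'' genuinely delivers a global minimum. A secondary point needing care is that every $\sigma$-dependent cost—the out-degree test during maintenance and the gap search at query time—is either $O\rbra*{1}$ or charged against the existing $n\sqrt{q}\log\rbra*{\sigma}$ term, so that it does not inflate the $q\log\rbra*{n}/\log\rbra*{\sigma}$ summand.
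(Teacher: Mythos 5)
Your proposal is correct and follows essentially the same route as the paper: the same Mo's-algorithm sweep, the same length-indexed lists of nodes with a missing outgoing edge truncated via Lemma \ref{lemma:length-sap}, and the same three-step query (missing character at $\odd$, missing $cc$ at $\even$, then scan the lists in increasing length). You additionally spell out implementation details the paper leaves implicit --- the out-degree counter making the maintenance test $O\rbra*{1}$, the $O\rbra*{\log\rbra*{\sigma}}$ gap search for the witness character, and the absorption $q\log\rbra*{\sigma} \leq n\sqrt{q}\log\rbra*{\sigma}$ --- which strengthens rather than changes the argument.
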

    
    \begin{remark}
        The algorithms mentioned in Corollary \ref{corollary:count-pal}, \ref{corollary:longest-pal}, \ref{corollary:sups} and \ref{corollary:sap} require time and space complexity roughly $O\rbra*{n \sqrt{q} \log\rbra*{\sigma}}$ because each operation introduces incremental $O\rbra*{\log\rbra*{\sigma}}$ space due to the online construction of eertrees by Theorem \ref{thm:eertree-surface-recording}. Indeed, we have an alternative implementation with $O\rbra*{n\sqrt{q}\sigma}$ time and $O\rbra*{n\sigma}$ space. This is achieved by using a copy-based algorithm to store $\directlink\rbra*{v, c}$ for every character $c$, which requires $O\rbra*{\sigma}$ time and space per operation. Since the length of substrings of $s$ is always $\leq \abs*{s} = n$, the space used in the copy-based algorithm is $O\rbra*{n \sigma}$ independent of the parameter $q$. When $\sigma$ is small enough such that $q = \omega\rbra*{\sigma^2/\log^2\rbra*{\sigma}}$, the space used in the copy-based algorithm can be much smaller than that in the $O\rbra*{n \sqrt{q} \log\rbra*{\sigma}}$-time algorithm, with the same time complexity up to a small factor. In particular, when $\sigma = O\rbra*{1}$ is a constant justified in Section \ref{sec:main-results}, we can obtain an algorithm with $O\rbra*{n\sqrt{q}}$ time and $O\rbra*{n}$ space. 
    \end{remark}
    
    \subsubsection{Online queries}

    To handle online queries, we first study how to make our double-ended eertree fully persistent. A persistent data structure is a collection of data structures (of the same type), called versions, ordered by the time they are created. A data structure is called fully persistent if every version of it can be both accessed and modified. 
    
    \begin{theorem} [Persistent double-ended eertrees] \label{thm:persistent-double-ended-eertree}
        Fully persistent double-ended eertrees can be implemented with worst-case time and space complexity per operation $O\rbra*{\log\rbra*{n}+\log\rbra*{\sigma}}$, where $\sigma$ is the size of the alphabet and $n$ is the length of the string in the current version. More precisely,
        \begin{itemize}
            \item A \textup{\texttt{push\_back}} or \textup{\texttt{push\_front}} operation requires worst-case time and space complexity $O\rbra*{\log\rbra*{n}+\log\rbra*{\sigma}}$.
            \item A \textup{\texttt{pop\_back}} or \textup{\texttt{pop\_front}} operation requires worst-case time and space complexity $O\rbra*{\log\rbra*{n}}$.
        \end{itemize}
    \end{theorem}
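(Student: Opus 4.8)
The plan is to convert the non-persistent surface-recording implementation of Theorem~\ref{thm:eertree-surface-recording} into a fully persistent one by replacing each of the mutable structures it touches with a confluently persistent analog. The non-persistent algorithm (Algorithms~\ref{algo:surface:push_back}--\ref{algo:surface:pop_front}) maintains exactly three kinds of state: the eertree itself (nodes with $\len$, $\link$, $\linkcnt$ and the direct-link search trees of Lemma~\ref{lemma:dlink}), the arrays $\texttt{presurf}$ and $\texttt{sufsurf}$ indexed by global positions, and the per-node counter $\texttt{cnt}$. I would make each version point to an immutable snapshot of all three, so that modifying an old version never disturbs any other version.

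\textbf{Step 1 (persistent eertree skeleton).} First I would observe that the direct-link structure of Lemma~\ref{lemma:dlink} is \emph{already} stored in persistent balanced binary search trees, so inserting a new node costs $O\rbra*{\log\rbra*{\sigma}}$ time and space while leaving earlier trees intact; this is the source of the $\log\rbra*{\sigma}$ term. The remaining scalar node fields ($\len$, $\link$, $\linkcnt$) are written at most $O\rbra*{1}$ times per push/pop, so path copying (or a fat-node scheme) adds only $O\rbra*{1}$ persistent writes per operation, each of cost $O\rbra*{\log\rbra*{n}}$ if routed through a persistent map keyed by node identity.

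\textbf{Step 2 (persistent position arrays and counters).} The key observation is that $\texttt{presurf}$, $\texttt{sufsurf}$ and $\texttt{cnt}$ are each modified at only $O\rbra*{1}$ indices per deque operation, as is evident from inspecting the assignments in the four algorithms. Hence I would store each of them in a fully persistent balanced search tree (e.g.\ a persistent treap or persistent red-black tree), keyed by global index for the two surface arrays and by node for $\texttt{cnt}$. A persistent point update costs $O\rbra*{\log\rbra*{n}}$ time and space, and since each operation performs $O\rbra*{1}$ such updates, the per-operation overhead is $O\rbra*{\log\rbra*{n}}$. Crucially, the global index system of Section~\ref{sec:global-index} means no array-wide shift is ever needed, so there genuinely are only $O\rbra*{1}$ index changes to persist.

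\textbf{Step 3 (correctness and accounting).} With these substitutions the \emph{logic} of every update rule is untouched: the recurrences of Lemmas~\ref{lemma:surface-recording-push-back}--\ref{lemma:surface-recording-pop-front} and the counter rule of Lemma~\ref{lemma:cnt-update} still read the same $O\rbra*{1}$ entries and write the same $O\rbra*{1}$ entries, now into persistent structures. Summing the contributions gives $O\rbra*{\log\rbra*{\sigma}}$ from the direct links plus $O\rbra*{\log\rbra*{n}}$ from the persistent maps, i.e.\ $O\rbra*{\log\rbra*{n}+\log\rbra*{\sigma}}$ per push and $O\rbra*{\log\rbra*{n}}$ per pop (pops create no new eertree node, so they incur no $\log\rbra*{\sigma}$ term). \emph{The main obstacle} I anticipate is node deletion under full persistence: a $\texttt{pop}$ may delete a node via Lemma~\ref{lemma:unique-by-cnt}, yet an older version must still see that node. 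This is resolved by never physically deleting anything---deletion only means the new version's $V$ no longer references the node---so I would phrase deletion as producing a new version in which the node is unreachable, which persistent maps handle automatically. The remaining subtlety is ensuring that reading a field of a node in a given version retrieves that version's value rather than a later overwrite; routing all mutable node fields through the persistent per-version map (rather than storing them in the node record itself) settles this, at the stated $O\rbra*{\log\rbra*{n}}$ cost.
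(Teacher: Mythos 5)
Your proposal is correct and follows essentially the same route as the paper: both take the surface-recording algorithms as the base, make $\data$, $\texttt{presurf}$, $\texttt{sufsurf}$, $\texttt{cnt}$ and the node set persistent via persistent balanced search trees at $O\rbra*{\log\rbra*{n}}$ per point update (relying on the fact that each operation touches only $O\rbra*{1}$ entries), and obtain the $O\rbra*{\log\rbra*{\sigma}}$ term from the already-persistent direct links of Lemma~\ref{lemma:dlink}, with pops avoiding that term since they create no new node. Your treatment of deletion as unreachability in the new version's map is the same mechanism as the paper's persistent $\texttt{nodes}$ set, just stated more explicitly.
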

    \begin{proof}
        The implementation of persistent double-ended eertree is based on Algorithm \ref{algo:surface:push_back} to \ref{algo:surface:pop_front}, wherein $\data$, $\texttt{presurf}$ and $\texttt{sufsurf}$ are easy to make persistent in $O\rbra*{\log\rbra*{n}}$ time and space per operation by binary search trees. We only need to further consider how to add and delete nodes from eertrees persistently. To this end, we maintain a persistent set $\texttt{nodes}$ (by, for example, binary search trees) to contain all nodes in the eertree of the current version. 
        
        \begin{itemize}
            \item 
            To add a node $v$, we need to
            \begin{itemize}
                \item add $v$ into $\texttt{nodes}$ with time and space complexity $O\rbra*{\log\rbra*{n}}$,
                \item create $\directlink\rbra*{v, \cdot}$ from $\directlink\rbra*{\prev\rbra*{v}, \cdot}$ with time and space complexity $O\rbra*{\log\rbra*{\sigma}}$ by Lemma \ref{lemma:dlink},
                \item maintain $\linkcnt\rbra*{s, \link\rbra*{v}}$ and $\texttt{cnt}\sbra*{v}$ with time and space complexity $O\rbra*{\log\rbra*{n}}$.
            \end{itemize}
            Thus adding a node requires $O\rbra*{\log\rbra*{n}+\log\rbra*{\sigma}}$ time and space. 
            
            \item To delete a node $v$, we need to
            \begin{itemize}
                \item maintain $\linkcnt\rbra*{s, \link\rbra*{v}}$ and $\texttt{cnt}\sbra*{v}$ with time and space complexity $O\rbra*{\log\rbra*{n}}$,
                \item delete $v$ from $\texttt{nodes}$ with time and space complexity $O\rbra*{\log\rbra*{n}}$.
            \end{itemize}
            Thus deleting a node requires $O\rbra*{\log\rbra*{n}}$ time and space.
        \end{itemize}
        
        Therefore, double-ended eertrees can be implemented fully persistently with worst-case time and space complexity per operation $O\rbra*{\log\rbra*{n}+\log\rbra*{\sigma}}$.
    \end{proof}
    
    Before holding each type of query, we first describe the main idea on how to prepare and store necessary information in order to obtain a time-space trade-off. We partition the string $s$ into blocks, each of size $B$. Specifically, for every $1 \leq i \leq n/B$, the start position of the $i$-th block is $\ell_i = B\rbra*{i-1}+1$. Now for each $1 \leq i \leq n/B$, we are going to store the double-ended eertree of $s\substr{\ell_i}{r}$ for every $\ell_i \leq r \leq n$. There are $O\rbra*{n^2/B}$ double-ended eertrees to be stored. To achieve this, we build these double-ended eertrees as follows.
    \begin{itemize}
        \item For every $1 \leq i \leq n/B$, let $\mathcal{T}$ be the persistent double-ended eertree of the empty string.
        \begin{itemize}
            \item For every $\ell_i \leq r \leq n$ in this order,
            \begin{enumerate}
                \item perform $\texttt{push\_back}\rbra*{s\sbra*{r}}$ on $\mathcal{T}$,
                \item store (the pointer to) $\mathcal{T}$ as the double-ended eertree of $s\substr{\ell_i}{r}$. 
            \end{enumerate}
        \end{itemize}
    \end{itemize}
    By Theorem \ref{thm:persistent-double-ended-eertree}, it is clear that it takes $O\rbra*{n^2\rbra*{\log\rbra*{n}+\log\rbra*{\sigma}}/B}$ time and space to prepare the $O\rbra*{n^2/B}$ (persistent) double-ended eertrees. 
    
    To answer a query on substring $s\substr{l}{r}$, if $r-l+1 \leq B$, then build the double-ended eertree of $s\substr{l}{r}$ directly; otherwise, let $\mathcal{T}$ be the persistent double-ended eertree of $s\substr{B\floor*{\rbra*{l-1}/B}+1}{r}$, then perform $\texttt{push\_front}$ with characters $s\sbra*{B\floor*{\rbra*{l-1}/B}}, \dots, s\sbra*{l}$ in this order. It is clear that it takes $O\rbra*{B\rbra*{\log\rbra*{n}+\log\rbra*{\sigma}}}$ time and space to prepare the persistent double-ended eertree of $s\substr{l}{r}$. If the number $q$ of queries is known in advance, it is optimal to set $B = \floor*{n/\sqrt{q}}$, and then the time and space complexity becomes $O\rbra*{n\sqrt{q} \rbra*{\log\rbra*{n}+\log\rbra*{\sigma}} }$. After considering each type of query with necessary persistent auxiliary data, we have the following results.
    
    \begin{corollary} \label{corollary:online}
        Online range queries of type \textsc{Counting Distinct Palindromic Substrings}, \textsc{Longest Palindromic Substring} and \textsc{Shortest Unique Palindromic Substring} can be solved with time
        complexity $O\rbra*{n\sqrt{q}\rbra*{ \log\rbra*{n} + \log\rbra*{\sigma}} }$, and those of type \textsc{Shortest Absent Palindrome} can be solved with time complexity $O\rbra*{n\sqrt{q}\rbra*{ \log\rbra*{n} + \log\rbra*{\sigma}} + q \log\rbra*{n} \log\rbra*{\log\rbra*{n}} / \log\rbra*{\sigma} }$, if the number $q$ of queries is known in advance.
    \end{corollary}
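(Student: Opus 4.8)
The plan is to reuse the block-decomposition preprocessing set up just above. For each of the $O\rbra*{n/B}$ block starts $\ell_i = B\rbra*{i-1}+1$ we have stored, via \texttt{push\_back}, the chain of persistent double-ended eertrees of $s\substr{\ell_i}{r}$ for all $\ell_i \leq r \leq n$; a query $\rbra*{l, r}$ is then answered by branching off the stored eertree whose left endpoint is the block boundary nearest $l$ and applying $O\rbra*{B}$ deque operations at the front to reach the exact left endpoint $l$ (or building $s\substr{l}{r}$ from scratch when $r-l+1 \leq B$). By Theorem \ref{thm:persistent-double-ended-eertree} each operation costs $O\rbra*{\log\rbra*{n}+\log\rbra*{\sigma}}$ time and space, so preprocessing costs $O\rbra*{\rbra*{n^2/B}\rbra*{\log\rbra*{n}+\log\rbra*{\sigma}}}$ and the $q$ queries cost $O\rbra*{qB\rbra*{\log\rbra*{n}+\log\rbra*{\sigma}}}$; since $q$ is known in advance I would set $B = \floor*{n/\sqrt{q}}$ to balance the two and obtain the common term $O\rbra*{n\sqrt{q}\rbra*{\log\rbra*{n}+\log\rbra*{\sigma}}}$. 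What remains is to attach to every version the persistent auxiliary data that lets the branched-off version answer each query type.

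For the first three problems I would persist exactly the bookkeeping used in their offline proofs, paying only the persistence overhead. For \textsc{Counting Distinct Palindromic Substrings} it suffices to record the node count with each version, which is $O\rbra*{1}$ and trivially persistent. For \textsc{Longest Palindromic Substring} I would replace the per-length linked lists by a persistent balanced search tree keyed by length holding the count of eertree nodes of each length; an insertion or deletion is an $O\rbra*{\log\rbra*{n}}$ update and the answer is the largest key with positive count, read off in $O\rbra*{\log\rbra*{n}}$. For \textsc{Shortest Unique Palindromic Substring} the MUPS test is $O\rbra*{1}$ given the node by Lemma \ref{lemma:unique-by-cnt}, so I would keep the MUPS set in a persistent search tree keyed by length and return a minimum-length member in $O\rbra*{\log\rbra*{n}}$. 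In all three cases the auxiliary overhead is $O\rbra*{\log\rbra*{n}}$ per operation, absorbed into the $O\rbra*{\log\rbra*{n}+\log\rbra*{\sigma}}$ budget, so the total stays $O\rbra*{n\sqrt{q}\rbra*{\log\rbra*{n}+\log\rbra*{\sigma}}}$.

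The delicate case is \textsc{Shortest Absent Palindrome}. By Lemma \ref{lemma:length-sap} only lengths up to $\ceil*{2\log_{\sigma}\rbra*{n}}+1 = O\rbra*{\log\rbra*{n}/\log\rbra*{\sigma}}$ can matter, so I would track, indexed by length in that bounded range, one witness node per length that still has a missing outgoing edge, exactly as in the offline construction and with $O\rbra*{1}$ structural updates per eertree edit. The obstacle is that an ordinary doubly linked list is not readily made persistent; I would instead hold these witness slots in a persistent array supporting $O\rbra*{\log\rbra*{\log\rbra*{n}}}$ access and update, so the structural maintenance still folds into the eertree budget. Answering a query then performs the two $O\rbra*{\log\rbra*{\sigma}}$ special checks against $\odd$ and $\even$ and scans lengths $1, 2, \dots, O\rbra*{\log\rbra*{n}/\log\rbra*{\sigma}}$ for the first occupied slot at $O\rbra*{\log\rbra*{\log\rbra*{n}}}$ per probe, contributing the extra $O\rbra*{q\log\rbra*{n}\log\rbra*{\log\rbra*{n}}/\log\rbra*{\sigma}}$ term. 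The main difficulty I anticipate is precisely this persistence bookkeeping: confirming that branching a query off a shared base version leaves the auxiliary data of all sibling versions intact, and that the bounded-range persistent array truly yields the $O\rbra*{\log\rbra*{\log\rbra*{n}}}$ probe cost, after which the two stated bounds follow by summing the preprocessing, left-extension, and answering costs.
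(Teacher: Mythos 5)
Your proposal is correct and follows essentially the same route as the paper: the paper's proof likewise combines the block decomposition with persistent double-ended eertrees (branching off the nearest block boundary via $O\rbra*{B}$ \texttt{push\_front} operations with $B = \floor*{n/\sqrt{q}}$, or direct construction when $r-l+1 \leq B$) and equips each query type with persistent versions of the offline auxiliary structures, attributing the $\log\rbra*{\log\rbra*{n}}$ factor for \textsc{Shortest Absent Palindrome} to persistent maintenance of the length-indexed lists of size $O\rbra*{\log\rbra*{n}}$, exactly as you describe. The persistence bookkeeping you fill in (persistent search trees keyed by length for the longest-palindrome counts and the MUPS set, and a persistent bounded-range array replacing the doubly linked lists) is left implicit in the paper's one-line proof but is entirely consistent with it.
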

    
    \begin{proof}
        It is straightforward to obtain these results from Corollary \ref{corollary:count-pal}, \ref{corollary:longest-pal}, \ref{corollary:sups} and \ref{corollary:sap} equipped with persistent data structures such as persistent binary search trees and persistent double-ended eertrees. Here, we especially mention that the doubly-logarithmic factor of $n$ is introduced in \textsc{Shortest Absent Palindrome} due to persistent data structures used for maintaining $\texttt{list}\sbra*{i}$ of size $O\rbra*{\log\rbra*{n}}$ in Corollary \ref{corollary:sap}.
    \end{proof}
    
    \subsection{Enumerating rich strings with a given word}
    
    Palindromic rich strings have been extensively studied \cite{GJWZ09,RR09,BDLGZ09,Ves14}. Recently, the number of rich strings of length $n$ was studied \cite{RS18,GSS16,Ruk17}. Especially, the number of binary rich strings of length $n$ (cf. sequence A216264 in OEIS \cite{Sha13}) was efficiently computed by eertree in \cite{RS18}, and they thus deduced an $O\rbra*{{1.605}^n}$ upper bound (as noted in \cite{Ruk17}). Shortly after, a lower bound $\Omega\rbra*{37.6^{\sqrt{n}}}$ was given in \cite{GSS16}.
    
    We consider a computational task to enumerate rich strings with a given word, with a formal description in Problem \ref{prob:enum-rich}.
    
    \begin{problem} [\textsc{Counting Rich Strings with Given Word}] \label{prob:enum-rich}
        Given a string $t$ of length $n$ and a number $k$, count the number of palindromic rich strings $s$ of length $n + k$ such that $t$ is a substring of $s$. 
    \end{problem}
    
    Let $\sigma$ be the size of the alphabet. There are roughly $O\rbra*{k\sigma^k}$ strings of length $n + k$ with the given substring $t$ of length $n$. A simple solution to Problem \ref{prob:enum-rich} is to enumerate each of the $O\rbra*{k\sigma^k}$ candidates and check its richness by \cite{GPR10} in $O\rbra*{n+k}$ time, thereby with total time complexity at least $O\rbra*{\rbra*{n+k}k\sigma^k}$, where $n$ and $\sigma^k$ are multiplicative in the complexity. Using our double-ended eertree, we are able to improve the time complexity such that $n$ and $\sigma^k$ are additive. 
    
    \begin{corollary} \label{corollary:count-rich-string-with-given-word}
        There is an algorithm for \textsc{Counting Rich Strings with Given Word} with time complexity $O\rbra*{n\sigma + k\sigma^k}$, where $\sigma$ is the size of the alphabet.
    \end{corollary}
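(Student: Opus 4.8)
The plan is to reduce \textsc{Counting Rich Strings with Given Word} to an enumeration that starts from the double-ended eertree of $t$ and grows the string on both ends, using two structural facts about richness. First, recall that a string $s$ is rich exactly when the number of its distinct palindromic substrings is $\abs*{s}+1$, which by \cite{DJP01} is the maximum possible; since this number equals the number of nodes of $\eertree\rbra*{s}$ up to an additive constant, richness is equivalent to the statement that \emph{every} \texttt{push\_back} (resp.\ \texttt{push\_front}) creates a brand new node, i.e.\ the new longest palindromic suffix (resp.\ prefix) did not occur before. This is checkable in $O\rbra*{1}$ per operation once $\sufpal$ / $\prepal$ is known (via the direct links of \cite{RS18}, Proposition~\ref{prop:lsp-by-dlink}) by testing whether the corresponding $\next$ edge is absent. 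Second, rich strings are closed under taking factors \cite{GJWZ09}: any substring of a rich string is rich. Hence, if $t$ is not rich the answer is $0$, and any rich $s$ containing $t$ is obtained from a rich $t$ by repeatedly prepending or appending single characters while preserving richness. First I would build $\eertree\rbra*{t}$ by $n$ \texttt{push\_back} operations and verify that $t$ is rich; using copy-based direct-link arrays this costs $O\rbra*{n\sigma}$.

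Next I would enumerate every rich $s=LtR$ with $\abs*{L}+\abs*{R}=k$ by a nested depth-first search on a double-ended eertree initialized to $\eertree\rbra*{t}$: an outer DFS appends characters (\texttt{push\_back}) to build the right part $R$, and at each outer node an inner DFS prepends characters (\texttt{push\_front}) to build the left part $L$, with \texttt{pop\_back} / \texttt{pop\_front} on backtracking (each $O\rbra*{1}$ by Theorem~\ref{thm:eertree-surface-recording}). A branch is cut the moment richness fails, which is sound precisely because factors of rich strings are rich, so a non-rich partial extension can never become rich. To count each $s$ exactly once I would charge it to the \emph{leftmost} occurrence of $t$: whenever a prepend creates an occurrence of $t$ starting at the current front, the planted occurrence is no longer leftmost for this or any longer left extension, so the whole subtree is pruned. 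Detecting this is clean in the prepend direction, since prepending $c$ to the current string is the same as appending $c$ to its reverse; I would therefore run the Knuth--Morris--Pratt automaton of $t^R$ against the reversed string, which flags exactly the events ``$t$ is now a prefix'' in $O\rbra*{1}$ amortized per character.

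For the running time I would bound the number of nodes of the search tree. For a fixed split $\rbra*{\abs*{L},\abs*{R}}=\rbra*{a,b}$ with $a+b=j$ there are at most $\sigma^{a}\sigma^{b}=\sigma^{j}$ rich strings $LtR$, and there are $j+1$ splits, so the number of rich extensions of total length $j$ is at most $\rbra*{j+1}\sigma^{j}$; summing over $0\le j\le k$ gives $O\rbra*{k\sigma^{k}}$ search nodes, of which only $O\rbra*{k\sigma^{k-1}}$ are internal (those with remaining budget). At each internal node the search tries all $\sigma$ one-character extensions, and each trial is an $O\rbra*{1}$ richness test consisting of an array lookup of the direct link followed by a $\next$-edge test; this is $O\rbra*{k\sigma^{k}}$ in total. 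The only remaining cost is materializing a node's $O\rbra*{\sigma}$ direct-link array when recursing into it; crucially this array is needed only to try children, so it is built only at internal nodes, giving $O\rbra*{\sigma}\cdot O\rbra*{k\sigma^{k-1}}=O\rbra*{k\sigma^{k}}$. Adding the $O\rbra*{n\sigma}$ construction of $\eertree\rbra*{t}$ yields the claimed $O\rbra*{n\sigma+k\sigma^{k}}$.

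The step I expect to be the main obstacle is this last amortized accounting: a naive implementation pays $O\rbra*{\sigma}$ (or $O\rbra*{\log\sigma}$) for every \texttt{push}, which would inflate the bound to $O\rbra*{k\sigma^{k+1}}$, so the argument must confine the $\sigma$-factor of direct-link maintenance to the internal nodes, which are a factor $\sigma$ fewer than the leaves, and must ensure the $\next$ and direct-link arrays needed for the richness test are available along the eertree (not merely along the current search path) as nodes are inserted and deleted by the double-ended operations. A secondary point needing care is the correctness of the leftmost-occurrence deduplication --- that charging each rich $s$ to the split at its leftmost occurrence of $t$ counts it exactly once and that the pruned subtrees contain no string charged elsewhere.
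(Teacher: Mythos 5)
Your proposal is correct and follows the same overall architecture as the paper's proof: a nested depth-first search on a double-ended eertree (outer phase appending to build the right part, inner phase prepending to build the left part), with duplicates removed by charging each string to a canonical occurrence of $t$ enforced by a string-matching automaton. The differences are in the components. (i) The paper places the deduplication on the \emph{appending} side: it enumerates only $xty$ such that $t$ does not occur in $t\substr{2}{\abs*{t}}y$ (canonical occurrence is the rightmost), enforced by the guard $\delta\rbra*{u,c} \neq u_t$ in the Aho--Corasick automaton of $t$; you mirror this to the prepending side (leftmost occurrence, automaton of $t^R$). These are symmetric and both correct. (ii) The paper does \emph{not} prune by richness: it runs the full $O\rbra*{k\sigma^k}$-node search and tests richness only at depth-$k$ leaves by comparing the eertree's node count with $n+k+1$. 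You instead test richness incrementally (``every push must create a new node'') and prune, which requires two extra structural facts --- factor-closure of richness and that a push on either end adds at most one distinct palindrome --- both true, so your pruning is sound, though it is not needed to reach the stated bound. (iii) Your time accounting is genuinely more careful than the paper's: the paper simply asserts that the complexity ``is clear,'' yet with its own $O\rbra*{\log\rbra*{\sigma}}$-per-push implementation (Theorem \ref{thm:eertree-surface-recording}) the DFS costs $O\rbra*{k\sigma^k\log\rbra*{\sigma}}$, and a naive copy-based implementation costs $O\rbra*{k\sigma^{k+1}}$; your observation that the $O\rbra*{\sigma}$-size \texttt{next}/direct-link arrays need only be materialized at internal search nodes, which are a factor $\sigma$ fewer than leaves, is exactly the argument needed to obtain $O\rbra*{n\sigma + k\sigma^k}$ without slack, and it fills a detail the paper leaves implicit.

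One small repair: ``$O\rbra*{1}$ amortized per character'' for KMP does not survive a branching, backtracking DFS --- the amortization over failure links is valid only along a single left-to-right scan, and from one search node you take $\sigma$ different transitions out of the same state. You should instead precompute the full transition table of the automaton of $t^R$ (cost $O\rbra*{n\sigma}$, within your budget), storing the state on the recursion stack for backtracking; this is precisely why the paper uses the Aho--Corasick automaton with explicit transitions $\delta\rbra*{\cdot,\cdot}$ rather than bare failure links.
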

    
    \begin{proof}
        The basic idea is to enumerate all possible characters being added at the front and the back of the string. It is clear that there are $\rbra*{k+1} \sigma^k$ ways to add $k$ characters to both ends of a string. 
        
        Suppose we are given a string $t$ of length $n$ and want to enumerate all strings $s$ of length $n + k$ with $t$ being its substring. To remove duplicate enumerations, we only enumerate strings of the form $xty$ such that $t$ does not occur in $t\substr{2}{\abs*{t}} y$. To this end, we build the Aho-Corasick automaton \cite{AC75} of a single string $t$ \footnote{In fact, the Knuth-Morris-Pratt algorithm \cite{KMP77} for pattern matching can also achieve our goal. The Aho-Corasick automaton we use here is for its efficient state transitions.}. This can be done in $O\rbra*{n\sigma}$ time. Recall that an Aho-Corasick automaton is a trie-like structure with each of its node corresponding to a unique string. Especially, the root corresponds to the empty string. For our purpose, we only need the transitions between its nodes. Specifically, the transition $\delta\rbra*{u, c}$ is defined for every node $u$ and character $c$, which points to the node $v$ of the largest $\len\rbra*{v}$ such that $\str\rbra*{v}$ is a proper suffix of $\str\rbra*{u}$. Here, we follow the notations $\str\rbra*{\cdot}$ and $\len\rbra*{\cdot}$ as for eertrees. 
        
        With the Aho-Corasick automaton of string $t$, we can enumerate every distinct string $s$ with $t$ being its substring, and simultaneously maintain the double-ended eertree of $s$. Our algorithm consists of two parts (see Algorithm \ref{algo:counting-rich-with-given-word}). 
        \begin{enumerate}
            \item The first part is to enumerate all characters added at the back, with $t$ only occurring once in the resulting string (see Algorithm \ref{algo:enum_back}). 
            \item The second part is to enumerate all characters added at the front, and then check whether the resulting string is palindromic rich (see Algorithm \ref{algo:enum_front}). 
        \end{enumerate}
        
        \begin{algorithm}[!htp]
            \caption{An algorithm for \textsc{Counting Rich Strings with Given Word}}
            \label{algo:counting-rich-with-given-word}
            \begin{algorithmic}[1]
            \Require string $t$ of length $n$, and number $k$.
            \Ensure the number of palindromic rich strings $s$ of length $n + k$ such that $t$ is a substring of $s$.
            \State Build the double-ended eertree $\mathcal{T}$ of $t$.
            \State Build the Aho-Corasick automaton of $t$ with transitions $\delta\rbra*{\cdot, \cdot}$.
            \State Let $u_t$ be the node corresponding to $t$ in the Aho-Corasick automaton of $t$. 
            \State $\mathit{ans} \gets 0$.
            \State $\texttt{enum\_back}\rbra*{t, u_t}$.
            \State \Return $\mathit{ans}$.
            \end{algorithmic}
        \end{algorithm}
        
        \begin{algorithm}[!htp]
            \caption{$\texttt{enum\_back}\rbra*{s, u}$}
            \label{algo:enum_back}
            \begin{algorithmic}[1]
            \State $\texttt{enum\_front}\rbra*{s}$.
            \If {$\abs*{s} \geq n+k$}
                \State \Return
            \EndIf
            \For {each character $c$}
                \State $v \gets \delta\rbra*{u, c}$.
                \If {$v \neq u_t$}
                    \State $\texttt{push\_back}\rbra*{c}$ on double-ended tree $\mathcal{T}$.
                    \State $\texttt{enum\_back}\rbra*{sc, v}$.
                    \State $\texttt{pop\_back}\rbra*{c}$ on double-ended tree $\mathcal{T}$.
                \EndIf
            \EndFor
            \end{algorithmic}
        \end{algorithm}
        
        \begin{algorithm}[!htp]
            \caption{$\texttt{enum\_front}\rbra*{s}$}
            \label{algo:enum_front}
            \begin{algorithmic}[1]
            \If {$\abs*{s} = n+k$}
                \State $\mathit{num} \gets$ the number of distinct palindromic substrings in $s$ by the double-ended eertree $\mathcal{T}$.
                \If {$\mathit{num} = n + k + 1$, i.e., $s$ is palindromic rich}
                    \State $\mathit{ans} \gets \mathit{ans} + 1$.
                \EndIf
                \State \Return
            \EndIf
            \For {each character $c$}
                \State $\texttt{push\_front}\rbra*{c}$ on double-ended tree $\mathcal{T}$.
                \State $\texttt{enum\_front}\rbra*{cs}$.
                \State $\texttt{pop\_front}\rbra*{c}$ on double-ended tree $\mathcal{T}$.
            \EndFor
            \end{algorithmic}
        \end{algorithm}
        
        It is clear that the time complexity of our algorithm is $O\rbra*{n\sigma + k\sigma^k}$. In order to prove its correctness, we only need to show that the recursive function correctly enumerates every string with $t$ being it substring exactly once. 
        To see this, we first show that every string of length $n+k$ with substring $t$ will be enumerated at least once. Suppose $s = xty$, where $\abs*{x} + \abs*{y} = k$. If there are multiple representations of $s$ in the form $s = xty$, we choose the one with the shortest $y$. Then, we can see that $t$ is not a substring of $t\substr{2}{\abs*{t}} y$. This implies that $\texttt{enum\_back}\rbra*{s, u}$ in Algorithm \ref{algo:enum_back} will eventually reach the state with $s = ty$. Upon calling $\texttt{enum\_front}\rbra*{ty}$, it is trivial that $xty$ will be enumerated. 
        
        It remains to show that every string of length $n + k$ with substring $t$ will be enumerated at most once. We define a path $t \to ty \to xty$ to represent the recursive procedure, which means that the algorithm starts by calling $\texttt{enum\_back}\rbra*{t}$, then calls $\texttt{enum\_front}\rbra*{ty}$, and finally achieves $xty$ of length $n+k$. It has already been shown in the above that there is at least once such path for every string of length $n+k$ with substring $t$. If there are two different enumerations of the same string $s$, then there are two different paths $t \to ty_1 \to x_1ty_1$ and $t \to ty_2 \to x_2ty_2$ such that $s = x_1 t y_1 = x_2 t y_2$ with $\rbra*{x_1, y_1} \neq \rbra*{x_2, y_2}$. Without loss of generality, we assume that $\abs*{y_1} < \abs*{y_2}$. It can be seen that the path $t \to ty_2 \to x_2ty_2$ is impossible as follows. There is a non-empty string $w$ such that $ty_2 = wty_1$. Before reaching the state that $s = ty_2$ in $\texttt{enum\_back}\rbra*{s, u}$, it must reach the state that $s = wt$ and $u = u_t$. Since $w$ is not empty, there is no way to call $\texttt{enum\_back}\rbra*{wt, u_t}$ because of the guard $v \neq u_t$ in $\texttt{enum\_back}\rbra*{s, u}$ of Algorithm \ref{algo:enum_back}.
    \end{proof}

\end{document}